\def\ps@pprintTitle{%
  \let\@oddhead\@empty
  \let\@evenhead\@empty
  \let\@oddfoot\@empty
  \let\@evenfoot\@oddfoot
}
\tikzstyle{every path}=[thick]
\long\def\Omit#1{}
\def\diss{{\delta}}
\definecolor{darkgreen}{rgb}{0.0, 0.5, 0.0}
\newdefinition{definition}{Definition}
\newtheorem{theorem}{Theorem}
\newtheorem{lemma}[theorem]{Lemma}
\newtheorem{coro}[theorem]{Corollary}
\newtheorem{remark}[theorem]{Remark}
\newproof{proof}{Proof}
\newcolumntype{\expand}{}
\long\@namedef{NC@rewrite@\string\expand}{\expandafter\NC@find}
	\def\problem@arg{#1}%
	\def\problem@framed{framed}%
	\def\problem@lined{lined}%
	\def\problem@doublelined{doublelined}%
	\def\problem@hline{}%
	\def\problem@hline{\hline\hline}%
	\def\problem@hline{\hline}%
	\def\problem@tablelayout{|>{\bfseries}lX|c}%
	\def\problem@title{\multicolumn{2}{|l|}{%
			\raisebox{-\fboxsep}{\textsc{\large #2}}%
	}}%
	\def\problem@tablelayout{>{\bfseries}lXc}%
	\def\problem@title{\multicolumn{2}{l}{%
			\raisebox{-\fboxsep}{\textsc{\large #2}}%
	}}%
	\renewcommand{\arraystretch}{1.2}%
\colorlet{darkgreen}{green!60!black}
\def\real{\hbox{\rm\vrule\kern-1pt R}}
\def\nat{\hbox{\rm\vrule\kern-1pt N}}
\newcommand{\MINMAX}{\textsc{Min-Max Dissatisfaction}}
\newcommand{\MINSUM}{\textsc{Min-Sum  Dissatisfaction}}
\newcommand{\MINMAXD}{\textsc{Min-Max Diss.}}
\newcommand{\MINSUMD}{\textsc{Min-Sum  Diss.}}
\newcommand{\newAD}[1]{\textcolor{black}{#1}}
\newcommand{\newSL}[1]{\textcolor{black}{#1}}
\newcommand{\newwSL}[1]{\textcolor{black}{#1}}
\newcommand{\newMM}[1]{\textcolor{black}{#1}}
\title{Allocation of Indivisible Items\\with Individual Preference Graphs}
 \newcommand{\pictureoutstar}{
\begin{tikzpicture}[yscale=0.75]
\def\x{1}
\begin{scope}[scale=0.7]
\node (G1) at (-2,0.5) {$G_1$};
\node[fill=white,draw,circle,inner sep=0pt,minimum size=2mm,label=above:{$\mathstrut a$}] (va)  at (0,\x)   {};
\node[fill=white,draw,circle,inner sep=0pt,minimum size=2mm,label=above:{$\mathstrut  b$}] (vb) at (3,\x)   {};
\node[fill=white,draw,circle,inner sep=0pt,minimum size=2mm,label=above:{$\mathstrut  c$}] (vc) at (6,\x)   {};
\node[fill=white,draw,circle,inner sep=0pt,minimum size=2mm,label=below:{$\mathstrut d$}] (vd) at (2,0)   {};
\node[fill=white,draw,circle,inner sep=0pt,minimum size=2mm,label=below:{$\mathstrut e$}] (ve) at (-1,0)   {};
\node[fill=white,draw,circle,inner sep=0pt,minimum size=2mm,label=below:{$\mathstrut f$}] (vf) at (5,0)   {};
\node[fill=white,draw,circle,inner sep=0pt,minimum size=2mm,label=below:{$\mathstrut g$}] (vg) at (3,0)   {};
\node[fill=white,draw,circle,inner sep=0pt,minimum size=2mm,label=below:{$\mathstrut h$}] (vh) at (4,0)   {};
\node[fill=white,draw,circle,inner sep=0pt,minimum size=2mm,label=below:{$\mathstrut i$}] (vi) at (6,0)   {};
\node[fill=white,draw,circle,inner sep=0pt,minimum size=2mm,label=above:{$\mathstrut j$}] (vj) at (9,\x)   {};
\node[fill=white,draw,circle,inner sep=0pt,minimum size=2mm,label=above:{$\mathstrut k$}] (vk) at (11,\x)   {};
\node[fill=white,draw,circle,inner sep=0pt,minimum size=2mm,label=above:{$\mathstrut l$}] (vl) at (13,\x)   {};
\node[fill=white,draw,circle,inner sep=0pt,minimum size=2mm,label=below:{$\mathstrut m$}] (vm) at (9,0)   {};
\node[fill=white,draw,circle,inner sep=0pt,minimum size=2mm,label=below:{$\mathstrut n$}] (vn) at (10,0)   {};
\node[fill=white,draw,circle,inner sep=0pt,minimum size=2mm,label=below:{$\mathstrut o$}] (vo) at (0,0)   {};
\node[fill=white,draw,circle,inner sep=0pt,minimum size=2mm,label=below:{$\mathstrut p$}] (vp) at (11,0)   {};
\node[fill=white,draw,circle,inner sep=0pt,minimum size=2mm,label=below:{$\mathstrut q$}] (vq) at (13,0)   {};
\node[fill=white,draw,circle,inner sep=0pt,minimum size=2mm,label=above:{$\mathstrut r$}] (vr) at (14,\x)   {};
\node[fill=white,draw,circle,inner sep=0pt,minimum size=2mm,label=below:{$\mathstrut u$}] (vu) at (7,0)   {};
\draw[->] (va)--(ve);\draw[->] (va)--(vo);
\draw[->] (vb)--(vd);\draw[->] (vb)--(vg);\draw[->] (vb)--(vh);
\draw[->] (vc)--(vf);\draw[->] (vc)--(vi);\draw[->] (vc)--(vu);
\draw[->] (vj)--(vm);
\draw[->] (vk)--(vn);\draw[->] (vk)--(vp);
\draw[->] (vl)--(vq);
\end{scope}

\begin{scope}[scale=0.7,yshift=-3.5cm]
\node (G2) at (-2,0.5) {$G_2$};
\node[fill=white,draw,circle,inner sep=0pt,minimum size=2mm,label=above:{$\mathstrut a$}] (ua)  at (0,\x)   {};
\node[fill=white,draw,circle,inner sep=0pt,minimum size=2mm,label=above:{$\mathstrut b$}] (ub) at (3,\x)   {};
\node[fill=white,draw,circle,inner sep=0pt,minimum size=2mm,label=above:{$\mathstrut c$}] (uc) at (6,\x)   {};
\node[fill=white,draw,circle,inner sep=0pt,minimum size=2mm,label=below:{$\mathstrut d$}] (ud) at (0,0)   {};
\node[fill=white,draw,circle,inner sep=0pt,minimum size=2mm,label=below:{$\mathstrut e$}] (ue) at (-1,0)   {};
\node[fill=white,draw,circle,inner sep=0pt,minimum size=2mm,label=below:{$\mathstrut f$}] (uf) at (3,0)   {};
\node[fill=white,draw,circle,inner sep=0pt,minimum size=2mm,label=below:{$\mathstrut g$}] (ug) at (6,0)   {};
\node[fill=white,draw,circle,inner sep=0pt,minimum size=2mm,label=below:{$\mathstrut h$}] (uh) at (1,0)   {};
\node[fill=white,draw,circle,inner sep=0pt,minimum size=2mm,label=below:{$\mathstrut i$}] (ui) at (7,0)   {};
\node[fill=white,draw,circle,inner sep=0pt,minimum size=2mm,label=below:{$\mathstrut j$}] (uj) at (8,0)   {};
\node[fill=white,draw,circle,inner sep=0pt,minimum size=2mm,label=below:{$\mathstrut k$}] (uk) at (9,0)   {};
\node[fill=white,draw,circle,inner sep=0pt,minimum size=2mm,label=below:{$\mathstrut l$}] (ul) at (5,0)   {};
\node[fill=white,draw,circle,inner sep=0pt,minimum size=2mm,label=above:{$\mathstrut m$}] (um) at (9,\x)   {};
\node[fill=white,draw,circle,inner sep=0pt,minimum size=2mm,label=above:{$\mathstrut n$}] (un) at (11,\x)   {};
\node[fill=white,draw,circle,inner sep=0pt,minimum size=2mm,label=below:{$\mathstrut o$}] (uo) at (11,0)   {};
\node[fill=white,draw,circle,inner sep=0pt,minimum size=2mm,label=below:{$\mathstrut s$}] (us) at (12,0)   {};
\node[fill=white,draw,circle,inner sep=0pt,minimum size=2mm,label=above:{$\mathstrut t$}] (ut) at (13,\x)   {};
\draw[->] (ua)--(ue);\draw[->] (ua)--(ud);\draw[->] (ua)--(uh);
\draw[->] (ub)--(uf);
\draw[->] (uc)--(ul);\draw[->] (uc)--(ug);\draw[->] (uc)--(ui);
\draw[->] (um)--(uj);\draw[->] (um)--(uk);
\draw[->] (un)--(uo);\draw[->] (un)--(us);
\end{scope}
\begin{scope}[yshift=-5cm]

\begin{scope}[scale=0.7]
\node (G) at (-2,-1) {$G'$};
\node[fill=white,draw,circle,inner sep=0pt,minimum size=2mm,label=above:{$2$}] (v1)  at (0,\x)   {};
\node[fill=white,draw,circle,inner sep=0pt,minimum size=2mm,label=above:{$4$}] (v2) at (3,\x)   {};
\node[fill=white,draw,circle,inner sep=0pt,minimum size=2mm,label=above:{$3$}] (v3) at (6,\x)   {};
\node[fill=white,draw,circle,inner sep=0pt,minimum size=2mm,label=below:{$1$}] (v4) at (2,0)   {};
\node[fill=white,draw,circle,inner sep=0pt,minimum size=2mm,label=below:{$1$}] (v5) at (-1,0)   {};
\node[fill=white,draw,circle,inner sep=0pt,minimum size=2mm,label=below:{$1$}] (v6) at (5,0)   {};
\node[fill=white,draw,circle,inner sep=0pt,minimum size=2mm,label=below:{$1$}] (v7) at (3,0)   {};
\node[fill=white,draw,circle,inner sep=0pt,minimum size=2mm,label=below:{$1$}] (v8) at (4,0)   {};
\node[fill=white,draw,circle,inner sep=0pt,minimum size=2mm,label=below:{$1$}] (v9) at (6,0)   {};
\node[fill=white,draw, inner sep=0pt,minimum size=2mm,opacity=0.2,label={[opacity=0.2]above:{}}] (v10) at (9,\x)   {};
\node[fill=white,draw, inner sep=0pt,minimum size=2mm,opacity=0.2,label={[opacity=0.2]above:{}}] (v11) at (11,\x)   {};
\node[fill=white,draw, inner sep=0pt,minimum size=2mm,opacity=0.2,label={[opacity=0.2]above:{}}] (v12) at (13,\x)   {};
\node[fill=white,draw,circle,inner sep=0pt,minimum size=2mm,opacity=0.2,label={[opacity=0.2]below:{}}] (v13) at (9,0)   {};
\node[fill=white,draw,circle,inner sep=0pt,minimum size=2mm,opacity=0.2,label={[opacity=0.2]below:{}}] (v14) at (10,0)   {};
\node[fill=white,draw, inner sep=0pt,minimum size=2mm,opacity=0.2,label=below:{}] (v15) at (0,0)   {};
\node[fill=white,draw,inner sep=0pt,minimum size=2mm,opacity=0.2,label={[opacity=0.2]below:{}}] (v16) at (11,0)   {};
\node[fill=white,draw,inner sep=0pt,minimum size=2mm,opacity=0.2,label={[opacity=0.2]below:{}}] (v17) at (13,0)   {};
\node[fill=white,draw, inner sep=0pt,minimum size=2mm,opacity=0.2,label={[opacity=0.2]above:{}}] (v18) at (14,\x)   {};
\node[fill=white,draw,inner sep=0pt,minimum size=2mm,opacity=0.2,label={[opacity=0.2]below:{}}] (vu) at (7,0)   {};
\draw (v1)--(v5);\draw[opacity=0.2] (v1)--(v15);
\draw (v2)--(v4);\draw (v2)--(v7);\draw (v2)--(v8);
\draw (v3)--(v6);\draw (v3)--(v9);\draw[opacity=0.2] (v3)--(vu);
\draw[opacity=0.2] (v10)--(v13);
\draw[opacity=0.2] (v11)--(v14);
\draw[opacity=0.2] (v11)--(v16);
\draw[opacity=0.2] (v12)--(v17);
\end{scope}

\begin{scope}[scale=0.7,yshift=-3cm]
\node[fill=white,draw,circle,inner sep=0pt,minimum size=2mm,label=above:{$4$}] (u1)  at (0,\x)   {};
\node[fill=white,draw,circle,inner sep=0pt,minimum size=2mm,label=above:{$2$}] (u2) at (3,\x)   {};
\node[fill=white,draw,circle,inner sep=0pt,minimum size=2mm,label=above:{$4$}] (u3) at (6,\x)   {};
\node[fill=white,draw,circle,inner sep=0pt,minimum size=2mm,label=below:{$1$}] (u4) at (0,0)   {};
\node[fill=white,draw,circle,inner sep=0pt,minimum size=2mm,label=below:{$1$}] (u5) at (-1,0)   {};
\node[fill=white,draw,circle,inner sep=0pt,minimum size=2mm,label=below:{$1$}] (u6) at (3,0)   {};
\node[fill=white,draw,circle,inner sep=0pt,minimum size=2mm,label=below:{$1$}] (u7) at (6,0)   {};
\node[fill=white,draw,circle,inner sep=0pt,minimum size=2mm,label=below:{$1$}] (u8) at (1,0)   {};
\node[fill=white,draw,circle,inner sep=0pt,minimum size=2mm,label=below:{$1$}] (u9) at (7,0)   {};
\node[fill=white,draw,circle,inner sep=0pt,minimum size=2mm,opacity=0.2] (u10) at (8,0)   {};
\node[fill=white,draw,circle,inner sep=0pt,minimum size=2mm,opacity=0.2] (u11) at (9,0)   {};
\node[fill=white,draw,circle,inner sep=0pt,minimum size=2mm,opacity=0.2] (u12) at (5,0)   {};
\node[fill=white,draw, inner sep=0pt,minimum size=2mm,opacity=0.2] (u13) at (9,\x)   {};
\node[fill=white,draw, inner sep=0pt,minimum size=2mm,opacity=0.2] (u14) at (11,\x)   {};
\node[fill=white,draw,circle,inner sep=0pt,minimum size=2mm,opacity=0.2,label=below:{}] (u15) at (11,0)   {};
\node[fill=white,draw, inner sep=0pt,minimum size=2mm,opacity=0.2] (u19) at (12,0)   {};
\node[fill=white,draw, inner sep=0pt,minimum size=2mm,opacity=0.2] (u20) at (13,\x)   {};
\draw (u1)--(u5);\draw (u1)--(u4);\draw (u1)--(u8);
\draw (u2)--(u6);
\draw[opacity=0.2] (u3)--(u12);\draw (u3)--(u7);\draw (u3)--(u9);
\draw[opacity=0.2] (u13)--(u10);\draw[opacity=0.2,] (u13)--(u11);
\draw[opacity=0.2] (u14)--(u15);\draw[opacity=0.2] (u14)--(u19);
\end{scope}
\draw[blue,dashed](v1) to[out=-125,in=125] (u1);
\draw[blue,dashed](v2) to[out=-125,in=125] (u2);
\draw[blue,dashed](v3) to[out=-125,in=125] (u3);
\draw[blue,dashed](v4) -- (u4);
\draw[blue,dashed](v5) to[out=-110,in=110] (u5);
\draw[blue,dashed](v6) -- (u6);
\draw[blue,dashed](v7) -- (u7);
\draw[blue,dashed](v8) -- (u8);
\draw[blue,dashed](v9) -- (u9);
\end{scope}

\end{tikzpicture}
}
\begin{document}

\author[1,2]{Nina Chiarelli}
\author[1,2]{Cl\'ement Dallard}
\author[3]{Andreas Darmann}
\author[3]{Stefan Lendl}
\author[1,2]{Martin Milani\v c}
\author[1]{Peter Mur\v si\v c}
\author[3]{Ulrich Pferschy}
\author[1,2]{Nevena Piva\v c}

\address[1]{FAMNIT, University of Primorska, Glagolja\v ska 8, 6000 Koper, Slovenia}
\address[2]{IAM, University of Primorska, Muzejski trg 2, 6000 Koper, Slovenia}
\address[3]{Department of Operations and Information Systems, University of Graz,\\ Universitaetsstrasse 15/E3, 8010 Graz, Austria}

\begin{abstract}
This paper studies the allocation of indivisible items to agents, when each agent's preferences are expressed by means of a directed acyclic graph. 
The vertices of each preference graph represent the subset of items approved of by the respective agent.
An arc $(a,b)$ in such a graph means that the respective agent prefers item $a$ over item $b$.
We introduce a new measure of dissatisfaction of an agent by counting the number of non-assigned items which are approved of by the agent and for which no more preferred item is allocated to the agent.
Considering two problem variants, we seek an allocation of the items to the agents in a way that minimizes (i) the total dissatisfaction over all agents or (ii) the maximum dissatisfaction among the agents. 
For both optimization problems we study the status of computational complexity and obtain NP-hardness results as well as polynomial algorithms with respect to natural underlying graph structures, such as stars, trees, paths, and matchings.
We also analyze the parameterized complexity of the two problems with respect to various parameters related to the number of agents, 
the dissatisfaction threshold, the vertex degrees of the preference graphs, and the treewidth.
\end{abstract}

\begin{keyword}
fair division \sep partial order \sep preference graph \sep dissatisfaction

\end{keyword}

\maketitle

\section{Introduction}

Fairly dividing sets of indivisible objects among a set of agents has been studied in the literature from different perspectives (for surveys see, e.g., Bouveret et al.~\cite{bouveret-survey} and Thomson~\cite{thomson}). 
In particular, one can find various models of preferences expressed by the agents and different objectives arising from these. 
In our contribution we introduce a new model of deriving (satisfaction and) dissatisfaction over sets of items based on directed graphs representing partial orders between the items.
This model allows the agents to express also partial or inconclusive preferences in a simple way.
For a layperson such a model should be easier to apply than existing methods where every agent is, e.g., forced to come up with a total order of all available items (or even all subsets of items), to give a full table of all pairwise comparisons, or to assign points in some formal scoring scheme.

As an illustration of our preference model let us consider the situation in which a set of indivisible presents should be divided among a set of kids. 
The kids may be overwhelmed with the task of comparing all available presents among each other, but they are able to state certain preferences such as disapproval of certain presents or strict preference of a certain present over another present. 
A kid will have difficulties to keep an overview of the complicated preference structure resulting from these pairwise comparisons, but it is well versed in complaining when it sees a present given to another kid and it receives no present it likes better than that present.

In such a scenario, the parents want to allocate the presents to the kids in a way that minimizes the (total or maximum) dissatisfaction. 
The dissatisfaction of a kid is here measured by the number of desired presents {\bf not received} and for which the kid does not get any other more preferred present. 
Note that, in this setting, adding less preferred presents will not improve the happiness of a kid. 
This can occur in a more general situation of preferences implied by skills or abilities, where the effect of an object with certain skills is not improved by adding an object with lesser skills.

We introduce a model for such a setting in which the preferences of each agent $i$ are captured by a {\em preference graph}, i.e., a directed acyclic graph $G_i$.
The vertex set of $G_i$ consists of all items that agent $i$ approves of, i.e., items the agent would like to receive.
Items not contained in the vertex set of the preference graph of an agent are regarded as disapproved of by the agent and we do not allocate such an item to that agent since it is irrelevant for her.
An arc $(a,b)$ in $G_i$ means that agent $i$ prefers item $a$ over item $b$. 
Assuming transitivity of the preferences, arcs $(a,b)$ and $(b,c)$ imply that the agent also prefers item $a$ over item $c$, regardless of whether the arc $(a,c)$ is contained in the graph or not. 
Observe that the graph of an agent hence induces a partial order over a subset of items. 

Every item can be assigned to at most one agent. 
For any such allocation of items to agents we consider the {\em dissatisfaction} of every agent.
An item causes dissatisfaction to an agent if it is not assigned to that agent and if the agent does not receive another more preferred item according to its preference graph.
The dissatisfaction of an agent is then determined by the total number of such items.
The goal of our optimization is the allocation of items to the agents in such a way that either the maximum dissatisfaction of an agent or the total dissatisfaction  of all agents (i.e., sum of the dissatisfaction over all the agents) is minimized.

From a graph theoretic perspective, an allocation for an agent $i$ is evaluated by the number of vertices in $G_i$ which are {\em dominated} by the allocated items, i.e., items which can be reached from an allocated vertex by a directed path in $G_i$. 
The total number of these dominated items together with the allocated items can be seen as the {\em satisfaction} level of agent $i$.
Obviously, satisfaction and dissatisfaction add up to the number of vertices in $G_i$.
However, minimizing the maximum dissatisfaction is in general not equivalent to maximizing the minimum satisfaction (see Section~\ref{sec:framework}), although for total dissatisfaction the equivalence holds.

\medskip
Our contribution is a computational complexity study aimed at narrowing down the complexity divide between classes of preference graphs where our allocation problems are still NP-complete and graph classes permitting polynomial-time solution algorithms.
It turns out that the latter is only possible for fairly simple types of graphs.
It should be noted that the graph classes analyzed in this paper are not at all esoteric special cases known only to experts in graph theory, but very natural restrictions corresponding to the typical human abilities of expressing preferences.

For instance, out-stars represent one item dominating a few less preferred items.
Out-trees can be seen as preference hierarchies,
directed matchings as pairwise comparisons and paths as a total order on a subset of items.
Throughout the paper we employ a wide variety of classical combinatorial optimization structures and approaches, such as matchings, independent sets, network flows, assignment, dynamic programming, and tree decompositions.
An overview of our results is given in Tables~\ref{fig:overview} and~\ref{fig:overview-tw}.
FPT is the class of problems parameterized by $t$ that are solvable in time $f(t) n^{\mathcal{O}(1)}$ for some computable function $f$ and instance size $n$ and XP the class of such problems solvable in time $n^{f(t)}$. A problem is paraNP-hard if it is NP-hard already for a constant value of the parameter. For an in-depth discussion of the parameterized complexity classes we refer the reader to~\cite{MR3380745}.

\begin{table}[htb]
\renewcommand{\arraystretch}{1.5}
\centering
\begin{tabular}{|l|c|c|}
\hline 
Setting: graph structure, $k$ agents & \MINMAXD & \MINSUMD \tabularnewline
\hline 
out-star, $k$ unbounded & NPc (Thm~\ref{th:hardmax}) & ? \tabularnewline
\hline 
out-tree, $k$ unbounded 
& NPc (Thm~\ref{th:hard-tree-B-all}) & NPc (Thm~\ref{th:hard-tree-B-all})\tabularnewline
\hline 
bipartite, $k=2$
& NPc (Thm~\ref{thm:twokids-hard}) & NPc (Thm~\ref{thm:twokids-hard})\tabularnewline
\hline 
disjoint union of out-stars, $k=2$ &  ? & P (Thm~\ref{th:B-stars-2})\tabularnewline
\hline 
directed matching, $k$ unbounded & NPc (Thm~\ref{th:hard-A-matching}) & P (Thm~\ref{th:minSUM-matching})\tabularnewline
\hline 
directed matching, $k=2$ & P (Thm~\ref{th:A-matching-2}) & P (by Thm~\ref{th:minSUM-matching})  \tabularnewline
\hline 
path, $k$ unbounded & P (Thm~\ref{th:A-paths}) & P (Thm~\ref{th:A-paths})\tabularnewline
\hline 
disjoint union of paths, $k$ unbounded & NPc (Thm~\ref{th:max-paths}) & P (Cor~\ref{th:sum-paths})\tabularnewline
\hline
\end{tabular}
\caption{Overview of results presented in Sections~\ref{sec:hardness}--\ref{sec:special}. Here, ``NPc'' indicates that the respective decision problem is NP-complete, while ``P'' means that the problem can be solved in polynomial time. 
The open questions are indicated by a question mark.}
\label{fig:overview}
\end{table}

\begin{table}[htb]
\renewcommand{\arraystretch}{1.5}
\centering
\begin{tabular}{|l|c|c|}
\hline 
Parameterized by & \MINMAXD & \MINSUMD \tabularnewline
\hline
 $k$ & paraNPc (Thm~\ref{thm:twokids-hard}) & paraNPc (Thm~\ref{thm:twokids-hard}) \tabularnewline
\hline 
 $k+t$ & XP (Remark~\ref{rem:XP}) (FPT ?) & FPT (Thm~\ref{th:tw-minsum}) \tabularnewline
 \hline
 $d+k+t$ & FPT (Thm~\ref{th:tw-minmax}) & FPT (Thm~\ref{th:tw-minsum}) \tabularnewline
 \hline
 $\gamma$ & paraNPc (Thm~\ref{th:hard-A-matching})  & FPT (Thm~\ref{th:constjunc}) \tabularnewline
\hline 
\end{tabular}
\caption{Overview of parameterized complexity results (see, in particular, Section~\ref{sec:junction}). Here, ``paraNPc'', ``FPT'' and ``XP'' indicate membership in the respective complexity class, where ``paraNPc'' is shorthand for ``paraNP-complete'' (see Section~\ref{sec:junction} for short definitions). We denote by $k$ the number of agents, by $d$ the dissatisfaction threshold, by $\gamma$ the total number of vertices with in- or out-degree greater than one, 
counted with multiplicities over all preference graphs, and by $t$ the treewidth of the undirected graph in which two distinct items are adjacent if and only if they form an arc in at least one of the preference graphs.
The open question is indicated by a question mark.}
\label{fig:overview-tw}
\end{table}

\medskip
\paragraph{Related work} In the literature on fair division of indivisible items, different approaches towards evaluating the quality of an allocation have been proposed. 
In the model of Herreiner and Puppe~\cite{puppe}, the agents need to rank all possible subsets of items. Avoiding this task, in the standard framework the agents rank the single items instead, and some approaches additionally require that the agents partition the set of items into approved and disapproved ones (\cite{Brams2009,Dong2021}). 
On the one hand, certain fairness notions such as envy-freeness (no agent envies another agent for her received set of items), equitability (the disparity between the happiest and unhappiest agent is minimized), or proportionality (each agent receives a set of items she values at least $\frac{1}{k}$-th of the value of the whole set of items, where $k$ denotes the number of agents)  have been studied; we refer to the surveys  \cite{bouveret-survey} and \cite{thomson} for details. 
On the other hand, the satisfaction level (or individual welfare) of each agent has been taken into account in order to determine the social welfare induced  (see, e.g., Aziz et al.~\cite{aziz}, Brams et al.~\cite{brams-twoperson}, or Baumeister et al.~\cite{baumeister2017}). 
This is typically done by assuming cardinal preference information, introducing utility functions, or using scores from voting schemes to evaluate ordinal preferences. 
Often, the social welfare---most prominently, utilitarian (i.e., sum of individual welfares), egalitarian (minimum individual welfare), and Nash social welfare (product of individual welfares)---has been analyzed from a computational viewpoint (Baumeister et al.~\cite{baumeister2017}, 
Bansal and Sviridenko~\cite{santaclaus}, 
Chiarelli et al.~\cite{fairIWOCA},
Darmann and Schauer~\cite{darmann2015},  Garg and McGlaughlin~\cite{garg2019}, Roos and Rothe~\cite{roosarothe}).
We add to that literature and point out that in our model the agents do not need to give a full ranking of the items but face the simpler task of stating  partial orderings of the items only (represented by means of preference graphs). Our model, however, has a certain vicinity to the settings of \cite{Brams2009} and \cite{Dong2021} as the agents' preference graphs do not need to contain all of the vertices and the non-contained vertices are regarded as disapproved of. Observe that our objectives of minimizing total and maximum dissatisfaction among the agents can be understood as maximizing utilitarian and egalitarian social welfare respectively. 

\bigskip
The paper is structured as follows.
After presenting the formal framework of the paper including problem definitions in Section~\ref{sec:framework}, we show in Section~\ref{sec:hardness} that the corresponding decision problem {\MINMAX} resp.\ {\MINSUM} is NP-complete when the preference graphs are out-stars resp.\ out-trees.
Restricting the allocation to only two agents, both problem versions remain NP-complete, but become polynomial-time solvable for disjoint unions of out-stars when dealing with {\MINSUM}.
Section~\ref{sec:matching} gives an interesting dichotomy for directed matchings as preference graphs.
Minimizing the total dissatisfaction can be done in polynomial time, whereas minimizing the maximum dissatisfaction is shown to be NP-complete.
Only for the restriction to two agents, a positive result can be derived for that case.
In Section~\ref{sec:special} we consider the special case of paths. 
Here, both problem variants remain polynomial-time solvable, but surprisingly, {\MINMAX} already becomes NP-complete as soon as every preference graph consists of at most two paths.
More general results are presented in Section~\ref{sec:junction}.
In particular, we can show fixed-parameter tractability for {\MINSUM} in the total number of junction vertices in all preference graphs by utilizing a max-profit network flow model.
Furthermore, a fixed-parameter tractability result can be shown with respect to the treewidth of the union of preference graphs and the number of agents.
A preliminary version of this work containing some of the results and omitting most of the proofs appeared as~\cite{adt2021}.

\section{Formal framework}
\label{sec:framework}

In this section we describe notation and definitions.
An \textit{undirected graph} is a pair $G=(V,E)$ with vertex set $V$ and edge set $E\subseteq \{\{u,v\}\mid u,v \in V, u\neq v\}$.
For a graph $G = (V,E)$ we write $V(G)$ for $V$ and $E(G)$ for $E$.
A \textit{directed graph} is a pair  $G=(V,A)$ with vertex set $V$ and set of arcs $A\subseteq V\times V$.
For a directed graph $G = (V,A)$ we write $V(G)$ for $V$ and $A(G)$ for $A$.
For brevity, we will often say \textit{graph} when referring to a directed graph.

Consider a directed graph $G=(V,A)$. For $a=(u,v)\in A$, vertex $u$ is called tail of $a$ and vertex $v$ is called head of $a$. 
The \textit{in-degree} of a vertex $u$ is the number of arcs in $A$ for which $u$ is head and the \textit{out-degree} of $u$ is the number of arcs in $A$ for which $u$ is tail.
The \textit{degree} of a vertex $u$ is the number of arcs in $A$ for which $u$ is either head or tail. 
A sequence $p=(v_0,v_1,v_2,\ldots,v_\ell)$ with $\ell \ge 0$
and $(v_i,v_{i+1})\in A$ for each $i\in \{0,\ldots,\ell-1\}$ is called a \textit{walk} of length $\ell$ from $v_0$ to $v_\ell$; it is a \textit{path} (of length $\ell$ from $v_0$ to $v_\ell$) if all its vertices are pairwise distinct.
A walk from  $v_0$ to $v_\ell$ is \textit{closed} if $v_0 = v_\ell$.
A \textit{cycle} is a closed walk of positive length in which all vertices are pairwise distinct, except that $v_0 = v_\ell$.
A \textit{directed acyclic graph} is a directed graph with no cycle.
An  \textit{out-tree} is a directed acyclic graph $G=(V,A)$ with a dedicated vertex $r$ (called root) such that for each vertex $v\in V\setminus\{r\}$ there is exactly one path from $r$ to $v$. An \textit{out-star} is an out-tree in which each such path is of length $1$. 
A \emph{matching} in an undirected graph $G$ is a set of pairwise disjoint edges.
A \textit{directed matching} is a directed acyclic graph such that each vertex has degree exactly one (i.e., the edges of the underlying undirected graph $G$ form a matching in $G$).

 A binary relation ${\succ}\subseteq V\times V$ is a \textit{strict partial order} over $V$ if it is  
 asymmetric (for all $u,v\in V$, if $u \succ v$ then $v \succ u$ does not hold) and transitive (for all $u,v,w\in V$, $u \succ v$ and $v \succ w$ imply $u\succ w$). 
 Observe that a directed acyclic graph $G=(V,A)$ induces a strict partial order $\succ$ on $V$ by setting $u \succ v$ for each pair $(u,v) \in V\times V$ such that $u\neq v$ and there is a path from $u$ to $v$.
\newMM{In particular, $u \succ v$ for every arc $(u,v) \in A$.}
 
 In what follows, we will consider a vertex set $V$ and a set of agents $K$ along with directed acyclic graphs $G_i=(V_i,A_i)$ for all $i \in K$, where $V_i \subseteq V$ \newAD{represents the items desired by agent $i$.}
 Let ${\it pred}_i(v) \subseteq V_i$ denote the set of predecessors of $v$ in graph $G_i$, i.e., the set of all vertices $u\neq v$ such that there is a path from $u$ to $v$ in $G_i$. 
 Observe that  ${\it pred}_i(v)$ corresponds to the set of items which agent $i$ \textit{prefers} over $v$ under relation~$\succ$.
  For $u,v \in V_i$ we say that item $u$ is \textit{dominated} by  item $v$ if $u=v$ or $v\in {\it pred}_i(u)$.
In addition, let ${\it succ}_i(v) \subseteq V_i$ denote the set of successors of $v$ in graph $G_i$, i.e., the set of all vertices $u\neq v$ such that there is a path from $v$ to $u$ in $G_i$. Hence,  ${\it succ}_i(v)$ denotes the set of all items to which agent $i$ prefers item $v$.

\smallskip
An \textit{allocation} $\pi$ is a function $K \rightarrow  2^V$ that assigns to the agents pairwise disjoint sets of items, i.e., for $i,j \in K$, $i\not=j$, we have $\pi(i)\cap\pi(j)=\emptyset$.
To measure the attractiveness of an allocation we will count the number of items, which an agent {\bf does not receive} and for which she receives no other more preferred item. 

Formally, for an allocation $\pi$  the \textit{dissatisfaction} $\diss_{\pi}(i)$ of agent $i$ is defined as number of items \newAD{in $G_i$} not dominated by any item in $\pi(i)$.  
A \textit{dissatisfaction profile} is a $|K|$-tuple $(d_i\mid i\in K)$ with $d_i \in \nat_0$ for all $i\in K$ such that there is an allocation $\pi$ with  $\diss_{\pi}(i)=d_i$ for each $i\in K$.

Note that the underlying undirected graphs of the directed graphs $G_i$ are not necessarily connected and may also contain isolated vertices.
According to the definition of $\diss_{\pi}(i)$, every isolated vertex in $G_i$ contributes one unit to $\diss_{\pi}(i)$, if it is not allocated to agent $i$.
Vertices in $V\setminus V_i$ are irrelevant for agent $i$; they do not have any influence on the dissatisfaction function $\diss_{\pi}(i)$.

In this paper, we focus on the following two problems aiming to minimize the maximum and total dissatisfaction among the agents.

\bigskip

\noindent{\MINMAX}: 

\noindent \textbf{Input:} A set $K$ of agents, a set $V$ of items, a  directed  acyclic graph $G_{i} =(V_i,A_i)$ for each $i\in K$ with $V_i \subseteq V$, and an integer $d$.

\noindent \textbf{Question:} Is there an allocation $\pi$ of items to agents such that the dissatisfaction $\diss_{\pi}(i)$ is at most $d$ for each agent $i\in K$? 

\medskip
\noindent{\MINSUM}: 

\noindent \textbf{Input:} A set $K$ of agents, a set $V$ of items, a  directed  acyclic graph $G_{i} =(V_i,A_i)$ for each $i\in K$ with $V_i \subseteq V$, and an integer $d$.

\noindent \textbf{Question:} Is there an allocation $\pi$ of items to agents such that the total dissatisfaction $\sum_{i\in K} \diss_{\pi}(i)$ is at most $d$?

\medskip

The graphs $G_i$ in the above problem definitions are called \textit{preference graphs}. Throughout the paper, we denote the number of agents by $k=|K|$ and the number of items by $n=|V|$.

While in this work the focus is laid on the minimization of (maximum or total) dissatisfaction, some remarks on the associated dual problem of maximizing satisfaction are in order. 
In this context, satisfaction $s_\pi (i)$ of agent $i$ with respect to allocation $\pi$ is measured by means of the number of items in $V_i$  that are dominated by some item in $\pi(i)$.  \newAD{Observe that the items in $V\setminus V_i$ are irrelevant for the satisfaction of agent $i$.}

Observe that minimizing the total dissatisfaction is equivalent to maximizing the total satisfaction of all agents. 
In order to verify this, it is sufficient to note that $\diss_\pi (i)=|V_i|-s_\pi (i)$, which implies that $\sum_{i\in K} \diss_{\pi}(i)=\sum_{i\in K} |V_i| - \sum_{i\in K} s_{\pi}(i)$ is minimized if and only if $\sum_{i\in K} s_{\pi}(i)$ is maximized. 

On the other hand, we point out that minimizing the maximum dissatisfaction in general does not correspond to maximizing the minimum satisfaction among the agents, as the following example shows. 

\medskip
\noindent{\bf Example.}
Let $V=\{a,b,c\}$ and $K=\{1,2,3\}$, with each graph $G_i$ consisting of isolated vertices only. Graph $G_1$ consists of all three vertices $a,b,c$, graphs $G_2$  and $G_3$ are made up of $b$ and $c$ respectively.  An allocation with dissatisfaction of at most $1$ per agent is given only by allocations that give at least two items to agent~$1$.
On the other hand, the allocation $\pi$ with $\pi(1)=\{a\}$, $\pi(2)=\{b\}$, $\pi(3)=\{c\}$ is the only allocation that yields a satisfaction of $1$ for each agent, resulting in a dissatisfaction of $2$ for agent $1$. 

\medskip
\begin{sloppypar}
In what follows, we provide NP-completeness results on the one hand and positive results, i.e., polynomial-time solvable cases, on the other. 
Concerning these positive results we point out that we are not only able to answer the corresponding decision question but also to solve the associated optimization problem, i.e., we can find an allocation that minimizes total resp.\ maximum dissatisfaction in polynomial time.
With respect to the NP-completeness results, note that  from the definition of {\MINMAX} and {\MINSUM} it follows that any NP-hardness result implies NP-hardness in the strong sense. 
Several of the hardness results presented in this paper reduce from the following NP-complete variant (3X3C) of \textsc{Exact Cover by $3$-Sets} (see Gonzalez~\cite{gonzalez85}). 
\end{sloppypar}

\medskip

\noindent \textsc{Exact Cover by $3$-Sets (3X3C)}:

\noindent \textbf{Input:} A set $X$ with $|X| = 3q$, and 
a collection $C=\{C_1,\ldots,C_p\}$ of $3$-element subsets of $X$ such that each element of $X$ appears in exactly $3$ sets.

\noindent \textbf{Question:} Does $C$ contain an \textit{exact cover} of $X$, that is, a subcollection $C'\subseteq C$ such that every element of $X$ occurs in exactly one member of $C'$? 

\section{Two agents, trees, and stars} 
\label{sec:hardness}

We begin our computational complexity study in this romantically named section with out-trees as preference graphs. 
Out-trees constitute a very natural model of partial preferences. 
It turns out that already in this rather simple case both {\MINMAX} and {\MINSUM} are NP-complete; for the former, NP-completeness even holds when all graphs are restricted to out-stars.

\begin{theorem}\label{th:hardmax}
{\MINMAX} is NP-complete, even if each graph $G_i$ is an out-star.
\end{theorem}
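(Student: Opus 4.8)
The plan is to prove membership in NP first, and then NP-hardness by a reduction from an exact-cover problem (3X3C, which the excerpt already records as NP-complete). Membership in NP is routine: an allocation $\pi\colon K\to 2^V$ is a certificate of polynomial size, and given $\pi$ one checks in polynomial time that the sets $\pi(i)$ are pairwise disjoint with $\pi(i)\subseteq V_i$ and that $\diss_{\pi}(i)\le d$ for every agent $i$. The last check is especially simple for out-stars: if $\pi(i)$ contains the root of $G_i$ then all vertices are dominated and $\diss_{\pi}(i)=0$, and otherwise $\pi(i)$ consists of leaves only and $\diss_{\pi}(i)=|V(G_i)|-|\pi(i)|$.

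The lever for the hardness direction is an exact description of the attainable dissatisfaction. If $G_i$ is an out-star with root $r_i$ and leaf set $L_i$, then in any allocation $\pi$ one has $\diss_{\pi}(i)=0$ whenever $r_i\in\pi(i)$, and $\diss_{\pi}(i)=|L_i|+1-|\pi(i)|$ whenever $r_i\notin\pi(i)$. Hence, for a fixed threshold $d$, agent $i$ is satisfied exactly when it receives its root or receives at least $|L_i|+1-d$ of its leaves; with $d=1$ this reads simply ``receives its root, or receives its entire leaf set''. So an instance in which every $G_i$ is an out-star and $d=1$ is precisely the question whether one can allocate pairwise-disjoint item sets so that each agent obtains either its root or all of its leaves.

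From a 3X3C instance $(X,C)$ with $|X|=3q$ I would build such an instance with $d$ a small constant (aiming for $d=1$). The items comprise an element item for each $x\in X$, a set item for each $C_j\in C$, and some auxiliary items. For each $C_j$ there is a set agent whose two possible ``satisfied'' configurations encode ``$C_j\notin C'$'' (it takes its root) versus ``$C_j\in C'$'' (it takes its whole leaf set); taking the leaf set locks the element items of $C_j$, so that two sets sharing an element cannot both be selected. For each $x\in X$ there is an element agent that can be satisfied only when $x$'s element item has been locked, i.e.\ when $x$ is covered by a selected set. Finally, a few bookkeeping agents sharing roots in tightly sized groups force the number of selected sets to be exactly $q$. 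The claim would be that an allocation with $\diss_{\pi}(i)\le d$ for all $i$ exists if and only if $C$ contains an exact cover: the forward direction converts an exact cover $C'$ into an allocation directly, and the converse reads off $C'$ as the family of sets whose set agent took its leaf option and verifies, via the item-count bookkeeping, that $C'$ covers $X$ exactly once.

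The step I expect to be the main obstacle is encoding ``each element covered \emph{exactly} once'' rather than merely ``at least once''. In an out-star instance, an agent whose root is wanted by no other agent is satisfiable trivially (just hand it the root), so coverage, pairwise-disjointness, and the cardinality $q$ must all be forced through genuine competition for shared items; and the number of auxiliary items, their multiplicities, and the sizes of the bookkeeping groups must be calibrated so that the total supply of items meets the agents' combined demands precisely when a subcollection of exactly $q$ pairwise-disjoint covering sets exists — neither leaving enough slack to make some constraint vacuous, nor so little that the instance is infeasible regardless of $C$. Getting this balance right is the crux; once the gadgets are calibrated, both directions of the correctness argument reduce to straightforward counting.
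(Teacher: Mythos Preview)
Your plan is on the right track in general outline—NP membership is fine, the dissatisfaction formula for out-stars is correct, and reducing from 3X3C is exactly what the paper does—but the construction is not carried out, and the part you flag as ``the main obstacle'' is indeed a genuine gap. In particular, your element-agent gadget is only described by its intended behaviour (``satisfied only when $x$'s element item has been locked''), not by an actual out-star realizing it. With $d=1$ an out-star agent is satisfied iff it receives its root or \emph{all} its leaves; it is not clear how to build an out-star for $x$ that becomes \emph{easier} to satisfy precisely when the item $x$ is taken away by someone else. If $x$ is the agent's root, losing $x$ forces the agent onto its leaves (harder, not easier); if $x$ is a leaf, losing $x$ kills the ``all leaves'' option and forces the root. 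Neither direction gives ``satisfiable only if $x$ is covered''. So as written the reduction is a sketch with its central gadget missing, and you yourself acknowledge that the bookkeeping (forcing exactly $q$ selected sets, avoiding slack) is also unresolved.

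The paper's proof avoids these difficulties by taking a rather different route: it does \emph{not} aim for a constant threshold, and it has \emph{no} element agents at all. Instead it sets $d=\ell=\tfrac{2}{3}p+1$ and introduces $\ell+1$ filler items $h_1,\dots,h_{\ell+1}$ together with $\ell+1$ identical ``dummy'' agents whose out-star forces each of them to grab one $h$-item; this exhausts the $h$'s. A single agent $A$ with root $h_1$ and leaves $\{1,\dots,p\}$ then must receive at least $p/3$ of the set-index items to meet the bound $\ell$. That leaves at most $2p/3$ set-index items for the $p$ set agents $C_j$ (root $j$, leaves $C_j\cup\{h_1,\dots,h_{\ell-1}\}$), so at least $p/3$ of them miss their root; since the $h$-leaves are gone, each such $C_j$ must receive all three element items of $C_j$. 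With only $p=3q$ element items available, exactly $p/3$ sets are selected and they are pairwise disjoint and cover $X$—coverage and exactness fall out of a pure counting argument, with no per-element gadget needed. If you want to salvage your approach, you would need either a concrete element gadget (which seems hard at $d=1$) or to switch to a counting/exhaustion mechanism of this kind.
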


\begin{proof} Given an instance $\mathcal{J}$ of 3X3C with a set $X$ of elements, $|X|=3q$, and a collection $C=\{C_1,\ldots,C_p\}$ of $3$-element subsets of $X$, we construct an instance $\mathcal{I}$ of {\MINMAX} as follows. Recall that we have $p=3q$. Set $\ell=\frac{2}{3} p +1$, and let the set of items $V=X\cup\{1,\ldots,p\}\cup\{h_1,\ldots,h_{\ell+1}\}$. 
The set of agents $K$ is made up of the agents $D_1,\ldots,D_{\ell+1}$, agent $A$, and agents $C_1,\ldots,C_p$ (the latter agents are identified with the sets of the same label 
in instance $\mathcal{J}$). The graphs $G_i$ are out-stars displayed in Figure~\ref{mcs:figure4}; the graph of agent $C_j$ has root vertex $j$, and contains the edges $(j,h_r)$, for $r\in \{1,\ldots,{\ell-1}\}$ and $(j,a)$ for $a\in C_j$ (for the figure, we assume set $C_1=\{x,y,z\}$ and  $C_2=\{x,u,v\}$).
We ask if there is an allocation with dissatisfaction at most $\ell$ per agent.

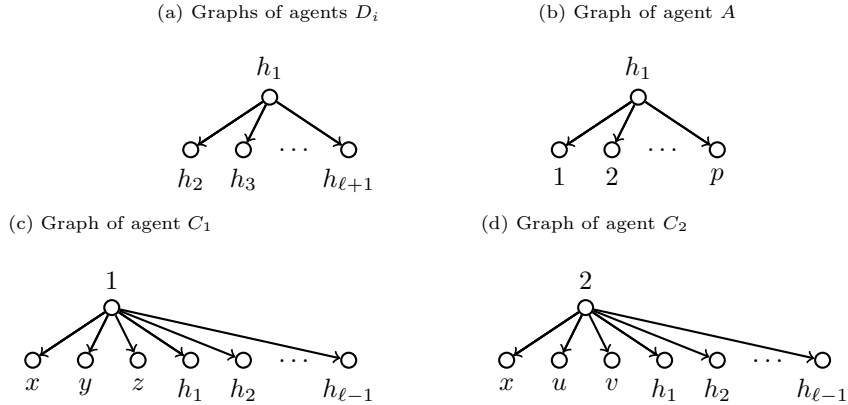
\begin{figure}[H]
	\centering
	\begin{tikzpicture}[vertex/.style={inner sep=2pt,draw,circle},scale=0.7]

	\begin{scope}[xshift=-2cm]
	
	\node[vertex, label=above:$ h_1 $] (1) at (5,5) {};
	\node[vertex, label=below:$ h_2 $] (2) at (3.5,4) {};
	\node[vertex, label=below:$ h_3 $] (3) at (4.5,4) {};
	\node[] (4) at (5.5,4) {$\dots$};
	\node[vertex, label=below:$ h_{\ell+1} $] (5) at (6.5,4) {};
	\draw[] (3) --  (1) -- (2);
	\draw[] (5) --  (1) -- (2);
	\path[->] (1) edge (2);
	\path[->] (1) edge (3);
	\path[->] (1) edge (5);
	       \node [above of=1] {\parbox{0.3\linewidth}{\subcaption{Graphs of agents $D_i$}\label{subfig:a}}};
	\end{scope}
	
	\begin{scope}[xshift=5cm]
	\node[vertex, label=above:$ h_1 $] (1) at (5,5) {};
	\node[vertex, label=below:$ 1 $] (2) at (3.5,4) {};
	\node[vertex, label=below:$ 2 $] (3) at (4.5,4) {};
	\node[] (4) at (5.5,4) {$\dots$};
	\node[vertex, label=below:$ p $] (5) at (6.5,4) {};
	\draw[] (3) --  (1) -- (2);
	\draw[] (5) --  (1) -- (2);
	\path[->] (1) edge (2);
	\path[->] (1) edge (3);
	\path[->] (1) edge (5);
	 \node [above of=1] {\parbox{0.3\linewidth}{\subcaption{Graph of agent $A$}\label{subfig:b}}};
	\end{scope}

		\begin{scope}[xshift=-5cm,yshift=-4cm]
	\node[vertex, label=above:$ 1 $] (1) at (5,5) {};
	\node[vertex, label=below:$ x $] (2) at (3.5,4) {};
	\node[vertex, label=below:$ y $] (3) at (4.5,4) {};
	\node[vertex, label=below:$ z $] (4) at (5.5,4) {};
	\node[vertex, label=below:$ h_1 $] (5) at (6.5,4) {};
	\node[vertex, label=below:$ h_2 $] (6) at (7.5,4) {};

	\node[] (7) at (8.5,4) {$\dots$};
	\node[vertex, label=below:$ h_{\ell-1} $] (9) at (9.5,4) {};
	\draw[] (3) --  (1) -- (2);
	\draw[] (5) --  (1) -- (2);
	\path[->] (1) edge (2);
	\path[->] (1) edge (3);
	\path[->] (1) edge (4);
		\path[->] (1) edge (5);
		\path[->] (1) edge (6);
		\path[->] (1) edge (9);

	       \node [above of=1] {\parbox{0.3\linewidth}{\subcaption{Graph of agent $C_1$}\label{subfig:c}}};
	\end{scope}
%
	
	\begin{scope}[xshift=4cm,yshift=-4cm]
	\node[vertex, label=above:$ 2 $] (1) at (5,5) {};
	\node[vertex, label=below:$ x $] (2) at (3.5,4) {};
	\node[vertex, label=below:$ u $] (3) at (4.5,4) {};
	\node[vertex, label=below:$ v $] (4) at (5.5,4) {};
	\node[vertex, label=below:$ h_1 $] (5) at (6.5,4) {};
	\node[vertex, label=below:$ h_2 $] (6) at (7.5,4) {};

	\node[] (7) at (8.5,4) {$\dots$};
	\node[vertex, label=below:$ h_{\ell-1} $] (9) at (9.5,4) {};
	\draw[] (3) --  (1) -- (2);
	\draw[] (5) --  (1) -- (2);
	\path[->] (1) edge (2);
	\path[->] (1) edge (3);
	\path[->] (1) edge (4);
		\path[->] (1) edge (5);
		\path[->] (1) edge (6);
		\path[->] (1) edge (9);

	    	       \node [above of=1] {\parbox{0.3\linewidth}{\subcaption{Graph of agent $C_2$}\label{subfig:d}}};

	\end{scope}

	\end{tikzpicture}\caption{Graphs of agents in the proof of \newwSL{Theorem~\ref{th:hardmax}}.}\label{mcs:figure4}
\end{figure}
We show that $\mathcal{J}$ is a yes-instance of 3X3C if and only if $\mathcal{I}$ is a yes-instance of {\MINMAX}. 
\newMM{Assume first} 
that there is an allocation with dissatisfaction at most $\ell$ per agent. Agents $D_1,\ldots, D_{\ell+1}$ all have the same graph (displayed in Figure~\ref{mcs:figure4}). To respect the dissatisfaction bound $\ell$, each of the agents $D_1,\ldots, D_{\ell+1}$ has to get exactly one of the items $h_1,\ldots,h_{\ell+1}$. Hence, also item $h_1$ is already allocated. Therefore, agent $A$ has to get at 
least $\frac{1}{3} p $ items of $\{1,\ldots,p\}$ in order to respect the bound $\ell$. Thus, for the agents $C_1,\ldots,C_p$ there are
only at most $\frac{2}{3} p$ items of $\{1,\ldots,p\}$ left. As a consequence, in order to respect bound $\ell$, 
at least $\frac{1}{3} p$ of these agents need to get all three items which 
make up the respective set in instance  $\mathcal{J}$ (e.g., agent $C_1$ gets all three items $x,y,z$). 
As each present can be given to one agent and the set $X$ contains exactly $p$ elements, this means that the collection of sets $C_j$ such that the agent of the same label gets all three items of $C_j$ forms an exact cover of $X$. 

\newMM{For the converse direction, let} 
$B$ be a collection of sets of $C$ that forms an exact cover of $X$. Observe that $B$ contains exactly $\frac{p}{3}$ sets. Give $h_i$ to agent $D_i$ for each $i$. For collection $B$,
(i) give, for each $C_j \in B$ its three elements (items) to agent $C_j$ (there are $\frac{p}{3}$ of such sets/agents), and 
(ii) give all items $j$ such that $C_j \in B$ to agent $A$ (these are $\frac{p}{3}$ such items), (iii) give item $j$ such that $C_j \notin B$ to agent $C_j$. 
Then, for each agent the dissatisfaction bound $\ell$ is respected. 
\qed
\end{proof}

\medskip
In the proof of Theorem~\ref{th:hardmax} we used out-stars consisting only of specially selected items.
However, using further reductions from 3X3C we can also show that both {\MINSUM} and {\MINMAX}
are computationally hard, even when  all graphs $G_i$ are very simple types of out-trees each containing {\bf all vertices} of $V$.

\begin{theorem}\label{th:hard-tree-B-all}
{\MINSUM} and {\MINMAX} are NP-complete, even if each graph $G_i$ is an out-tree containing all vertices of $V$. 
\end{theorem}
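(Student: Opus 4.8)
Membership in \NP{} is clear for both problems: an allocation is a polynomial-size certificate, and each $\diss_\pi(i)$ is computable in polynomial time from a reachability computation in $G_i$. The plan for the hardness part is to reduce from 3X3C, reusing the architecture of the proof of Theorem~\ref{th:hardmax}. The obstruction to reusing that proof verbatim is that there each preference graph is an out-star that uses only a carefully chosen subset of the items, whereas now every $G_i$ must span the whole of $V$. To enforce the spanning property while preserving the tight dissatisfaction bounds, I would add a fresh set $F=\{f_1,\dots,f_M\}$ of \emph{filler} items, with $M$ a suitably large polynomial in $q$, and append to every agent's gadget out-tree the same directed path $\lambda\to f_1\to f_2\to\cdots\to f_M$, where $\lambda$ is a vertex that is a sink in every gadget. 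Every preference graph then becomes a ``very simple'' out-tree --- a shallow gadget with a single pendant path --- spanning $V=(V\setminus F)\cup F$.

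The point of the filler path is to neutralise the added items. In each preference graph all of $F$ consists of descendants of $\lambda$, so an agent dominates all of $F$ precisely when its bundle contains $f_1$, the vertex $\lambda$, or a proper ancestor of $\lambda$ --- that is, a vertex on the unique root-to-$\lambda$ path $P_i$ of its gadget. With the threshold $d$ chosen so that $d<|V\setminus F|\le M$, an agent receiving no gadget vertex leaves all of $V\setminus F$ undominated and already exceeds $d$, and an agent not dominating $F$ leaves all of $F$ undominated and also exceeds $d$; hence in any feasible allocation every agent must receive a vertex of $P_i$, up to the at most one agent whose bundle contains $f_1$, which is handled separately. This ``resets'' the instance to one governed entirely by the gadgets, as in Theorem~\ref{th:hardmax}, but with the threshold shifted.

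It then remains to re-engineer the gadgets so that the forced choice of a vertex of $P_i$ encodes an exact cover. I would use one set-agent per set of the 3X3C instance, one item per element of $X$, together with auxiliary agents and items playing the roles of agents $A$ and $D_1,\dots,D_{\ell+1}$ of Theorem~\ref{th:hardmax}; the gadget of a set-agent is built so that its root-to-$\lambda$ path runs through the three element-items of the set, so that meeting the threshold forces the agent either to ``select'' its set (thereby receiving those three element-items) or to take a private item, while the auxiliary agents reproduce the bottleneck that forced agent $A$ to claim at least $\frac{1}{3}p$ of the indicator items. The forward direction then translates the allocation constructed in the proof of Theorem~\ref{th:hardmax} --- every agent dominating all of $F$, with gadget dissatisfaction equal to the threshold --- and the backward direction combines the filler argument above with the same counting as there (the number of selected sets versus the number of element-items, each assignable to at most one agent). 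This yields the {\MINMAX} statement; the {\MINSUM} statement follows by taking the sum threshold to be the total dissatisfaction of the intended allocation and, if necessary, adding dummy agents whose best feasible contribution equals their individual bound, so that the sum bound is met exactly when every agent meets its individual bound.

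The step I expect to be the main obstacle is the gadget design in the previous paragraph: one must at the same time keep every preference graph a simple shallow out-tree, guarantee $V(G_i)=V$ for all $i$ \emph{without} letting the newly introduced items push some other agent's dissatisfaction above the threshold --- which is exactly why the filler is routed through the common sink $\lambda$ rather than hung as independent leaves --- and make the counting balance with equality, which is what makes the reduction valid for the min-sum objective and not only the min-max one. Pinning down the single agent that may absorb $f_1$, and checking that taking $M$ polynomially large together with $d<|V\setminus F|$ makes all the slack terms disappear, is the delicate point; the remainder is a routine adaptation of the proof of Theorem~\ref{th:hardmax}.
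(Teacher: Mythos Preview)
Your proposal is explicitly a plan rather than a proof --- you name the gadget design as ``the main obstacle'' and leave it undone --- but beyond that, the plan as stated has two concrete gaps.

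First, appending the filler path $\lambda\to f_1\to\cdots\to f_M$ to a common sink $\lambda$ does not make the preference graphs span $V$; it only adds the items in $F$. In the construction of Theorem~\ref{th:hardmax} the gadget of agent $D_j$, for instance, contains only $h_1,\dots,h_{\ell+1}$, so after appending the filler path it still omits all of $\{1,\dots,p\}\cup X$. To span $V$ you must route \emph{every} non-filler item through each agent's out-tree, and the moment you do that the ``shallow gadget with a single pendant path'' picture collapses: the irrelevant items must sit somewhere in the tree, and wherever they sit they affect the dissatisfaction count. The paper handles this by placing, for each agent, the items irrelevant to her on agent-specific long paths hanging below vertices she is guaranteed to dominate in the intended allocation, and by calibrating those path lengths (of order $D\approx 2p^2$) so that failing to dominate even one such path already exceeds the threshold. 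That calibration --- different paths for different agents, with lengths tuned to the threshold --- is exactly the missing piece, and it is not a routine adaptation of Theorem~\ref{th:hardmax}.

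Second, your derivation of {\MINSUM} from {\MINMAX} does not work. Setting the sum threshold equal to the total dissatisfaction of the intended allocation and ``adding dummy agents whose best feasible contribution equals their individual bound'' does not force each original agent to meet her individual bound: {\MINSUM} permits one agent's slack to offset another's excess, so tightness of the sum says nothing about individual agents. The paper in fact argues {\MINSUM} directly (and remarks that {\MINMAX} is analogous), using the path-length calibration above to guarantee that any deviation from the intended allocation strictly increases the \emph{total} dissatisfaction.
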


\begin{proof}
We give a proof for \MINSUM, the proof for \MINMAX\ proceeds analogously.

Again we provide a reduction from 3X3C. Given an instance $\mathcal{J}$ of 3X3C with a set $X$ of elements and a collection $C=\{C_{1},\ldots,C_{p}\}$ of $3$-element subsets of $X$, let $D=2p^{2}+\frac{p}{3}+1$.  We construct an instance $\mathcal{I}$ of the  {\MINSUM}
by introducing a set of items $V=X\cup\{h_{i}\mid1\leq i\leq 3D-2\}\cup\{1,\ldots,p\}\cup\{e\}$
and a set $K$ made up of the following $3D-2+p+1= 6p^{2}+2p+2$ agents: agents $A_{i}$ for $1\leq i \leq 3D-2$, agent $B$, and agents $C_j$---representing set $C_j$ in instance $\mathcal{J}$ of 3X3C---for $1\leq j \leq p$. \newAD{Except for agent $A_{3D-2}$,} the agents' graphs are  displayed in Figure~\ref{fig:tree}, where $\text{path}_a$ and $\text{path}_b$ denote arbitrary but fixed paths along all vertices of $X\cup\{1,\ldots,p\}$ and $X$ respectively, and, for $j\in \{1,\ldots,p\}$,  $\text{path}_j$ denotes an arbitrary but fixed path along all vertices of $(X\setminus C_j)\cup\{1,\ldots,p\}\setminus\{j\}$. \newAD{The graph of agent $A_{3D-2}$ is an out-tree with root $h_{3D-2}$ made up of the arcs $(h_{3D-2},h_i)$ for $1\leq i \leq 3D-3$, and of the path $(h_{3D-3},path_a,e)$. 
Observe that in the graph of agent $B$, for each $1\leq i\leq p$, the unique path starting from  $i$ is of length $3p-1$; vertex $e$ functions as dummy, with the path starting from $e$ containing the remaining vertices of $V$.
}

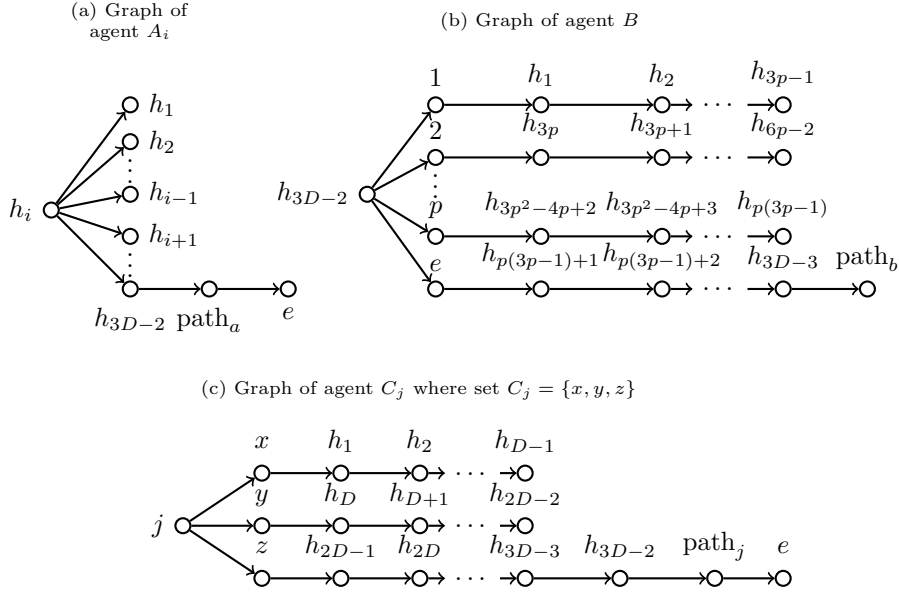
\begin{figure}[htbp]
	\centering
	\begin{tikzpicture}[vertex/.style={inner sep=2pt,draw,circle},scale=0.7]

	\begin{scope}[xshift=-12cm]
	\node[vertex, label=left:$ h_i $] (1) at (4,2.5) {};
	\node[vertex, label=right:$ h_1 $] (2) at (5.5,4.5) {};
	\node[vertex, label=right:$ h_2 $] (3) at (5.5,3.8) {};
	\node[] (4) at (5.5,3.4) {$\vdots$};
	\node[vertex, label=right:$ h_{i-1} $] (5) at (5.5,2.8) {};
	\node[vertex, label=right:$ h_{i+1} $] (6) at (5.5,2) {};
	\node[] (7) at (5.5,1.6) {$\vdots$};
	\node[vertex, label=below:$ h_{3D-2} $] (9) at (5.5,1) {};
	\node[vertex, label=below:$\text{path}_a$] (10) at (7,1) {};
	\node[vertex, label=below:$e$] (11) at (8.5,1) {};
		\path[->] (1) edge (2);
		\path[->] (1) edge (3);
		\path[->] (1) edge (5);
		\path[->] (1) edge (6);
		\path[->] (1) edge (9);
		\path[->] (9) edge (10);
		\path[->] (10) edge (11);
	\node [above of=2] {\parbox{0.15\linewidth}{\subcaption{Graph of agent $A_i$}\label{subfig:a}}};
	\end{scope}
	
	\begin{scope}[xshift=-6cm]
	\node[vertex, label=left:$ h_{3D-2} $] (1) at (4,2.8) {};
	\node[vertex, label=above:$ 1 $] (2) at (5.3,4.5) {};
	\node[vertex, label=above:$ 2 $] (3) at (5.3,3.5) {};
	\node[] (4) at (5.3,3.1) {$\vdots$};
	\node[vertex, label=above:$ p $] (5) at (5.3,2) {};
	\node[vertex, label=above:$ e $] (18) at (5.3,1) {};
		\path[->] (1) edge (2);
		\path[->] (1) edge (3);
		\path[->] (1) edge (5);	
		\path[->] (1) edge (18);	
	\node[vertex, label=above:$ h_1 $] (6) at (7.3,4.5) {};
	\node[vertex, label=above:$ h_2 $] (7) at (9.6,4.5) {};
	\node[] (8) at (10.7,4.5) {$\dots$};
	\node[vertex, label=above:$ h_{3p-1} $] (9) at (11.9,4.5) {};
		\path[->] (2) edge (6);
    	\path[->] (6) edge (7);
		\path[->] (7) edge (8);
		\path[->] (8) edge (9);    
	\node[vertex, label=above:$ h_{3p} $] (10) at (7.3,3.5) {};
	\node[vertex, label=above:$ h_{3p+1} $] (11) at (9.6,3.5) {};
	\node[] (12) at (10.7,3.5) {$\dots$};
	\node[vertex, label=above:$ h_{6p-2} $] (13) at (11.9,3.5) {};
    	\path[->] (3) edge (10);
    	\path[->] (10) edge (11);
		\path[->] (11) edge (12);
		\path[->] (12) edge (13);
	\node[vertex, label=above:$ h_{3p^2-4p+2} $] (14) at (7.3,2) {};
	\node[vertex, label=above:$ h_{3p^2-4p+3} $] (15) at (9.6,2) {};
	\node[] (16) at (10.7,2) {$\dots$};
	\node[vertex, label=above:$ h_{p(3p-1)} $] (17) at (11.9,2) {};
    	\path[->] (5) edge (14);
    	\path[->] (14) edge (15);
		\path[->] (15) edge (16);
		\path[->] (16) edge (17);	
	\node[vertex, label=above:$ h_{p(3p-1)+1} $] (19) at (7.3,1) {};
	\node[vertex, label=above:$ h_{p(3p-1)+2} $] (20) at (9.6,1) {};
	\node[] (21) at (10.7,1) {$\dots$};
	\node[vertex, label=above:$ h_{3D-3} $] (22) at (11.9,1) {};
	\node[vertex, label=above:$ \text{path}_b $] (23) at (13.5,1) {};			
		\path[->] (18) edge (19);
		\path[->] (19) edge (20);
		\path[->] (20) edge (21);
		\path[->] (21) edge (22);
		\path[->] (22) edge (23);

	\node [above of=6] {\parbox{0.3\linewidth}{\subcaption{Graph of agent $B$}\label{subfig:b}}};
	\end{scope}
	
	\begin{scope}[xshift=-9.5cm,yshift=-6cm]
	\node[vertex, label=left:$\mathstrut j $] (1) at (4,2.5) {};
	\node[vertex, label=above:$\mathstrut x $] (2) at (5.5,3.5) {};
	\node[vertex, label=above:$\mathstrut y $] (3) at (5.5,2.5) {};
	\node[vertex, label=above:$\mathstrut z $] (4) at (5.5,1.5) {};
		\path[->] (1) edge (2);
		\path[->] (1) edge (3);
		\path[->] (1) edge (4);
	\node[vertex, label=above:$\mathstrut h_1 $] (6) at (7,3.5) {};
	\node[vertex, label=above:$\mathstrut h_2 $] (7) at (8.5,3.5) {};
	\node[] (8) at (9.5,3.5) {$\dots$};
	\node[vertex, label=above:$\mathstrut h_{D-1} $] (9) at (10.5,3.5) {};
    	\path[->] (2) edge (6);
    	\path[->] (6) edge (7);
		\path[->] (7) edge (8);
		\path[->] (8) edge (9);
	\node[vertex, label=above:$\mathstrut h_D $] (10) at (7,2.5) {};
	\node[vertex, label=above:$\mathstrut h_{D+1}$] (11) at (8.5,2.5) {};
	\node[] (12) at (9.5,2.5) {$\dots$};
	\node[vertex, label=above:$\mathstrut h_{2D-2} $] (13) at (10.5,2.5) {};
    	\path[->] (3) edge (10);
    	\path[->] (10) edge (11);
		\path[->] (11) edge (12);
		\path[->] (12) edge (13);
	\node[vertex, label=above:$\mathstrut h_{2D-1} $] (14) at (7,1.5) {};
	\node[vertex, label=above:$\mathstrut h_{2D} $] (15) at (8.5,1.5) {};
	\node[] (16) at (9.5,1.5) {$\dots$};
	\node[vertex, label=above:$\mathstrut h_{3D-3} $] (17) at (10.5,1.5) {};
	\node[vertex, label=above:$\mathstrut h_{3D-2} $] (18) at (12.3,1.5) {};
	\node[vertex, label=above:$ \text{path}_j $] (19) at (14.1,1.5) {};
	\node[vertex, label=above:$\mathstrut e$] (20) at (15.4,1.5) {};
    	\path[->] (4) edge (14);
    	\path[->] (14) edge (15);
		\path[->] (15) edge (16);
		\path[->] (16) edge (17);
		\path[->] (17) edge (18);
		\path[->] (18) edge (19);
		\path[->] (19) edge (20);
	\node [above of=7] {\parbox{0.7\linewidth}{\subcaption{Graph of agent $C_j$ where set $C_j = \{x,y,z\}$}\label{subfig:c}}};
	\end{scope}
	
	\end{tikzpicture}\caption{Graphs of agents in the proof of Theorem~\ref{th:hard-tree-B-all}.}\label{fig:tree}
\end{figure}

We prove that $C$ contains an exact cover of $X$ if and only if $\mathcal{I}$ admits an allocation $\pi$ with $\sum_{i \in K}\delta_{\pi}(i)\leq D$. 

Assume that $Y$ is a collection of sets of $C$ that forms an exact cover of $X$. Then consider the following allocation $\pi$: 
\begin{itemize}
\item allocate item $h_{i}$ to agent $A_{i}$, for each $i$; hence, we have
$\delta_{\pi}(A_{i})=0$ for each $i$;
\item give item $e$ and all items $j$ with $C_{j}\in Y$ to agent $B$. Observe that due
to $|Y|=\frac{p}{3}$ agent $B$ receives exactly $\frac{p}{3}$ items,
and $\delta_{\pi}(B)=1+\frac{2}{3}p3p=2p^{2}+1$;
\item give item $j$ with $C_{j}\notin Y$ to agent $C_{j}$, which yields
$\delta_{\pi}(C_{j})=0$;
\item for each $C_{j}\in Y$ give the three items corresponding to the elements
of $C_{j}$ to agent $C_{j}$, which yields $\delta_{\pi}(C_{j})=1$;
\end{itemize}
In total, we hence get $\sum_{i \in K}\delta_{\pi}(i)=0+2p^{2}+1+\frac{p}{3}$.

On the other hand, assume that $\mathcal{I}$ admits an allocation
$\pi$ with $\sum_{i \in K}\delta_{\pi}(i)\leq D$. This requires that each
of the agents $A_{i}$ must get at least one of $\{h_{j}\mid1\leq j\leq 3D-2\}$.
Since all the elements of $\{h_{j}\mid1\leq j\leq3D-2\}$ are allocated to the agents $A_i$, agent $B$ hence has to receive at least $\frac{1}{3} p$
items of $\{1,\ldots,p\}$, because otherwise we would have
$\delta_{\pi}(B)\geq(\frac{2}{3}p+1)3p=2p^{2}+3p>D$, in contradiction
with our assumption. Let $C_{j}=\{x,y,z\}$, and consider agent
$C_{j}$. If $C_{j}$ does not receive item $j$ she must get all
three items $x,y,z$, because otherwise $\delta_{\pi}(C_{j})\geq D+1 >D$
would hold. Now, observe that at least $\frac{1}{3} p$ elements of  $\{1,\ldots,p\}$
are already allocated to $B$, and therefore at most $\frac{2}{3}$ of all the agents $C_{j}$, $1\leq j\leq p$, can receive item $j$. Thus, at least
$\frac{1}{3}$ of all agents $C_{j}$ need to get the three items that
make up the respective set $C_{j}$ in instance $\mathcal{I}$ of
3X3C. Since there are $|X|=p$ such items, however, exactly $\frac{1}{3}$ of
all agents $C_{j}$ get the three items that make up the respective
set $C_{j}$ in instance $\mathcal{I}$. As a consequence, the set
$Y=\{C_{j}\in C\mid j\notin\pi(C_{j})\}$ forms an exact cover of
$X$. \hfill $\square$
\end{proof}

\smallskip
We remark that an analogous hardness result holds for \MINSUM\ if the trees are not required to contain all vertices of $V$, and in fact, for the case that none of the trees contains all vertices of $V$.\footnote{This can be proven by  introducing a distinct  dummy item for each agent and inserting it as a leaf, together with an arc from the root to that leaf, in the agent's preference graph.}

\medskip
Given the negative results derived even for quite simple preference graphs, a natural step to get closer to the boundary between ``hard'' and ``easy'' cases is a restriction on the number of agents.
But even for only two agents, we can derive an NP-completeness result for both objective functions.

\begin{theorem}\label{thm:twokids-hard}
{\MINSUM} and {\MINMAX} are NP-complete, even if the number of agents $k=2$ and the two sets of items are the same. 
\end{theorem}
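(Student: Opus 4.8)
The plan is to reduce from the NP-complete problem \textsc{3-Satisfiability}. Given a 3-CNF formula $\varphi$ with variables $x_1,\dots,x_m$ and clauses $c_1,\dots,c_r$, I build an instance with two agents $1,2$ whose preference graphs live on the common item set
\[
V=\{t_i,f_i\mid 1\le i\le m\}\cup\{p_i\mid 1\le i\le m\}\cup\{q_i\mid 1\le i\le m\}\cup\{c_j\mid 1\le j\le r\},
\]
so that $V_1=V_2=V$ and $n=|V|=4m+r$. In $G_1$ I put the arcs $(t_i,p_i)$ and $(f_i,p_i)$ for every $i$, and, for every clause $c_j$, the arc $(t_a,c_j)$ whenever $x_a$ occurs positively in $c_j$ and $(f_a,c_j)$ whenever it occurs negatively; all vertices $q_i$ are isolated in $G_1$. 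In $G_2$ I put only the arcs $(t_i,q_i)$ and $(f_i,q_i)$ for every $i$; all vertices $p_i$ and $c_j$ are isolated in $G_2$. Both graphs are acyclic, in fact bipartite with every arc directed from a ``literal item'' to one of the remaining items, so the instance is admissible. We ask whether there is an allocation $\pi$ with $\delta_\pi(1),\delta_\pi(2)\le m$ (for {\MINMAX}), resp.\ with $\delta_\pi(1)+\delta_\pi(2)\le 2m$ (for {\MINSUM}); membership in NP is immediate.

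For the ``if'' direction, given a satisfying assignment let $\lambda_i\in\{t_i,f_i\}$ be the literal item it makes true and $\bar\lambda_i$ the other one, and put $\pi(1)=\{\lambda_i\mid i\}\cup\{q_i\mid i\}$ and $\pi(2)=\{\bar\lambda_i\mid i\}\cup\{p_i\mid i\}\cup\{c_j\mid j\}$. Then agent~$1$ dominates every $\lambda_i$ and $q_i$ (held), every $p_i$ (via $(\lambda_i,p_i)$) and every $c_j$ (via a literal satisfying it), missing only the $m$ items $\bar\lambda_i$; agent~$2$ dominates everything she holds together with every $q_i$ (via $(\bar\lambda_i,q_i)$), missing only the $m$ items $\lambda_i$. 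Hence $\delta_\pi(1)=\delta_\pi(2)=m$, which meets both thresholds.

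For the ``only if'' direction I would first observe that moving an unallocated item to agent~$1$ never increases either dissatisfaction, so we may assume all items are allocated; write $L_\ell:=\pi(\ell)\cap\{t_i,f_i\mid i\}$, so that $L_1,L_2$ partition the $2m$ literal items. Since literal items are sources in both graphs, agent~$\ell$ dominates none of the literal items held by the other agent, whence $\delta_\pi(1)\ge|L_2|$ and $\delta_\pi(2)\ge|L_1|$. For {\MINMAX} the bounds $\delta_\pi(1),\delta_\pi(2)\le m$ then force $|L_1|=|L_2|=m$ and $\delta_\pi(1)=\delta_\pi(2)=m$, so each agent dominates every item except the literal items of the other. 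In particular agent~$2$ dominates every $p_i$ and every $c_j$, and since these are isolated in $G_2$ she must \emph{hold} all of them; symmetrically agent~$1$ holds every $q_i$. Thus $\pi(1)=L_1\cup\{q_i\mid i\}$, so agent~$1$ holds no $p_i$ although she dominates every $p_i$; as the only in-neighbours of $p_i$ in $G_1$ are $t_i$ and $f_i$, she holds one of them for each $i$, and $|L_1|=m$ makes it exactly one per variable, defining an assignment $\alpha$. Finally, agent~$1$ dominates every $c_j$ without holding it, hence holds an in-neighbour of $c_j$, i.e.\ a literal of $c_j$ true under $\alpha$; so $\alpha$ satisfies $\varphi$. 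For {\MINSUM} one runs the same argument after writing $s_\pi(\ell)=|\pi(\ell)|+A_\ell$, where $A_1$ (the non-held items dominated by agent~$1$) lies among the $p_i$'s and $c_j$'s, so $A_1\le m+r$, and $A_2$ lies among the $q_i$'s, so $A_2\le m$; since $|\pi(1)|+|\pi(2)|=n=4m+r$, the requirement $\delta_\pi(1)+\delta_\pi(2)\le 2m$ is equivalent to $A_1=m+r$ and $A_2=m$, and these equalities reproduce the structural picture above (agent~$1$ holds a literal of every variable and a literal satisfying every clause), hence a satisfying assignment.

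I expect the ``only if'' analysis to be the crux, since one has to rule out exploitative allocations — an agent hoarding several literal items of the same variable, or monopolising the gadget items $p_i,q_i,c_j$. Two design features jointly block this: literal items are sources in both preference graphs, so holding the other agent's literals is useless and merely wastes dissatisfaction budget; and each gadget item is isolated in exactly one of $G_1,G_2$, so it can be ``accounted for'' only by the agent in whose graph it is wired, and only through a literal item of the relevant variable. Distributing the $2m$ literal items over two agents, each of whom is forced to cover all $m$ variables, then leaves exactly one literal per variable on each side — precisely a consistent truth assignment — after which the clause gadget certifies satisfaction of $\varphi$.
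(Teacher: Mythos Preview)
Your proposal is correct and follows essentially the same approach as the paper: a reduction from \textsc{3-SAT} in which the literal items $t_i,f_i$ (the paper's $v_i,\bar v_i$) are sources in both preference graphs, auxiliary ``variable'' items force each agent to pick at least one literal per variable, and clause items in agent~1's graph certify satisfiability. The only difference is that you add a second family of dummy items $p_i$ (wired into $G_1$) alongside the $q_i$ (wired into $G_2$, the paper's $u_i$), making the gadget symmetric between the two agents; the paper gets by with just one family and recovers the one-literal-per-variable structure for agent~1 indirectly from the constraint on agent~2, then proves the chain $(1)\Rightarrow(2)\wedge(3)$, $(3)\Rightarrow(2)$, $(2)\Rightarrow(1)$ rather than treating the two objectives separately.
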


\begin{proof}
To show the NP-hardness, we reduce from the $3$-SAT problem, which is known to be NP-complete~\cite{cook1971complexity}.

Let $\phi$ be a $3$-SAT formula with $n$ variables $x_1, \dots, x_n$ and $m$ clauses, each containing exactly $3$ literals.
We now construct instances of {\MINSUM} and {\MINMAX} using $K=\{1,2\}$, the same set $V$ of items and 
preference graphs $G_1 = (V, A_1)$, $G_2 = (V, A_2)$.
For each variable $x_i$ for $i=1,\dots,n$ we add two variable items $v_i, \bar{v}_i$
and a dummy item $u_i$ to $V$. For each clause $j=1,\dots,m$ we add a clause item $c_j$ to $V$. For each $j=1,\dots,m$ we add the arc $(v_i,c_j)$ to $A_1$ if the literal $x_i$ appears in the $j$-th clause and add the arc $(\bar{v}_i,c_j)$ to $A_1$ if the literal $\bar{x}_i$ appears in the $j$-th clause. For each $i=1,\dots,n$ we add the arcs $(v_i,u_i)$ and $(\bar{v}_i, u_i)$ to $A_2$. See Figure~\ref{fig:twokid-hard} for an illustration of the construction used in the reduction.
\begin{figure}[H]
	\centering
	\begin{tikzpicture}[vertex/.style={inner sep=2pt,draw,circle},scale=0.7]

	\begin{scope}[xshift=0cm]
	
	\node[vertex, label=above:$ v_1 $] (v1) at (0,3) {};
	\node[vertex, label=above:$ \bar{v}_1 $] (vv1) at (1,3) {};
	
	\node[vertex, label=above:$ v_2 $] (v2) at (3*1+0,3) {};
	\node[vertex, label=above:$ \bar{v}_2 $] (vv2) at (3*1+1,3) {};

	\node[vertex, label=above:$ v_3 $] (v3) at (3*2+0,3) {};
	\node[vertex, label=above:$ \bar{v}_3 $] (vv3) at (3*2+1,3) {};

	\node[vertex, label=above:$ v_4 $] (v4) at (3*3+0,3) {};
	\node[vertex, label=above:$ \bar{v}_4 $] (vv4) at (3*3+1,3) {};
	
	\node[vertex, label=below:$ c_1 $] (c1) at (3,0.5) {};
	\node[vertex, label=below:$ c_2 $] (c2) at (8,0.5) {};
	
	\node[vertex, label=above:$ u_1 $] (u1) at (13,2) {};
	\node[vertex, label=above:$ u_2 $] (u2) at (14,2) {};
	\node[vertex, label=above:$ u_3 $] (u3) at (13,1) {};
	\node[vertex, label=above:$ u_4 $] (u4) at (14,1) {};
	
	\path[->] (v1) edge (c1);
	\path[->] (vv3) edge (c1);
	\path[->] (vv4) edge (c1);
	
	\path[->] (vv1) edge (c2);
	\path[->] (v2) edge (c2);
	\path[->] (vv4) edge (c2);

	 \node [above of=v3] {\parbox{0.3\linewidth}{\subcaption{Graph $G_1$ of agent $1$}}};
	\end{scope}

		\begin{scope}[xshift=0cm,yshift=-5.5cm]
    \node[vertex, label=above:$ v_1 $] (v1) at (0,3) {};
	\node[vertex, label=above:$ \bar{v}_1 $] (vv1) at (1,3) {};
	
	\node[vertex, label=above:$ v_2 $] (v2) at (3*1+0,3) {};
	\node[vertex, label=above:$ \bar{v}_2 $] (vv2) at (3*1+1,3) {};

	\node[vertex, label=above:$ v_3 $] (v3) at (3*2+0,3) {};
	\node[vertex, label=above:$ \bar{v}_3 $] (vv3) at (3*2+1,3) {};

	\node[vertex, label=above:$ v_4 $] (v4) at (3*3+0,3) {};
	\node[vertex, label=above:$ \bar{v}_4 $] (vv4) at (3*3+1,3) {};
	
	\node[vertex, label=below:$ u_1 $] (u1) at (0.5,1) {};
	\node[vertex, label=below:$ u_2 $] (u2) at (3.5,1) {};
	\node[vertex, label=below:$ u_3 $] (u3) at (6.5,1) {};
	\node[vertex, label=below:$ u_4 $] (u4) at (9.5,1) {};
	
	\node[vertex, label=above:$ c_1 $] (c1) at (13,2) {};
	\node[vertex, label=above:$ c_2 $] (c2) at (14,2) {};
	
	\path[->] (v1) edge (u1);
	\path[->] (vv1) edge (u1);
	
	\path[->] (v2) edge (u2);
	\path[->] (vv2) edge (u2);
	
	\path[->] (v3) edge (u3);
	\path[->] (vv3) edge (u3);
	
	\path[->] (v4) edge (u4);
	\path[->] (vv4) edge (u4);

	       \node [above of=v3] {\parbox{0.3\linewidth}{\subcaption{Graph $G_2$ of agent $2$}}};
	\end{scope}
	
	\end{tikzpicture}\caption{Example of the reduction in Theorem~\ref{thm:twokids-hard} for the $3$-SAT instance given by the formula $(x_1 \vee \bar{x}_3 \vee \bar{x}_4) \wedge (\bar{x}_1 \vee x_2 \vee \bar{x}_4)$.}\label{fig:twokid-hard}
\end{figure}
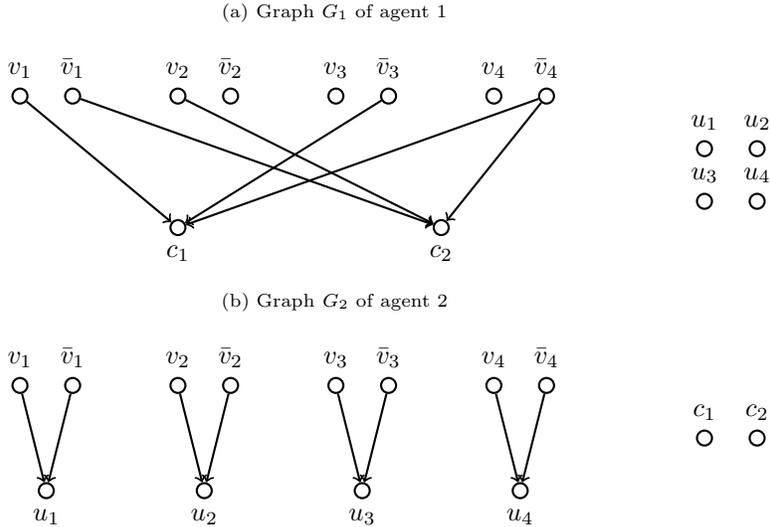
We claim that the following three statements are equivalent:
\begin{enumerate}[(1)]
    \item $\phi$ is satisfiable;
    \item there exists an assignment of items to agents in this instance such that the total dissatisfaction is at most~$2n$;
    \item there exists an assignment of items to agents in this instance such that the maximum dissatisfaction of any of the two agents of this instance is at most~$n$.
\end{enumerate}
First, observe that $2n$ is a lower bound on the total dissatisfaction of this instance
and $n$ is a lower bound on the maximum dissatisfaction of the two agents. This follows by the fact 
that the items $v_i, \bar{v}_i$ for $i=1,\dots,n$ are roots in both $G_1$ and $G_2$ and hence each such pair leads to a dissatisfaction of $1$ for at least one of the two agents.

Next, assume (1) holds and we are given a satisfying assignment for $\phi$. We construct an assignment of items to agents with 
total dissatisfaction $2n$, in which each of the two agents has dissatisfaction $n$. For each $i=1,\dots,n$ if $x_i = \textsf{true}$ we assign $v_i$ to agent $1$ and $\bar{v}_i$ to agent $2$. Otherwise, if $x_i = \textsf{false}$ we assign $\bar{v}_i$ to agent $1$ and $v_i$ to agent $2$. For all $j=1,\dots,m$ the clause item $c_j$ is assigned to agent $2$. For all $i=1,\dots,n$ and the dummy items $u_i$ are assigned to agent $1$. Note that in $G_1$ all the clause items $c_j$ are dominated by the respective variable item of the literal that satisfies the $j $-th clause. Also, all dummy items are assigned to agent $1$ and are dominated in $G_2$. Based on this it is easy to verify that the dissatisfaction of both agents is $n$ and the total dissatisfaction is $2n$. Hence (1) implies both (2) and (3).

For the converse direction, first observe that point (3) implies (2). Hence, we assume that (2) holds and we are given an assignment of items to the two agents such that the total dissatisfaction is $2n$. 
Note that for all $j=1,\dots,m$, the clause item $c_j$ is assigned to agent $2$, since otherwise $c_j$ cannot be dominated in $G_2$ and hence the total dissatisfaction must be at least $2n+1$, a contradiction. 
Also, for each $i=1,\dots,n$ at least one of the two items $v_i, \bar{v}_i$ must be assigned to agent $2$, since otherwise, the dummy item $u_i$ is not dominated in either $G_1$ or in $G_2$ and hence contributes $+1$ to the total dissatisfaction which then is at least $2n+1$, a contradiction.
Hence for each $i=1,\dots,n$ at most one of the two items, $v_i$ or $\bar{v}_i$, is assigned to agent $1$.
If $v_i$ is assigned to agent $1$ we set $x_i = \textsf{true}$. Otherwise we set $x_i = \textsf{false}$.
Note, that for all $j=1,\dots,m$, the clause item $c_j$ must be dominated by some variable item $v_i$ or $\bar{v}_i$ assigned to agent $1$. 
Otherwise, the clause item contributes  $+1$ to the dissatisfaction of one of the two agents, leading to a total dissatisfaction of at least $2n+1$, a contradiction. 
By our assignment of $x_1,\dots,x_n$ the corresponding literal in the $j$-th clause is set to $\textsf{true}$. Hence, $x_1,\dots,x_n$ is a satisfying assignment of $\phi$ and (3) follows.
\qed
\end{proof}

The construction of the proof allows a slightly stronger formulation of the result of Theorem~\ref{thm:twokids-hard}.
Considering that $3$-SAT is NP-complete even if every literal appears in exactly two clauses, see~\cite{Berman2003}, we get the following statement. 
\begin{remark}\label{thm:twokids-hard-degree}
For $k=2$ agents {\MINSUM} and {\MINMAX} are NP-complete, even if the preference graphs have out-degree at most two, in-degree at most three, no directed path of length at least two, and the two sets of items are the same.
\end{remark}

Deriving a positive counterpart to this negative result, we continue to consider the case of two agents and look for simple graph classes permitting polynomial-time solutions.
We succeed by showing that {\MINSUM} can be solved in polynomial time when the two preference graphs are collections of out-stars.
This can be compared to the construction given in the proof of Theorem~\ref{thm:twokids-hard} and specified in Remark~\ref{thm:twokids-hard-degree}, where NP-completeness was stated for preferences represented by restricted bipartite underlying graphs.

Later, in Theorem~\ref{th:constjunc}, we will consider preference structures where the number of all junction vertices (i.e., vertices with in- or out-degree greater than $1$) is constant. 
However, this does not cover the result of the subsequent Theorem~\ref{th:B-stars-2}, where we allow an arbitrary number of out-stars.

\medskip
Let the set of two agents be $K = \{i,j\}$ and let $G_i$ and $G_j$ be corresponding preference graphs.
We will call each vertex belonging to the set $(V(G_i)\cup V(G_j))\setminus(V(G_i)\cap V(G_j))$ a {\em{personal item}}.
For non-personal items $v$ there are three possibilities: (i) $v$ is a root or a leaf in both graphs;
(ii) ${\it pred}_i(v)=\emptyset$ and ${\it pred}_j(v)\neq\emptyset$; 
(iii) ${\it pred}_j(v)=\emptyset$ and ${\it pred}_i(v)\neq\emptyset$.
In cases (ii) and (iii) we call $v$ a \textit{one-root item} of (an out-star in) $G_i$, resp.~$G_j$. 

\begin{lemma}\label{Lem:out-stars}
For $k=2$ when both graphs $G_i$ are disjoint unions of out-stars,
there exists an optimal allocation $\pi^*$ for {\MINSUM} with the following properties:
\begin{enumerate}
    \item each personal item is assigned to the corresponding agent;
    \item each one-root item $v$ of $G_i$ is assigned to agent $i$, and for every $(v,u) \in A(G_i)$, $u$~is assigned to the unique agent $j$ in $K\setminus\{i\}$ provided $u$ is not a personal item.
   
\end{enumerate}
\end{lemma}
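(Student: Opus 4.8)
The plan is to establish the two properties by local exchange arguments, starting from an arbitrary optimal allocation $\pi$ for {\MINSUM} and modifying it step by step without increasing the total dissatisfaction. Recall that for disjoint unions of out-stars every connected component is either an isolated vertex or a star with a root $r$ and some leaves; a leaf $u$ is dominated in $G_i$ by $\pi(i)$ precisely when $u\in\pi(i)$ or $r\in\pi(i)$, and a root is dominated only when it is itself allocated. Since minimizing total dissatisfaction is equivalent to maximizing total satisfaction $\sum_i s_\pi(i)$ (as noted in Section~\ref{sec:framework}), throughout I will argue in terms of never decreasing total satisfaction.

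First I would handle property~1. Suppose $v$ is a personal item, say $v\in V(G_i)\setminus V(G_j)$, but $v\notin\pi(i)$. Since $v$ is irrelevant to agent $j$, moving $v$ out of $\pi(j)$ (if it is there at all) costs agent $j$ nothing. Now reassign $v$ to agent $i$. This can only increase the set of items of $G_i$ dominated by $\pi(i)$ (adding $v$ itself, and if $v$ is a star root, also all of its leaves), so $s_\pi(i)$ does not decrease and no other agent is affected. Repeating over all personal items yields an optimal allocation in which property~1 holds; note this step cannot break property~1 for items handled earlier, so the process terminates.

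Next, with property~1 in force, I would establish property~2 by a component-by-component exchange on the out-stars of $G_i$. Let $v$ be a one-root item of $G_i$: so $v$ is the root of a star in $G_i$ with leaves $u_1,\dots,u_t$, while in $G_j$ we have $\mathit{pred}_j(v)=\emptyset$, i.e.\ $v$ is isolated or a root in $G_j$. The idea is to show we may assume $v\in\pi(i)$ and that each non-personal $u_s$ lies in $\pi(j)$. For the first part: if $v\notin\pi(i)$, take $v$ away from whoever holds it. In $G_j$, $v$ being a root/isolated vertex means it dominates nothing beyond itself, so removing it from $\pi(j)$ loses at most one unit of satisfaction for agent $j$; giving $v$ to agent $i$ gains at least one unit for agent $i$ (namely $v$ itself, which was not dominated since its only potential dominator is $v$). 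Hence total satisfaction does not decrease. For the second part, once $v\in\pi(i)$: each leaf $u_s$ of this star is already dominated in $G_i$ via $v$, so $u_s\notin\pi(i)$ is harmless for agent $i$; if $u_s$ is non-personal it also belongs to $V(G_j)$, and assigning $u_s$ to agent $j$ can only help agent $j$ and hurts no one. So we may push all non-personal leaves of the star to agent $j$.

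The step I expect to require the most care is showing that these exchanges can be carried out \emph{simultaneously and consistently} without circular dependencies: a leaf $u$ of a one-root star of $G_i$ could itself be a root of a one-root star of $G_j$, so the reassignments induced by the two graphs might appear to conflict over $u$. The resolution is that the definitions are not in conflict: if $u$ is a one-root item of $G_j$ then $\mathit{pred}_i(u)\neq\emptyset$, so $u$ is \emph{not} a root in $G_i$; conversely being forced to agent $j$ (as a leaf of a $G_i$-star) is exactly what property~2 for $G_j$ also wants (it is a one-root item of $G_j$ and should go to $j$). One checks that the ``assigned to agent $i$'' constraints coming from $G_i$ and from $G_j$ partition cleanly: an item is constrained to $i$ iff it is personal to $i$ or a one-root item of $G_i$, and these two conditions never simultaneously assign the same item to both agents. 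Having verified this consistency, I would apply the exchanges to every component of every out-star in turn, at each stage preserving optimality and the previously-established constraints, arriving at the desired $\pi^*$. The items not covered by properties~1--2 (the ``type (i)'' non-personal items that are roots or leaves in both graphs, and leaves that are personal) are simply left wherever the running allocation puts them; the lemma makes no claim about them.
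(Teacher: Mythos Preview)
Your approach is the same exchange argument as the paper's, but there is a genuine slip in your treatment of property~2. You write that a one-root item $v$ of $G_i$ satisfies $\mathit{pred}_j(v)=\emptyset$, i.e., that $v$ is a root or isolated vertex in $G_j$. This is the wrong way round: by definition a one-root item of $G_i$ has $\mathit{pred}_i(v)=\emptyset$ and $\mathit{pred}_j(v)\neq\emptyset$, so $v$ is a \emph{leaf} in $G_j$. Under your stated assumption, the claim ``$v$ being a root/isolated vertex means it dominates nothing beyond itself'' is false (a root dominates all its leaves), and hence your bound ``removing it from $\pi(j)$ loses at most one unit of satisfaction'' is not justified. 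With the correct reading---$v$ is a leaf in $G_j$---that bound \emph{is} immediate, and the rest of your exchange goes through exactly as in the paper: agent $j$ loses at most one, agent $i$ gains at least one (since $v$ is a root in $G_i$ and was undominated), so total dissatisfaction does not increase.

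Interestingly, your final consistency paragraph implicitly uses the correct definition (``if $u$ is a one-root item of $G_j$ then $\mathit{pred}_i(u)\neq\emptyset$''), so this looks like a local confusion rather than a global one. The paper handles the consistency issue more tersely: it simply observes that the vertex sets touched by the two exchange steps are determined by $G_1$ and $G_2$ alone, independent of the starting allocation, so repeated application must stabilize at a unique preassignment. Your more explicit case analysis reaches the same conclusion.
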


\begin{proof}
Without loss of generality assume $K=\{1,2\}$. Given an allocation $\pi$ of this problem, we can construct an allocation $\pi'$ with $\sum_{i\in K}\delta_{\pi'}(i)\leq \sum_{i\in K}\delta_\pi(i)$ by applying one of the following steps:

\begin{enumerate}
    \item If there exists a personal item $v$ in $G_1$ and in $\pi$ it was not assigned to agent $1$, then let $\pi'$ be as $\pi$ except we assign $v$ to agent $1$. We have $\delta_{\pi'}(2) = \delta_\pi(2)$ and $\delta_{\pi'}(1)\leq \delta_{\pi}(1)$. The same argument applies if we have a personal item in $G_2$. Hence the total dissatisfaction of $\pi'$ is not greater than the one of~$\pi$.
    \item Suppose an item $v$ is a one-root item of an out-star in $G_1$ and that $v$ was not assigned to agent $1$ in $\pi$.
    Let $\pi'$ be the same assignment as $\pi$, with the exception that we assign item $v$ to agent $1$, and we assign all out-neighbors 
    of $v$ that are not personal items in $G_1$ to agent $2$. We have $\delta_{\pi'}(2)\leq \delta_\pi(2)+1$ and $\delta_{\pi'}(1)\leq \delta_\pi(1)-1$.
    The same argument applies if we have a one-root item in $G_2$.
    Hence the total dissatisfaction of $\pi'$ is not greater than the one of~$\pi$. 
\end{enumerate}

Applying these steps to an allocation $\pi$ repeatedly and setting $\pi = \pi'$ after each application of a step leads to an allocation that satisfies properties $1$ and $2$.

Note that no matter which allocation $\pi$ we start with, the vertex sets considered in step $1$ and step $2$ depend only on the graphs $G_1$ and $G_2$, and are thus unique, i.e., lead to the same assignment of personal items and one-root items together with their leaves.
This implies that we can transform any allocation into an allocation compliant with the two properties. In particular, there exists an optimal allocation with the stated properties.
\qed
\end{proof}

Let us call {\em{preassignment}} the allocation compliant with Lemma~\ref{Lem:out-stars} that is obtained from an empty assignment.

\begin{theorem}\label{th:B-stars-2}
For $k=2$ when both graphs $G_i$ are disjoint unions of out-stars, we can find an allocation of minimum total dissatisfaction in an instance $\mathcal{I}$ of {\MINSUM} in polynomial time.
\end{theorem}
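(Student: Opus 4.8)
The plan is to use Lemma~\ref{Lem:out-stars} to reduce the instance to the allocation of a small set of ``free'' items, and then to solve the residual problem by a minimum cut computation. Since the sets of personal items, of one-root items, and of out-neighbours of one-root items depend only on the two preference graphs, the preassignment can be computed in polynomial time, and by Lemma~\ref{Lem:out-stars} it extends to an optimal allocation; it therefore remains to decide, for each item not fixed by the preassignment, which of the two agents receives it. These \emph{free} items are precisely the \emph{double-root items} $R$ (items that are roots in both preference graphs) together with the items that are leaves of an out-star in both graphs and are out-neighbours of no one-root item (the \emph{free leaves}); every other item is a personal item, a one-root item, or an out-neighbour of a one-root item.

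\emph{The residual objective.} Fixing an extension of the preassignment, I express the total dissatisfaction as a pseudo-Boolean function of the binary choices (agent $i$ versus agent $j$) of the free items, and classify contributions item by item. Every personal item and every non-personal out-neighbour of a one-root item is dominated in each preference graph containing it, hence contributes $0$. Every double-root item is undominated for exactly one of the two agents, hence contributes the constant $1$. A one-root item of $G_i$ is undominated for agent $j$ exactly if the root of its out-star in $G_j$ is not given to $j$; that root is either fixed to $j$ by the preassignment (contribution $0$) or a double-root item $r$, giving a \emph{unary} penalty on the choice of $r$. Finally, a free leaf $\ell$ is undominated for agent $i$ iff $\ell$ and the root of its out-star in $G_i$ both go to $j$, and symmetrically for agent $j$; if one of these two roots was fixed by the preassignment then $\ell$ can be placed so as to contribute $0$, and otherwise the two roots are double-root items $r_i,r_j\in R$ and, after choosing the best agent for $\ell$ itself, the residual cost of $\ell$ is the \emph{binary} term that equals $1$ precisely when $r_i$ is given to $j$ and $r_j$ is given to $i$.

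\emph{Minimisation.} Each free-leaf variable occurs in a single term and can be eliminated as above, leaving a pseudo-Boolean function of the variables $\{y_r : r\in R\}$ that is a sum of unary terms and binary terms of the ``pay $1$ if $r$ picks $j$ and $r'$ picks $i$'' type; all such terms are submodular, so the function is submodular and its minimum can be computed in polynomial time by reduction to a minimum $s$-$t$ cut in an auxiliary network with one vertex per double-root item, a capacity-$1$ arc from $r_j$ to $r_i$ for each relevant free leaf, and source/sink arcs carrying the unary penalties. From an optimal cut we read off the agent of every double-root item, then give each free leaf its best agent, and combine with the preassignment; correctness then follows from Lemma~\ref{Lem:out-stars}.

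\emph{Where the work is.} The only delicate point is the case analysis of the second step: one must verify, for each type of item (personal, one-root of either graph, out-neighbour of a one-root item, double-root, and the various kinds of item that is a leaf in both graphs), that its contribution is $0$, a constant, a unary double-root penalty, or the stated binary term. This is a finite but somewhat fiddly analysis of where each item and the roots of its out-stars sit in the two preference graphs; once it is in place, the submodularity and the flow formulation are entirely routine.
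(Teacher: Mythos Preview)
Your proof is correct and takes a genuinely different route from the paper's. Both start from Lemma~\ref{Lem:out-stars} and the resulting preassignment, but then diverge. The paper builds an auxiliary bipartite graph $G'$ with one vertex per pair (unassigned item, agent), joins the two copies of each item and the two ends of each remaining star-edge, weights each vertex $(v,i)$ by the number of not-yet-preassigned items that $v$ dominates in $G_i$, and shows that a maximum-weight independent set in $G'$ (polynomial since $G'$ is bipartite) yields an optimal allocation. You instead decompose the total dissatisfaction item by item, eliminate the free-leaf variables, and are left with a submodular pseudo-Boolean function of the double-root variables only, which you minimise by an $s$--$t$ min-cut. Your route exposes more structure---only the double-root items carry genuine decisions, and the residual objective is manifestly submodular---and produces a smaller auxiliary instance; the paper's reduction is quicker to state and sidesteps the item-type case analysis you flag as ``fiddly''. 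Since maximum-weight independent set on bipartite graphs itself reduces to min-cut via K\H{o}nig's theorem, the two approaches are closely related at the bottom, but the reductions and the verification work are different.
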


\begin{proof}
We reduce this problem to the maximum weight independent set problem on a bipartite graph, which is polynomial-time solvable (see, e.g.,~\cite{FF06}).

Without loss of generality assume $K=\{1,2\}$. Given an instance $(G_1,G_2)$ of {\MINSUM}, start by the preassignment $\pi_0$, and let $G_1'$ and $G_2'$ be the subgraphs of $G_1$ and $G_2$, respectively, induced by the yet unassigned items (note that $V(G'_1)=V(G'_2)$ and that the graphs $G_1'$ and $G_2'$ have the same sets of roots, as well as the same sets of leaves).
Next, construct an undirected graph $G'$ by taking the disjoint union of the underlying undirected graphs of $G_1'$ and $G_2'$ and joining by an edge each pair of vertices corresponding to the same item. Formally, we set $V(G')=V(G_1') \times \{1,2\}$ and $E(G')=
\{\{(v,i),(u,i)\} \mid (u,v) \in A(G_i'), i \in \{1,2\}\} \cup \{\{(v,1),(v,2)\} \mid v \in V(G_1')\}$.
It is easy to see that the vertices of $G'$ that correspond to the roots of $G'_1$ together with the leaves of $G'_2$ form an independent set in $G'$.
Symmetrically, the vertices in $G'$ that correspond to the roots of $G'_2$ together with the leaves in $G'_1$ also form an independent set.
Hence, $G'$ is a bipartite graph.

To every vertex $(v,i) \in V(G')$, we assign a weight one larger than the out-degree of $v$ in $G_i$ minus the number of successors of $v$ that were already preassigned. 
See Fig.~\ref{fig:Bipartite-construction} for an example.
This weight represents the additional satisfaction obtained by $i$ if it is assigned item $v$.

\begin{figure}
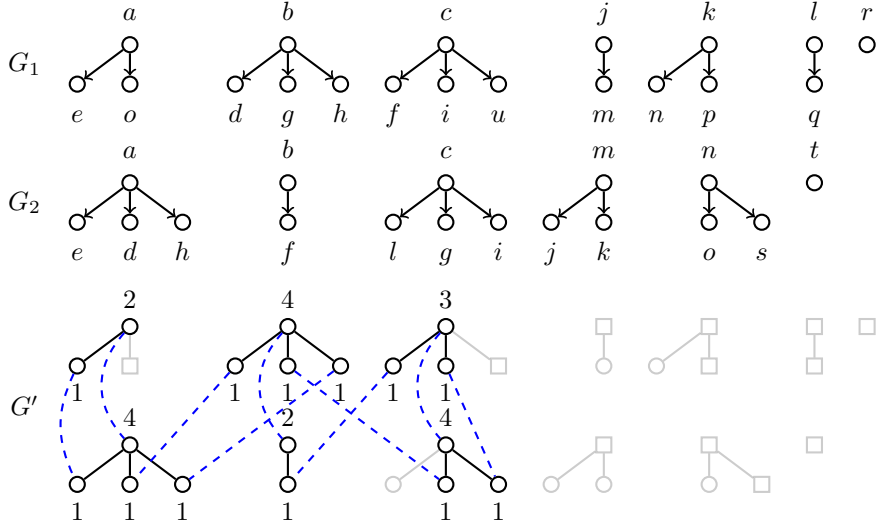

\centering
\pictureoutstar
\caption{An example of how to construct $G'$ from two given graphs $G_1$ and $G_2$. Graphs $G_1$ and $G_2$ have labeled vertices while in $G'$ only the weight of the vertices is displayed. Note that in $G'$ the upper part represents 
$G_1'$ and the lower part of the graph represents 
$G_2'$. Furthermore, the square vertices that appear transparent were assigned to the corresponding agent by the preassignment, whereas the round transparent vertices correspond to items that were given to the other agent.}\label{fig:Bipartite-construction}
\end{figure}

An independent set $S$ in $G'$ induces an allocation of items $\pi$ for $(G_1,G_2)$, as follows. We begin with the preassignment $\pi=\pi_0$.
Furthermore, if $(v,i)\in S$, then we assign item $v$ to agent $i$ in $\pi$. The edges of the form $\{\{(v,1),(v,2)\} \mid v \in V(G_1')\}$ prevent the assignment of the same item twice.
Conversely, an allocation $\pi$ of $(G_1,G_2)$, such that no two items are assigned to the same agent are comparable (i.e., root and leaves of a same out-star of agent $i$ are not both assigned to $i$), induces an independent set $S$ in $G'$.

We now show that a maximum weight independent set $S$ in $G'$ 
corresponds to
an optimal allocation of items $\pi$ for $(G_1,G_2)$, and vice versa.
Note that the weight of each vertex $(v,i)$, by construction, 
equals the number of items which were not preassigned that are dominated by $v$ in $G_i$.
Observe that if no vertex of $G'$ is assigned to any agent, then the total dissatisfaction of the corresponding allocation $\pi_0$ would equal
the sum of the weights of the vertices of $G'$ corresponding to roots in $G'_1$ and $G'_2$.
Take any independent set $S$ of $G'$ and consider the corresponding assignment $\pi$.
Note also that $\sum_{i \in K}\delta_{\pi_0}(i)$ is also equal to $w(S)+\sum_{i \in K}\delta_{\pi}(i)$ (this can easily be shown by induction on $|S|$).
From here we can see that as we maximize $w(S)$ we also minimize 
$\sum_{i \in K}\delta_{\pi}(i)$.
Hence, $S$ is optimal for the maximum weight independent set problem if and only if $\sum_{i \in K}\delta_{\pi}(i)$ is optimal for {\MINSUM}.

The polynomial reduction from {\MINSUM} to bipartite maximum weight
independent set is now complete. Thus, it follows that {\MINSUM} for two agents with disjoint unions of out-stars is also polynomial.
\qed
\end{proof}

\section{Directed matchings as preference graphs}
\label{sec:matching}

After the strikingly negative results of Section~\ref{sec:hardness}, where it was shown that even elementary graphs, such as out-stars and out-trees, imply the NP-completeness of our two problems, we now consider 
the basic graph structure of directed matchings.
Indeed, the pairwise comparison of two items with no connection to any other items seems to be one of the most basic possibilities of considering any preferences at all.
Also, in decision science, the pairwise comparison of options constitutes the elementary building block for multi-criteria decision making methods, such as outranking methods~\cite{mcda-book}.

We exhibit an interesting difference between the two objectives.
While {\MINSUM} is shown to be polynomially solvable if all $G_i$ are directed matchings, the same situation turns out to be still NP-complete for {\MINMAX}.
Nonetheless, we obtain a positive result for the Min-Max objective for the special case of $k=2$ agents.

\begin{theorem}\label{th:minSUM-matching}
When each graph $G_i=(V_i, A_i)$ is a directed matching {\MINSUM} can be solved in polynomial time.
\end{theorem}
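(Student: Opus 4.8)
The plan is to reduce {\MINSUM} with directed matchings as preference graphs to the problem of computing a maximum-weight matching in a bipartite graph, which is well known to be solvable in polynomial time. Recall from Section~\ref{sec:framework} that $\delta_\pi(i) = |V_i| - s_\pi(i)$, so minimizing $\sum_{i\in K}\delta_\pi(i)$ is equivalent to maximizing the total satisfaction $\sum_{i\in K}s_\pi(i)$; I will compute an allocation attaining the latter and then subtract from $\sum_{i\in K}|V_i|$.

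The first step is a per-arc analysis. Fix $i\in K$ and an arc $e=(a,b)\in A_i$. Since $G_i$ is a directed matching, $a$ is the tail and $b$ the head of $e$, every vertex has degree one, so ${\it pred}_i(a)=\emptyset$, ${\it pred}_i(b)=\{a\}$, and neither $a$ nor $b$ lies on any other arc of $G_i$. Hence the two items $a,b$ jointly contribute to $s_\pi(i)$ the amount $2$ if $a\in\pi(i)$ (then $a$ dominates both $a$ and $b$), the amount $1$ if $a\notin\pi(i)$ but $b\in\pi(i)$, and $0$ otherwise; and because of the degree-one property these contributions over the arcs of $G_i$ are mutually independent and independent of the other agents' graphs.

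Guided by this, I would build a bipartite graph $H$ with one vertex $(i,e)$ for every pair consisting of an agent $i$ and an arc $e\in A_i$, one vertex for every item $v\in V$, and, for each $e=(a,b)\in A_i$, an edge $\{(i,e),a\}$ of weight $2$ and an edge $\{(i,e),b\}$ of weight $1$. Then I would establish a weight-preserving correspondence in both directions. Given a matching $M$ in $H$, define an allocation $\pi_M$ by assigning item $v$ to agent $i$ whenever the (unique) edge of $M$ at $v$ joins it to some vertex $(i,e)$; this is a valid allocation because each item is matched at most once, and the per-arc analysis gives $\sum_{i\in K}s_{\pi_M}(i)=w(M)$. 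Conversely, given any allocation $\pi$, match $(i,e)$ (with $e=(a,b)$) to $a$ if $a\in\pi(i)$, to $b$ if $a\notin\pi(i)$ and $b\in\pi(i)$, and leave it unmatched otherwise; this edge set is a matching precisely because an item used by two arcs of the same agent would contradict the degree-one condition, while an item in $\pi(i)\cap\pi(j)$ with $i\neq j$ would contradict disjointness of the allocation, and its total weight is exactly $\sum_{i\in K}s_\pi(i)$. Therefore $\max_\pi\sum_{i\in K}s_\pi(i)=\max_M w(M)$, an optimal allocation is read off from a maximum-weight matching, and the minimum total dissatisfaction is $\sum_{i\in K}|V_i|-\max_M w(M)$.

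Finally, $H$ has $\sum_{i\in K}|A_i|\le kn/2$ vertices on one side, $n$ on the other, and $O(kn)$ edges, hence polynomial size, and maximum-weight bipartite matching (equivalently, a min-cost flow or assignment problem) runs in polynomial time; so the whole algorithm is polynomial. I do not expect a genuine obstacle: the one point requiring care is the verification that the allocation-to-matching map really yields a matching, which is exactly where the hypothesis that each $G_i$ is a directed matching is used, in line with the NP-hardness of {\MINMAX} for directed matchings and of both problems already for out-stars and out-trees.
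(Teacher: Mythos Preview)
Your proof is correct and, like the paper, reduces {\MINSUM} with directed-matching preferences to a maximum-weight bipartite matching problem. The per-arc analysis and the two-way correspondence between matchings and allocations are sound; in particular, your verification that the allocation-to-matching map yields a valid matching is exactly where the degree-one hypothesis enters, as you note.

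The encodings differ in detail. The paper's auxiliary graph has three vertex classes: one copy $x_j^i$ of each item per agent, one vertex $y_a^i$ per arc, and one vertex $z_\ell$ per item. Edges $\{x_j^i,z_j\}$ carry weight $1$ or $2$ (head or tail), while edges $\{y_a^i,x_j^i\}$ carry a large weight $2|V|$ so that every maximum-weight matching saturates all $y$-vertices, thereby absorbing one endpoint of each arc and preventing both endpoints from being matched into~$Z$. Your construction is leaner: one side consists of pairs $(i,e)$ with $e\in A_i$, the other of items, and the ``at most one endpoint per arc for agent $i$'' constraint is enforced automatically by the matching condition on $(i,e)$, with no need for high-weight penalty edges. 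What the paper's encoding buys is a slightly more mechanical verification (the weight of an $S$-edge literally is the number of dominated vertices); what yours buys is a smaller graph and a cleaner bijection, since you do not have to argue separately that the penalty edges are always saturated in an optimum.
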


\begin{proof}
To solve {\MINSUM} we will compute a maximum weight matching on an auxiliary undirected \newSL{bipartite} graph $H$. 
Its vertex set $V(H)= X\cup Y\cup Z$ 
consists of a vertex for every vertex in $V_i$, 
i.e., $X=\{x_j^i\mid i\in K, j\in V_i\}$, 
a vertex for every arc in $A_i$, i.e., 
$Y=\{y_a^i\mid i\in K,a\in A_i\}$, 
and a vertex for every item, i.e., $Z=\{z_\ell\mid \ell\in V\}$.
The edge set $E(H)= S\cup T$ 
contains edges connecting every item vertex in $Z$ with all its copies in $X$, i.e., $S=\{\{x_j^i,z_j\} \mid i\in K, j\in V_i\}$, 
and edges connecting the two endpoints of a matching arc in $A_i$ with the corresponding vertex in $Y$, i.e., 
for each agent $i\in K$, and each arc $(a,a') \in A_i$, set $T$ contains edges $\{y_{(a,a')}^i,x_a^i\}$ and $\{y_{(a,a')}^i,x_{a'}^i\}$.
We claim that every matching $M$ in $H$ implies a feasible allocation of items to the agents by assigning item $j$ to agent $i$ if $e=\{x_j^i,z_j\}\in M$. 
Since there can be at most one edge in $M$ joining a vertex $z_j$ in $Z$ to a vertex $x_j^i$ in $X$, every item is allocated at most once.
To avoid that both endpoints of an arc in $A_i$ are allocated to $i$,
the edges in $T$ are assigned a very high weight.
Then, every maximum weight matching will contain one of the two edges in $T$ incident with a vertex $y^i_a$ in $Y$, which forbids that the other endpoint in $X$ corresponds to an item allocated to $i$.

The following weights are assigned to each $e\in E(H)$:
\[w(e)=\begin{cases}
        1 & \text{if } e= \{x_j^i,z_j\}\in S \text{ and } j \text{ is the head of an edge in $A_i$,}\\
        2 & \text{if } e=\{x_j^i,z_j\}\in S \text{ and } j \text{ is the tail of an edge in $A_i$,}\\
        2|V| & \text{if } e\in T.
  \end{cases}
\]
      
The weights on the edges in $S$ correspond to the number of vertices that each vertex from $V_i$ dominates in $G_i$. 
Hence, a maximum weight matching will 
maximize the total satisfaction and thus minimize the total dissatisfaction.
\newSL{It holds that the maximum total satisfaction is equal to $w(M) - 2|T| \cdot |V|$ and hence the 
minimum total dissatisfaction is equal to $\sum_{i \in K} |V_i| + 2 |T| \cdot |V| - w(M)$.}
\qed 
\end{proof}

\begin{theorem}\label{th:hard-A-matching}
{\MINMAX} is NP-complete, even if each graph $G_i$ is a directed matching.
\end{theorem}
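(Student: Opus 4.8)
Membership in NP is clear: an allocation $\pi$ is a polynomial-size certificate, and for a directed matching the dissatisfaction of agent $i$ is simply the sum over its arcs $(a,b)$ of the term that equals $0$ if $a\in\pi(i)$, equals $1$ if $a\notin\pi(i)$ but $b\in\pi(i)$, and equals $2$ otherwise. For the hardness, the plan is to reduce from 3X3C. Given an instance with element set $X$, $|X|=3q$, and a collection $C=\{C_1,\dots,C_p\}$ of $3$-element subsets (so $p=3q$, each element in exactly three sets), I would introduce one \emph{element item} $e_x$ for each $x\in X$ and one \emph{set agent} $S_j$ for each $C_j$. In $G_{S_j}$ the three items $e_x$ with $x\in C_j$ would appear as tails of arcs whose heads are private dummy items, so that each $e_x$ is genuinely contested by exactly the three set agents whose set contains it. Around this core I would add calibrating gadgets — private dummy arcs to fix the number of arcs of each agent, plus a small family of auxiliary items (and, if needed, auxiliary agents) realizing a ``cap'' — and choose the threshold $d$ so that in every feasible allocation each $S_j$ is forced into one of exactly two regimes: \emph{active}, in which it receives all three of its element items, or \emph{inactive}, in which it receives none of them, with the cap allowing at most $p-q$ inactive agents.

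Assuming the calibration can be made to work, the equivalence would follow by routine bookkeeping. First I would observe that in a feasible allocation at least $q$ set agents are active, and that two active set agents whose sets share an element $x$ would both have to receive $e_x$, which is impossible; hence the active sets are pairwise disjoint subsets of $X$, and being triples with $|X|=3q$ there are at most $q$ of them — so exactly $q$, and they partition $X$, i.e.\ form an exact cover. Conversely, given an exact cover $C'$, I would activate the agents of $C'$, put each of the remaining $p-q$ agents into the inactive configuration, and check that every dissatisfaction stays at most $d$. Both directions rely only on the elementary fact that in a directed matching the arcs contribute independently to dissatisfaction — a grabbed tail saves $2$ units, a grabbed head saves $1$ — and that a private dummy arc is always satisfiable at no cost.

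The delicate part, and the reason this case is harder than the out-star and out-tree reductions of Section~\ref{sec:hardness} while being genuinely different from the polynomially solvable \MINSUM\ case of Theorem~\ref{th:minSUM-matching}, is precisely the design and calibration of these gadgets. Directed matchings are rigid: there are no directed paths of length two, so dissatisfaction is always a sum of independent terms in $\{0,1,2\}$, and ``all-or-nothing'' behaviour cannot be forced through a single long structure. What makes hardness possible at all is the asymmetry between a tail (worth $2$) and a head (worth $1$); the cap mechanism has to be engineered so that an inactive agent still pays exactly the right price, and $d$ must be tuned so that both regimes fit inside the budget while every intermediate configuration — an agent receiving exactly one or exactly two of its element items — exceeds it. I expect making this arithmetic close (how many dummy arcs each agent receives, and which auxiliary agents are needed to absorb the slack) to be the main obstacle; the remainder is the standard two-way translation between exact covers and feasible allocations sketched above.
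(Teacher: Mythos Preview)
Your proposal is a plan rather than a proof --- you explicitly defer the construction of the gadgets --- but more importantly, the mechanism you sketch cannot work as stated. You want to choose $d$ so that a set agent receiving exactly one or two of its element items exceeds the budget while both extremes (zero or three) stay within it. In a directed matching, however, the dissatisfaction of an agent is a sum of independent arc-terms, each monotone non-increasing in the set of items the agent receives. If the element items are tails with private dummy heads, receiving $r$ of them contributes exactly $3-r$ to the dissatisfaction; any cap structure you add contributes another monotone term. There is no choice of $d$ making the value at $r=1$ or $r=2$ strictly exceed the value at $r=0$: monotonicity forbids the ``valley'' shape you need. A counting-style cap (scarce shared cap items summing with element items to a fixed total) does force tightness in aggregate, but still permits an agent to take one element and compensate with fewer cap items, so the partition into triples is not enforced.

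The paper's reduction sidesteps this entirely. It does \emph{not} try to make intermediate states of a single agent infeasible by local arithmetic. Instead it builds a cascade of auxiliary agents and shared items: a large family of agents $D_j$ is forced to absorb all of a pool $h_1,\ldots,h_{\ell+1}$; with those gone, a distinguished agent $F$ must take at least $p/3$ of the items $\{1,\ldots,p\}$; whenever $F$ takes item $j$, an intermediary agent $B_j$ is forced (again by the $h$-items being unavailable) to take two further items $b_j^0,b_j^1$; and only then is the set agent $C_j$ --- whose matching uses $b_j^0,b_j^1$ as tails over two of the three elements of $C_j$ --- forced to grab all three element items to meet the bound $\ell$. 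The all-or-nothing behaviour is imposed \emph{externally}, through competition for shared items across a chain of agents, not by the shape of any one agent's budget. Your outline correctly diagnoses the rigidity of directed matchings but is missing precisely this cascading idea; the obstacle you anticipated as ``arithmetic'' is in fact structural.
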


\begin{proof} We again reduce from 3X3C. Given an instance $\mathcal{J}$ of 3X3C with a set $X$ of elements and a collection $C=\{C_{1},\ldots,C_{p}\}$ of $3$-element subsets of $X$, let $\ell=\frac{4p}{3}$. 
We may assume without loss of generality that $p\geq 6$ and thus $\ell \geq 8$. We construct an instance $\mathcal{I}$ of the  {\MINMAX}
by \newwSL{introducing the items $V=X\cup\{h_{j}\mid1\leq j\leq  \ell +1\}\cup\{j,b_j^0,b_j^1,e_j\mid 1\leq j \leq p\} \cup \{a_{j}\mid1\leq j\leq  \ell +1\} $, and the set $K$ of agents made up of agents $D_{j}$ for $1\leq j \leq \ell +1$, agent $F$, and the agents $B_j, C_j$ for  
$1\leq j \leq p$; their graphs are displayed in Figure~\ref{fig:MinMax-Matching} where w.l.o.g.\ we assume set  $C_j=\{x,y,z\}$.
We ask whether there is an allocation with dissatisfaction of at most $\ell$ per agent.
}
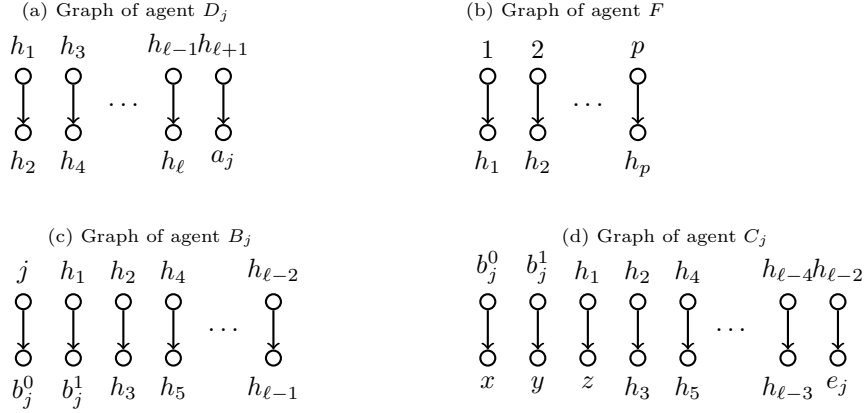
\begin{figure}[htbp]
	\centering
	\begin{tikzpicture}[vertex/.style={inner sep=2pt,draw,circle},yscale=0.75,xscale=0.95]
	
	\begin{scope}[xshift=-4.5cm]
	\node[vertex, label=above:$ h_1 $] (1) at (2.5,5) {};
	\node[vertex, label=below:$ h_2 $] (2) at (2.5,4) {};
	\node[vertex, label=above:$ h_3 $] (3) at (3.2,5) {};
	\node[vertex, label=below:$ h_4 $] (4) at (3.2,4) {};
	\node[] (5) at (3.9,4.5) {$\ldots$};
	\node[vertex, label=above:$ h_{\ell-1} $] (6) at (4.6,5) {};
	\node[vertex, label=below:$ h_{\ell} $] (7) at (4.6,4) {};
	\node[vertex, label=above:$ h_{\ell+1} $] (8) at (5.3,5) {};
	\node[vertex, label=below:$ a_j $] (9) at (5.3,4) {};
	    \path[->] (3) edge (4);
	    \path[->] (6) edge (7);
	    \path[->] (8) edge (9);
	    \path[->] (1) edge (2);
	\node at (3.95,6) {\parbox{0.3\linewidth}{\subcaption{Graph of agent $D_j$}\label{subfig:a}}};
	\end{scope}

	\begin{scope}[xshift=2cm]
	\node[vertex, label=above:$ 1 $] (1) at (2.5,5) {};
	\node[vertex, label=below:$ h_{1} $] (2) at (2.5,4) {};
	\node[vertex, label=above:$ 2 $] (3) at (3.2,5) {};
	\node[vertex, label=below:$ h_{2} $] (4) at (3.2,4) {};
	\node[] (5) at (3.9,4.5) {$\ldots$};
	\node[vertex, label=above:$ p $] (6) at (4.6,5) {};
	\node[vertex, label=below:$ h_{p} $] (7) at (4.6,4) {};
	    \path[->] (3) edge (4);
	    \path[->] (6) edge (7);
	    \path[->] (1) edge (2);
    \node at (3.6,6) {\parbox{0.3\linewidth}{\subcaption{Graph of agent $F$}\label{subfig:c}}};
	\end{scope}%
	
	\begin{scope}[xshift=-4.5cm,yshift=-4cm]
	\node[vertex, label=above:$ j $] (10) at (2.5,5) {};
	\node[vertex, label=below:$ b_j^0 $] (11) at (2.5,4) {};
	\node[vertex, label=above:$ h_1 $] (1) at (3.2,5) {};
	\node[vertex, label=below:$ b_j^1 $] (2) at (3.2,4) {};
	\node[vertex, label=above:$ h_2 $] (3) at (3.9,5) {};
	\node[vertex, label=below:$ h_3 $] (4) at (3.9,4) {};
	\node[] (5) at (5.3,4.5) {$\ldots$};
	\node[vertex, label=above:$ h_4 $] (6) at (4.6,5) {};
	\node[vertex, label=below:$ h_5 $] (7) at (4.6,4) {};
	\node[vertex, label=above:$ h_{\ell-2} $] (8) at (6,5) {};
	\node[vertex, label=below:$ h_{\ell-1} $] (9) at (6,4) {};
	    \path[->] (3) edge (4);
	    \path[->] (6) edge (7);
	    \path[->] (8) edge (9);
	    \path[->] (10) edge (11);
    	\path[->] (1) edge (2);
	\node at (4.3,6) {\parbox{0.3\linewidth}{\subcaption{Graph of agent $B_j$}\label{subfig:d}}};
	\end{scope}

	\begin{scope}[xshift=2cm,yshift=-4cm]
	\node[vertex, label=above:$ b_j^0 $] (10) at (2.5,5) {};
	\node[vertex, label=below:$ x $] (11) at (2.5,4) {};
	\node[vertex, label=above:$ b_j^1 $] (1) at (3.2,5) {};
	\node[vertex, label=below:$ y $] (2) at (3.2,4) {};
	\node[vertex, label=above:$ h_1 $] (3) at (3.9,5) {};
	\node[vertex, label=below:$ z $] (4) at (3.9,4) {};
	\node[] (5) at (5.9,4.5) {$\ldots$};
	\node[vertex, label=above:$ h_2 $] (6) at (4.6,5) {};
	\node[vertex, label=below:$ h_3 $] (7) at (4.6,4) {};
	\node[vertex, label=above:$ h_4 $] (12) at (5.3,5) {};
	\node[vertex, label=below:$ h_5 $] (13) at (5.3,4) {};
	\node[vertex, label=above:$ h_{\ell-4} $] (8) at (6.7,5) {};
	\node[vertex, label=below:$ h_{\ell-3} $] (9) at (6.7,4) {};
	\node[vertex, label=above:$ h_{\ell-2} $] (14) at (7.4,5) {};
	\node[vertex, label=below:$ e_j $] (15) at (7.4,4) {};
	    \path[->] (3) edge (4);
	    \path[->] (6) edge (7);
	    \path[->] (8) edge (9);
	    \path[->] (10) edge (11);
	    \path[->] (12) edge (13);
	    \path[->] (14) edge (15);
    	\path[->] (1) edge (2);
    \node at (5,6) {\parbox{0.3\linewidth}{\subcaption{Graph of agent $C_j$}\label{subfig:d}}};
	\end{scope}
	
	\end{tikzpicture}\caption{Graphs of the agents in the proof of Theorem~\ref{th:hard-A-matching}.}\label{fig:MinMax-Matching}
\end{figure}
  Observe that $\ell$ is even since $p$ is a multiple of $3$. 
  \newwSL{Using this construction it can be shown that $C$ contains an exact cover of $X$  if and only if $\mathcal{I}$
 admits an allocation $\pi$ with $\max_{i \in K} \diss_{\pi}(i) \leq \ell$.}
  
We claim that $C$ contains an exact cover of $X$  if and only if $\mathcal{I}$ admits an allocation $\pi$ with $\max_{i \in K} \diss_{\pi}(i) \leq \ell$. \\
 
 Assume first that $\pi$ is such an allocation. In order to respect the bound $\ell$, each of the agents $D_j$, $1\leq j \leq \ell +1$ must receive at least one of $h_1,\ldots,h_{\ell+1}$ under allocation $\pi$. 
 Hence, all of the items $h_j$ are already allocated. 
 Therefore, agent $F$ has to receive at least $\frac{p}{3}$ items of $1,\ldots, p$ because otherwise her dissatisfaction would exceed $\frac{2p}{3}\cdot 2=\ell$.  
 Now, if item $j$ is allocated to agent $F$, agent $B_j$ needs to receive both $b_j^0$ and $b_j^1$; thus, agent $C_j$ (representing set $C_j=\{x,y,z\}$ in instance $\mathcal{J}$ of 3X3C) needs to receive the items $x,y,z$, and $e_j$. Since (at least) $\frac{p}{3}$ of the items $j$ are allocated to agent $F$, this means that there must be exactly $\frac{p}{3}$ agents $C_j$ who receive all three items that make up the set of the same label in instance $\mathcal{J}$ of 3X3C. Hence, the  respective sets form an exact cover in instance $\mathcal{J}$ of 3X3C.
 
Assume now that $Y$ is an exact cover of $X$ in $C$. We derive an allocation $\pi$ as follows: 
 \begin{itemize}
  \item for each $1\leq j \leq \ell +1$, assign to agent $D_j$ items $h_j$ and $a_j$;
  \item for all items $j$ with $C_j \in Y$: give $j$ to agent $F$, $b_j^0$ and $b_j^1$ to agent $B_j$, and give item $e_j$ plus the three items corresponding to the elements that make up set $C_j$ to agent $C_j$;
  \item for all items $j$ with $C_j \notin Y$: give $j$ to agent $B_j$, and give the items $b_j^0, b_j^1, e_j$ to agent $C_j$.
 \end{itemize}
Since agent $F$ receives $\frac{p}{3}$ of the items $1, \ldots, p$ her dissatisfaction is exactly $\frac{2p}{3}\cdot 2=\ell$. It is not difficult to verify that the dissatisfaction of the remaining agents is at most $\ell$ as well. 
\qed 
\end{proof}

\medskip
Given the negative result of the above theorem we give a complementing positive result for {\MINMAX} below. 
Namely, if we again restrict the number of agents to two, we have a positive counterpart to Theorem~\ref{th:hard-A-matching}.

In the proof of the following theorem we will make use of the notation $S + v$ for the sum of a set $S\subseteq \mathbb{Z}^2$ of ordered pairs with an ordered pair $v\in \mathbb{Z}^2$, defined as $S+v := \{x + v \colon x \in S\}$.

\begin{theorem}\label{th:A-matching-2}
When $k=2$ and both preference graphs are directed matchings, {\MINMAX} can be solved in polynomial time.
\end{theorem}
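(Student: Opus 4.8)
The plan is to exploit the fact that the superposition of two matchings has an extremely simple shape. Throughout write $\delta_i$ for $\diss_\pi(i)$. First I would normalize the instance: moving an item from being unassigned to one of the agents that approve it never increases that agent's dissatisfaction and leaves the other agent's dissatisfaction unchanged, so there is an optimal allocation in which every item of $V_1\cup V_2$ is assigned to one of the (at most two) agents that approve it. Consequently a \emph{personal} item (one lying in exactly one of $V_1,V_2$) is forced to its owner, and only the \emph{shared} items, those in $V_1\cap V_2$, carry a genuine binary choice.

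The structural observation is the following. Since $G_1$ and $G_2$ are directed matchings with no isolated vertices, in the multigraph $H$ obtained by superimposing the underlying undirected edge sets of $G_1$ and $G_2$ every shared item has degree exactly $2$ and every personal item has degree exactly $1$; hence $H$ is a disjoint union of paths, whose endpoints are personal items and whose interior vertices are shared, together with cycles all of whose vertices are shared. Moreover, for a fixed allocation the contribution to $\delta_1$ from a $G_1$-arc $(a,b)$ is $0$ if $a\in\pi(1)$, is $1$ if $a\notin\pi(1)$ but $b\in\pi(1)$, and is $2$ otherwise (and symmetrically for $G_2$-arcs and $\delta_2$), so $\delta_1$ and $\delta_2$ both decompose as sums of per-component contributions.

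For each component $P$ of $H$ I would compute its trade-off curve $f_P$, where $f_P(x)$ is the least possible contribution of $P$ to $\delta_2$ among assignments of $P$'s vertices whose contribution to $\delta_1$ equals $x$, by a dynamic program that walks along $P$ and branches at each vertex on its (forced or free) assignment; the state records the assignment of the current vertex together with the $\delta_1$-contribution accumulated so far, and the stored value is the smallest accumulated $\delta_2$-contribution, the number of states being polynomial because the $\delta_1$-contribution of $P$ is at most its number of arcs. Cycles are handled by additionally branching on the assignment of one chosen vertex and then running the path DP. A second dynamic program combines the curves: processing components one at a time, it maintains $T[x]$, the minimum total $\delta_2$ over the processed components subject to their total $\delta_1$-contribution being $x$, updating $T$ by a $(\min,+)$-convolution with the next $f_P$. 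The optimum of {\MINMAX} then equals $\min_x \max\{x,\,T[x]\}$ after all components are processed, and an optimal allocation is recovered by backtracking through both DPs; the whole procedure runs in time polynomial in $n$.

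I expect the only real obstacle to be the careful bookkeeping in the per-component dynamic program: one must track the orientation of each arc (which endpoint is the head), honour the forced assignments at the ends of a path, and deal with degenerate components such as a single arc joining two personal items of the same agent, or two parallel arcs between two shared items. Once the decomposition into paths and cycles and the additivity of $\delta_1,\delta_2$ over components are in place, correctness and the polynomial running-time bound are routine.
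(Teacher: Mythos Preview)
Your proposal is correct and follows essentially the same strategy as the paper: decompose the union of the two matchings into alternating paths and cycles, run a per-component dynamic program over the two agents' dissatisfaction contributions, and then merge the component results (you via a $(\min,+)$-convolution of trade-off curves, the paper via Minkowski-type sums of the full sets of achievable $(d_1,d_2)$ profiles, bounded by $(n+1)^2$---which amounts to the same thing). The paper omits your explicit normalization step and instead absorbs the handling of personal endpoints into the DP recursion, but this is only a presentational difference.
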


\begin{proof}
Let $K = \{1,2\}$.
First note that since $G_1$ and $G_2$ are directed matchings, the underlying undirected graph of $G = G_1 \cup G_2$ (including possible multi-edges) is a collection of (agent~1)-(agent~2) alternating cycles (including cycles with two vertices) and paths (including single-edge paths). 
Hence, in the following proof we call the directed counterparts in $G$ of these cycles and paths also cycles and paths (even though they are not necessarily cycles and paths in the usual directed sense).

In the first part of the proof we will show how to obtain the set of all possible dissatisfaction profiles $(d_1,d_2)$ of the two agents for one path and then also for one cycle using a dynamic programming approach.
In the second part we will then show how to combine these sets of dissatisfaction profiles for all the paths and cycles to obtain all dissatisfaction profiles with respect to all items.

Given a path $P =(v_1, \dots, v_\ell)$, 
note that all vertices of $P$ must be contained in both $G_1$ and $G_2$ except $v_1$ and $v_\ell$.
For $j\in \{1,\ldots,\ell\}$, we denote by $d_P(j,i)$ the set of dissatisfaction profiles of the two agents assuming the graph is only the subpath $P_j$ of $P$ consisting of the first $j$ vertices and the $j$-th vertex is assigned to agent $i \in \{1,2\}$.
Note that if $(v_j,v_{j+1})$ is an arc of $G_i$, then this arc is not yet considered here, i.e., vertex $v_j$ is treated like an isolated vertex for agent $i$ at this point.
Now it is easy to see that $d_P(1,i) = \{(d_1,d_2)\}$,
where for $q=1,2$ we set $d_q = 1$ if $i \neq q$ and $v_1 \in V(G_q)$, otherwise $d_q = 0$.
Using a dynamic programming technique we continue to process the vertices of $P$ in increasing order of indices and compute all dissatisfaction profiles of the corresponding subpaths.
Formally, for $j>1$, given $d_P(j-1,1)$ and $d_P(j-1,2)$ we 
can compute $d_P(j,i)$ by adding $v_j$ to $P_{j-1}$ to obtain $P_j$.
To do so, we have to consider four cases:
the newly considered arc between $v_{j-1}$ and $v_j$ can be contained either in $G_1$ or in $G_2$ and can be directed either as $(v_{j-1}, v_j)$ or $(v_{j}, v_{j-1})$.
Suppose first that $j<\ell$, that is, $v_j$ is an internal vertex of $P$ (in which case $v_j$ is in both $G_1$ and $G_2$).
If the arc between $v_{j-1}$ and $v_j$ is in $G_1$ and equals $(v_{j-1}, v_j)$ we have that 
$d_P(j,1) = (d_P(j-1,1) \cup d_P(j-1,2)) + (0,1)$ and
$d_P(j,2) =  d_P(j-1, 1) \cup (d_P(j-1,2)+(1,0))$.
If the arc is in $G_1$ but equals $(v_j, v_{j-1})$ we have that 
$d_P(j,1) = (d_P(j-1,1) + (0,1)) \cup (d_P(j-1,2) + (-1,1))$ and
$d_P(j,2) = (d_P(j-1,1) \cup d_P(j-1,2)) + (1,0))$.
The cases when the arc is in $G_2$ can be handled symmetrically.
For the last vertex $v_j=v_\ell$ the computation can be easily adapted.
For example, if the last arc is in $G_1$, then it suffices to replace any additive term of the form $(x,1)$ in the above formulas with term $(x,0)$ (since vertex $v_\ell$ does not belong to $G_2$, and hence the corresponding item cannot count towards dissatisfaction of agent $2$).

For a cycle $C = v_0, v_1, \dots, v_\ell$ one decides at the beginning to which agent $a$ the item of vertex $v_0$ is assigned to, and keeps track of this decision.
Formally, we consider the set of dissatisfaction profiles $d_C(a, j, i)$  for the two agents assuming the graph consists 
only of the subpath $C_j$ of $C$ with vertices $v_0, \dots, v_j$, where $v_0$ is assigned to $a$ and $v_j$ is assigned to $i$.
These sets can be computed in the same way as $d_P(j,i)$ above, and we can easily compute the dissatisfaction profiles of $C$ from $d_C(a,j,i)$, by considering the effect of the arc between $v_\ell$ and $v_0$, since all possible dissatisfaction profiles for all assignments of the two vertices are known.

After computing all sets of dissatisfaction profiles $d_P$ and $d_C$ for all paths and cycles we can compute the set $d_G$ of all dissatisfaction profiles  of the whole graph $G$.
Starting with $d_G=\emptyset$ we simply go through all paths and cycles one after the other. 
For each of them we take every element of the corresponding set of dissatisfaction profiles and add it to every element of the current set $d_G$, resulting in an updated set $d_G$.

Note that all these dynamic programs can be computed in polynomial time, since for any instance of {\MINMAX} with $k$ agents the number of all possible dissatisfaction profiles is bounded by $(n+1)^k$.
Here we have $k=2$ and thus $|d_H| \leq (n+1)^2$ for all subgraphs $H$ considered in this proof.
\qed 
\end{proof}

\begin{remark}\label{rem:k2t2}
We point out that Theorem~\ref{th:A-matching-2} also follows from the more general Theorem~\ref{th:tw-minmax} (stated in  Section~\ref{sec:junction}). However, in contrast to the proof of  Theorem~\ref{th:tw-minmax}, the above proof of Theorem~\ref{th:A-matching-2} is based on elementary arguments and does not rely on the heavy machinery of Courcelle's theorem.
\end{remark}

\section{Paths as preference graphs}
\label{sec:special}

A classical preference structure is given by a total order of objects. 
In our setting, this corresponds to a path as a preference graph.
Allowing a subset of items to be ranked, the path can have arbitrary length.
For this case, both our objectives permit polynomial-time solutions.
For {\MINSUM} the result also applies if every agent can partition the items into incomparable subsets and gives a strict ordering, i.e., a path, for every subset.

\begin{theorem}\label{th:A-paths}
When each graph $G_i$ is a path, {\MINMAX} and {\MINSUM} can be solved in polynomial time.
\end{theorem}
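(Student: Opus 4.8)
The plan is to reduce both problems to (weighted) bipartite matching after one structural observation. Write each preference graph as a directed path $G_i=(v^i_1,\dots,v^i_{\ell_i})$ with arcs $(v^i_j,v^i_{j+1})$, so that $v^i_1$ is the most preferred item and $v^i_j$ dominates $v^i_{j'}$ in $G_i$ exactly when $j\le j'$. Consequently, for any allocation $\pi$ only the smallest index $k_i:=\min\{\,j : v^i_j\in\pi(i)\,\}$ of an item received by agent $i$ matters (set $k_i:=\ell_i+1$ if $\pi(i)\cap V_i=\emptyset$): the vertices of $G_i$ dominated by the items in $\pi(i)$ are precisely $v^i_{k_i},\dots,v^i_{\ell_i}$, hence $\delta_\pi(i)=k_i-1$. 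In particular there is always an optimal allocation that gives each agent at most one item, and only items from its own vertex set.

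First I would handle {\MINMAX}. For a fixed threshold $d$, an agent $i$ with $\ell_i\le d$ is automatically satisfied, while an agent $i$ with $\ell_i>d$ has $\delta_\pi(i)\le d$ if and only if it receives at least one of its top $d+1$ items $v^i_1,\dots,v^i_{d+1}$. So I build the bipartite graph whose one side is the set of ``demanding'' agents (those with $\ell_i>d$), whose other side is $V$, and which has an edge between $i$ and $v^i_j$ for every $j\le d+1$; the given instance is a yes-instance precisely when this bipartite graph admits a matching saturating all demanding agents, which is tested in polynomial time by a maximum bipartite matching computation (equivalently, Hall's condition). To also produce an optimal allocation it suffices to run this test for $d=0,1,\dots,n$ and take the smallest feasible $d$ (or binary-search on $d$).

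Next I would handle {\MINSUM}. Here I build a single bipartite graph $H$ with the agents on one side, the items on the other, and an edge $\{i,v^i_j\}$ of weight $w(i,v^i_j):=\ell_i-j+1$ for every $i$ and every $j\in\{1,\dots,\ell_i\}$; all weights are positive. A matching $M$ in $H$ induces an allocation $\pi$ by giving item $v^i_j$ to agent $i$ whenever $\{i,v^i_j\}\in M$, and by the structural observation the total satisfaction $\sum_{i\in K}s_\pi(i)$ equals $w(M)$ (an agent unmatched by $M$ receives nothing and contributes $0$). Conversely, from any allocation $\pi$, keeping for each agent only its most preferred received item yields a matching of weight $\sum_{i\in K}s_\pi(i)$. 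Since minimizing total dissatisfaction is equivalent to maximizing total satisfaction, the minimum total dissatisfaction equals $\sum_{i\in K}\ell_i - w(M^*)$ where $M^*$ is a maximum-weight matching of $H$, and such a matching is computable in polynomial time.

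The argument is short because the observation that only an agent's best received item counts collapses the combinatorial core; the only points that need care are the bookkeeping in the two equivalences — in particular the off-by-one between a dissatisfaction of $k_i-1$ and position $k_i$, and the fact that an unassigned agent contributes $0$ to satisfaction but $\ell_i$ to dissatisfaction — and noting that {\MINMAX} reduces to unweighted matching feasibility together with a search over the threshold, whereas {\MINSUM} reduces to a single maximum-weight matching. I do not expect any substantial obstacle.
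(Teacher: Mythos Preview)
Your proposal is correct and follows essentially the same route as the paper: both reduce to bipartite assignment after observing that only an agent's single best received item matters along a path. For {\MINSUM} the two arguments are dual (the paper minimizes dissatisfaction weights via the Linear Sum Assignment Problem, you maximize satisfaction weights); for {\MINMAX} the paper invokes the Linear Bottleneck Assignment Problem as a black box, whereas your threshold-plus-feasibility search is the standard elementary way to solve exactly that bottleneck problem, so the difference is packaging rather than substance.
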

\begin{proof}
For every vertex $v \in V(G_i)$, let $\ell_i(v)$ denote the number of predecessors of $v$ on the path $G_i$ (for the root vertex $r \in V(G_i)$ there is $\ell_i(r)=0$).
For $v\not\in G_i$, we set $\ell_i(v)=|V(G_i)|$.
Define the complete bipartite graph $H=(V, K)$ connecting items to agents with weights $w(v,i) = \ell_i(v)$ for $v\in V$ and $i \in K$.
Every maximum matching in $H$ corresponds to a feasible allocation of items to agents.
Note that for all graphs with $|V(G_i)|>k$, the path $G_i$ can be reduced to its top $k$ elements, since every agent will be assigned at most one item. 

The classical {\em Linear Sum Assignment Problem} (LSAP) asks for a maximum matching with minimum total weight in a bipartite graph.
Thus, LSAP on $H$ will also solve {\MINSUM}.

The {\em Linear Bottleneck Assignment Problem} (LBAP) seeks a maximum matching in a weighted bipartite graph such that the largest weight of a matching edge is as small as possible (see \cite[Sec.~6.2]{assign2012}, \cite{BAP97}).
Clearly, the optimal solution of LBAP on $H$ also minimizes the maximum dissatisfaction over all agents and thus solves {\MINMAX}.
\qed 
\end{proof}

\begin{coro}\label{th:sum-paths}
When each graph $G_i$ is a disjoint union of 
paths {\MINSUM} can be solved in polynomial time.
\end{coro}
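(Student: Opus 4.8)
The plan is to reduce {\MINSUM} with disjoint unions of paths to the single-path case handled by Theorem~\ref{th:A-paths}, via a simple agent-splitting transformation. First I would observe that in any allocation we may assume $\pi(i)\subseteq V_i$ for every agent $i$: an item outside $V_i$ dominates no vertex of $G_i$, so removing it from $\pi(i)$ and leaving it unassigned cannot increase the total dissatisfaction. Hence it suffices to optimise over allocations of this restricted form.

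Next, given an instance with agent set $K$ and preference graphs $G_i$, each a disjoint union of paths, I would build a new {\MINSUM} instance on the same item set $V$ with agent set $K'=\{(i,p)\mid i\in K,\ p\text{ a connected component of }G_i\}$, taking the preference graph of $(i,p)$ to be the path $p$ itself, so that $V_{(i,p)}=V(p)$. The crucial structural fact is that for each fixed $i$ the vertex sets $V(p)$ of the components of $G_i$ are pairwise disjoint and partition $V_i$; this is what makes the correspondence between allocations exact.

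Then I would verify both directions of the correspondence. From an allocation $\pi'$ of the new instance, define $\pi(i)=\bigcup_{p}\pi'((i,p))$; since each item is used at most once by $\pi'$, and since for fixed $i$ the sets $V(p)$ are pairwise disjoint, the sets $\pi(i)$ are pairwise disjoint, so $\pi$ is a valid allocation, and clearly $\delta_{\pi}(i)=\sum_{p}\delta_{\pi'}((i,p))$, so the two instances have the same total dissatisfaction. Conversely, from an allocation $\pi$ with $\pi(i)\subseteq V_i$, assign to agent $(i,p)$ the set $\pi(i)\cap V(p)$; disjointness of the components of $G_i$ again guarantees this is a valid allocation of the new instance with the same total dissatisfaction. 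Therefore the optimal values of the two instances coincide, and an optimal allocation of one yields an optimal allocation of the other in linear time.

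Finally, every preference graph in the new instance is a single path, so Theorem~\ref{th:A-paths} solves it, and exhibits an optimal allocation, in polynomial time; translating back gives an optimal allocation for the original instance, and the whole construction is polynomial (the number of new agents is at most $n$). I do not expect a genuine obstacle here: the only step that needs a moment's care is checking that the ``each item to at most one agent'' constraint transfers faithfully between the two instances, which is precisely where the pairwise disjointness of the path components of each $G_i$ is used. (One could alternatively note that a disjoint union of paths has no junction vertices, so this is the special case $\gamma=0$ of Theorem~\ref{th:constjunc}.)
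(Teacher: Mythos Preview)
Your proposal is correct and follows essentially the same agent-splitting reduction as the paper: replace each agent by one copy per path component and apply Theorem~\ref{th:A-paths}. Your write-up is more careful than the paper's one-line argument in spelling out the two-way correspondence between allocations; the only quibble is that the number of new agents is bounded by $\sum_{i\in K}|V_i|$ rather than by $n$, but this is of course still polynomial in the input size.
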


\begin{proof}
If each $G_i$ consists of $k_i$ paths, it suffices to copy every agent $i$ into $k_i$ agents, each of them associated to exactly one of the paths.
Solving the resulting instance of {\MINSUM} with $\sum_i k_i$ agents as described in Theorem~\ref{th:A-paths} also solves the problem for the collection of paths.
\qed 
\end{proof}

Surprisingly, the straightforward generalization given in Corollary~\ref{th:sum-paths} does not carry over to the case of {\MINMAX}. 
On the contrary, we can show a strong negative result even for the special case where every $G_i$ consists of at most two paths.

\begin{theorem}\label{th:max-paths}
{\MINMAX} is NP-complete, even if each graph $G_i$ consists of at most two paths containing at most five items in total.
\end{theorem}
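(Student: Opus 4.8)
The plan is to reduce again from 3X3C, exploiting the fact that a preference graph consisting of two short paths still lets an agent express two independent "either–or" demands. Given an instance $\mathcal{J}$ of 3X3C with element set $X$, $|X|=3q$, and collection $C=\{C_1,\dots,C_p\}$ (recall $p=3q$), I would create one item for each element of $X$, one item $c_j$ for each set $C_j$, and a pool of \emph{dummy} items $h_1,\dots,h_M$ for a suitably large $M$ together with dummy agents whose sole purpose is to force those items to be consumed elsewhere (exactly as in the proofs of Theorems~\ref{th:hardmax} and~\ref{th:hard-A-matching}). The dissatisfaction threshold $d$ will be chosen so that: (i) each dummy agent must grab ``its'' dummy item, so the $h_r$'s are unavailable; (ii) a ``coverage'' agent is forced to receive enough of the $c_j$'s that at most $\tfrac{2}{3}p$ of them remain; and (iii) each set-agent $C_j$, having a two-path graph, is forced either to receive $c_j$ (on one path) or else to receive all three element-items of $C_j$ (on the other path). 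The usual counting argument—$\tfrac{1}{3}p$ of the $c_j$'s are taken away, $|X|=p$ element-items must all be covered—then forces exactly $q$ of the set-agents to take their triples, and these triples form an exact cover.

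Concretely, I expect each set-agent $C_j=\{x,y,z\}$ to have a preference graph that is the disjoint union of the path $(c_j,h_{r_1},\dots)$ padded with dummy vertices up to length making non-receipt of $c_j$ cost exactly $d+1$ unless compensated, and the path $(w_j, x, y, z)$ (or simply a path on $x,y,z$ plus padding) so that not receiving all of $x,y,z$ on that path costs at least $d+1$ as well — hence the agent must ``satisfy'' one of the two paths completely. The coverage agent and the dummy agents can be given single paths (which is allowed, since ``at most two paths'' subsumes one path), with the coverage agent's path being $(c_1,c_2,\dots,c_p)$-like so that receiving fewer than $\tfrac13 p$ of the top $c_j$'s is too costly; here I would use the trick from Theorem~\ref{th:A-paths} that only the top portion of a long path matters, so the ``at most five items in total'' bound can be met by restricting each agent's graph to the few relevant vertices (root(s) plus immediate successors) after observing that no agent can use more than one item per path. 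The arithmetic for choosing $M$, $d$, and the padding lengths so that all three forcing conditions hold simultaneously while keeping each graph to $\le 5$ vertices across its $\le 2$ paths is the fiddly part, but it is the same style of bookkeeping as in the earlier hardness proofs.

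The main obstacle is the tight combinatorial budget: we need each preference graph to contain \textbf{at most five items total across at most two paths}, which is very restrictive — essentially each path has length at most $4$, or one path of length $\le 3$ and another of length $\le 1$, etc. This means the ``padding with dummies'' device used in Theorems~\ref{th:hardmax} and~\ref{th:hard-A-matching} to blow up the cost of a bad choice is unavailable in its crude form; instead the gap between a good and a bad allocation must be forced by the \emph{structure across many agents} rather than by long padded paths. I would handle this by making the threshold $d$ small (a constant, or something like $2$ or $3$), using many copies of identical ``gadget agents'' to pin down the availability of shared items, and letting the set-agent's two short paths be exactly $(c_j)$ of length $0$ versus the triple $(x,y,z)$ of length $2$ — so the choice is ``receive $c_j$, dissatisfaction $\le d$ on the other path'' versus ``receive the whole triple''. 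Verifying that with such short graphs the reduction still forces an exact cover (rather than admitting spurious allocations that partially satisfy both paths) is where the care is needed, and I would prove the ``only if'' direction by the element-counting argument and the ``if'' direction by exhibiting the explicit allocation induced by a given exact cover, exactly as in the proofs above.
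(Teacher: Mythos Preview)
Your proposal has a genuine structural gap: the 3X3C reduction you sketch cannot be made to work under the ``at most two paths'' constraint, and the reason is intrinsic to paths rather than a matter of bookkeeping.

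On a directed path, receiving the root dominates every other vertex. Consequently, you can never force an agent to receive \emph{all} items on a path; one item (the topmost available one) always suffices. Your set-agent for $C_j=\{x,y,z\}$ with paths $(c_j)$ and $(x,y,z)$ therefore does not face the dichotomy ``receive $c_j$ or receive all of $x,y,z$'': receiving $x$ alone already yields dissatisfaction $1$, and receiving $y$ alone yields $2$, so no small threshold $d$ isolates the intended two options. To make the three elements simultaneously mandatory they would have to be pairwise incomparable, i.e., three separate path-components, which violates the two-path bound. The same obstruction kills your ``coverage agent'': a single agent whose preference graph is a path can profit from at most one received item, so she cannot be forced to absorb $p/3$ of the $c_j$'s the way the out-star root did in Theorem~\ref{th:hardmax}.

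The paper sidesteps this by reducing from $3$-SAT instead. Clause agents then need only \emph{one} of three literals, which a single $3$-vertex path expresses perfectly with threshold $d=2$. The real work is pushed into \emph{variable} gadgets: for each variable $x_i$ there are $m$ agents $a_i^1,\dots,a_i^m$ whose two-path graphs $(z,v_i^{j+1},\ell_i^j)$ and $(v_i^j,\bar\ell_i^j)$ (five items, two paths) are chained cyclically through the shared items $v_i^j$. Three auxiliary agents consume $z$, and then each $a_i^j$ must take exactly one of the pairs $\{v_i^{j+1},\bar\ell_i^j\}$ or $\{v_i^j,\ell_i^j\}$; the cyclic sharing of the $v_i^j$'s forces all $m$ copies to make the \emph{same} choice, encoding a consistent truth value and leaving exactly one of $\ell_i^j,\bar\ell_i^j$ free for the clause agents. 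This cyclic-consistency gadget is the idea your 3X3C sketch is missing.
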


\begin{proof}
Clearly, {\MINMAX} belongs to NP.
To show the NP-hardness, we reduce from the $3$-SAT problem, which is known to be NP-complete~\cite{cook1971complexity}.

Let $\phi$ be a $3$-SAT formula with $n$ variables and $m$ clauses, each containing exactly $3$ literals.
We assume that every clause of $\phi$ contains literals that correspond to different variables.
For convenience, all the upper indices in the proof are assumed to be in the range $\{1,\ldots, m\}$.
We construct an instance $I$ of {\MINMAX} with a bound of $2$ for the maximum dissatisfaction
as follows.
The set of items is $\{z,z',z''\} \cup \bigcup_{1 \leq i \leq n, 1 \leq j \leq m}\{v_i^j, \ell_i^j, \bar\ell_i^j\}$.
We create three agents $b$, $b'$ and $b''$ with same preference graph consisting of a unique path $(z,z',z'')$.
Then, for each variable $x_i$, we create $m$ agents $a_i^1, \dots a_i^m$, who we call \emph{variable agents}, where the preference graph of each $a_i^j$ consists of two paths $(z, v_i^{j+1}, \ell_i^j)$ and $(v_i^j, \bar\ell_i^j)$.
Now, for all $j \in \{1,\dots,m\}$, we consider the $j$th clause $(u_p \vee u_q \vee u_r)$ and create an agent $c_j$, who we call a \emph{clause agent}, with a path $(\tilde{\ell}_p^j,\tilde{\ell}_q^j,\tilde{\ell}_r^j)$ as preference graph where $\tilde{\ell}_p^j$ is the item $\ell_p^j$ if $u_p = x_p$ and $\tilde{\ell}_p^j$ is the item $\bar\ell_p^j$ if $u_p = \bar{x}_p$; the same holds for $\tilde{\ell}_q^j$ and $\tilde{\ell}_r^j$.
See Figure~\ref{fig:3sat construction} for a schematic representation of those preference graphs.
\begin{figure}
    \centering
	\begin{tikzpicture}[vertex/.style={inner sep=2pt,draw,circle},yscale=0.9,scale=0.74]
    \begin{scope}
    \node[vertex,label={right:$\mathstrut z$}] (z) at (0,2) {};
    \node[vertex,label={right:$\mathstrut z'$}] (z') at (0,1) {};
    \node[vertex,label={right:$\mathstrut z''$}] (z'') at (0,0) {};
    \path[->] (z) edge (z');
    \path[->] (z') edge (z'');
	\node at (0,3) {\parbox{0.3\linewidth}{\subcaption{Graph of the agents\\ $b$, $b'$ and $b''$}\label{subfig:agents bs}}};
    \end{scope}
    \begin{scope}[xshift=5.5cm]
    \node[vertex,label={left:$\mathstrut z$}] (z) at (0,2) {};
    \node[vertex,label={left:$\mathstrut v_i^{j+1}$}] (v+1) at (0,1) {};
    \node[vertex,label={left:$\mathstrut \ell_i^j$}] (ell) at (0,0) {};
    \node[vertex,label={right:$\mathstrut v_i^{j}$}] (v) at (1,1) {};
    \node[vertex,label={right:$\mathstrut \bar\ell_i^j$}] (ellbar) at (1,0) {};
    \path[->] (z) edge (v+1);
    \path[->] (v+1) edge (ell);
    \path[->] (v) edge (ellbar);
	\node at (0,3) {\parbox{0.3\linewidth}{\subcaption{Graph of the variable agent $a_i^j$}\label{subfig:agents bs}}};
    \end{scope}
    \begin{scope}[xshift=11cm]
    \node[vertex,label={right:$\mathstrut \bar{\ell}_1^j$}] (z) at (0,2) {};
    \node[vertex,label={right:$\mathstrut \ell_2^j$}] (z') at (0,1) {};
    \node[vertex,label={right:$\mathstrut \ell_3^j$}] (z'') at (0,0) {};
    \path[->] (z) edge (z');
    \path[->] (z') edge (z'');
	\node at (0,3) {\parbox{0.3\linewidth}{\subcaption{Graph of the clause agent $c_j$ for the $j$th clause $(\bar{x}_1 \vee x_2 \vee x_3)$}\label{subfig:agents bs}}};
    \end{scope}
    \end{tikzpicture}
    \caption{A schematic representation of the preference graphs of the different agents obtained in Theorem~\ref{fig:3sat construction}.}
    \label{fig:3sat construction}
\end{figure}
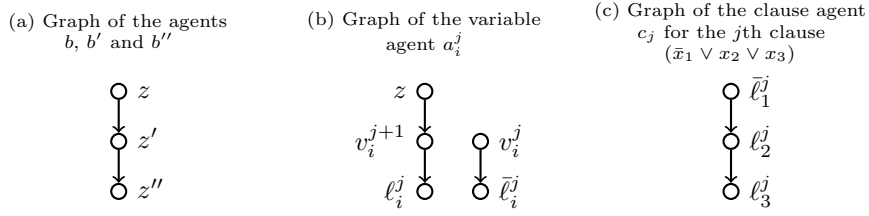
Note that we can obtain $I$ in polynomial time.

We claim that $\phi$ is satisfiable if and only if $I$ is satisfiable, that is, if there exists an allocation of the items in $I$ such
that the dissatisfaction of each agent is at most $2$.

First, assume that $\phi$ is a satisfiable formula and denote by $\beta$ a satisfying assignment.
We construct an allocation $\pi$ for $I$ as follows.
We set $\pi(b) = \{z\}$, $\pi(b') = \{z'\}$ and $\pi(b'') = \{z''\}$.
Then for every variable $x_i$ that is assigned to \textsf{true} in $\beta$ we allocate the items $v_i^{j+1}$ and $\bar\ell_i^j$ to agent $a_i^j$ in $\pi$.
Conversely, for every variable $x_i$ that is assigned to \textsf{false} in $\beta$ we allocate items $v_i^j$ and $\ell_i^j$ to agent $a_i^j$.
Note that $\pi(a_i^j) \cap \pi(a_i^{j'}) = \emptyset$ if $j \neq j'$.
In other words, the agents $a_i^j$ and $a_i^{j'}$ receive different items in $\pi$.
Furthermore, all the agents $a_i^j$ have dissatisfaction $2$.
Note that every clause has at least one literal satisfying it.
In particular, for any $j \in \{1,\dots,m\}$, the $j$th clause contains a literal $u$ that satisfies it.
If $u=x_i$ for some $i \in \{1,\dots,n\}$, i.e., $u$ is a positive literal, then the variable $x_i$ is set to \textsf{true} in $\beta$, and thus $\pi(a_i^j) = \{v_i^{j+1},\bar\ell_i^j\}$.
This implies that the item $\ell_i^j$ is not allocated to any agent, and thus we can allocate $\ell_i^j$ to agent $c_j$.
Hence, agent $c_j$ has dissatisfaction at most $2$.
Similarly, if $u = \bar{x}_i$ for some $i \in \{1,\dots,n\}$, then the variable $x_i$ is set to \textsf{false} in $\beta$, and thus $\pi(a_i^j) = \{v_i^{j},\ell_i^j\}$.
So we can allocate $\bar\ell_i^j$ to agent $c_j$, who therefore has dissatisfaction at most $2$.
Thus, every clause agent $c_j$ can be assigned an item that has not been assigned yet, and we get that the dissatisfaction of $c_j$ is at most $2$.
Hence, $\pi$ is an allocation for $I$ in which all the agents have dissatisfaction at most $2$.

Now, assume that $I$ has an allocation $\pi$ for which all agents have dissatisfaction at most $2$.
We assume that $\pi$ does not allocate items to some agent if that agent has been allocated another preferred item (since removing the non-preferred item does not increase the dissatisfaction of the agent); we say that $\pi$ is \emph{economical}.
As the preference graph of every agent is a collection of at most $2$ paths, this implies that every agent is allocated at most $2$ items, and in particular at most $1$ item from each path in her preference graph.
Observe that, as in the previous case, each of the agents $b$, $b'$ and $b''$ have to be allocated exactly one of the items $z$, $z'$, or $z''$, as otherwise one of them would have dissatisfaction more than $2$.
Without loss of generality, we assume that $\pi(b) = \{z\}$, $\pi(b') = \{z'\}$, and $\pi(b'') = \{z''\}$.
 It follows that every variable agent $a_i^j$ has to be assigned at least $2$ items in her preference graph, as otherwise her dissatisfaction would be greater than $2$.
More specifically, the possible pairs of items that can be allocated to $a_i^j$ are $\{v_i^{j+1},v_i^j\}$, $\{v_i^{j+1},\bar\ell_i^j\}$ and $\{v_i^j,\ell_i^j\}$.
We claim that there is no variable agent $a_i^j$ such that $\pi(a_i^j) = \{v_i^{j+1},v_i^j\}$.
To get a contradiction, suppose that $\pi(a_i^j) = \{v_i^{j+1},v_i^j\}$.
Then $\pi(a_i^{j+1}) = \{v_i^{j+2},\bar\ell_i^{j+1}\}$, the only available pair for agent $a_i^{j+1}$ that has dissatisfaction at most $2$.
This in turn implies that $\pi(a_i^{j+2}) = \{v_i^{j+3},\bar\ell_i^{j+2}\}$, and more generally that all for all $j' \neq j$, $\pi(a_i^{j'}) = \{v_i^{j'+1}, \bar\ell_i^{j'}\}$.
However, agent $a_i^{j-1}$ cannot be allocated the items $v_i^{j}$ and $\bar\ell_i^{j-1}$ (with $j-1 = m$ if $j = 1$) since item $v_i^j$ is assigned to $a_i^j$.
Hence, for every variable agent $a_i^j$, the only possible pairs of items that can be allocated to $a_i^j$ are $\{v_i^{j+1},\bar\ell_i^j\}$ and $\{v_i^j,\ell_i^j\}$.

It is easy to notice that if $\pi(a_i^j) = \{v_i^{j+1},\bar\ell_i^j\}$ for some fixed $i \in \{1,\dots,n\}$ and $j \in \{1,\dots,m\}$, then in fact $\pi(a_i^{j'}) = \{v_i^{j'+1},\bar\ell_i^{j'}\}$ for all $j' \in \{1,\dots,m\}$, otherwise some of these agents will have dissatisfaction more than $2$.
We explain how to construct a satisfying assignment $\beta$ for $\phi$ using $\pi$.
For all $i\in \{1,\ldots, n\}$, if there exists a variable agent $a_i^j$ such that $\pi(a_i^j) = \{v_i^{j+1}, \bar\ell_i^j\}$, then we set $x_i$ to \textsf{true} in $\beta$.
Otherwise, we set $x_i$ to \textsf{false} in $\beta$.
It remains to show that $\beta$ is a satisfying assignment for $\phi$.
Fix $j \in \{1,\dots,m\}$ and consider the $j$th clause in $\phi$.
Recall that the preference graph of the clause agent $c_j$ is a path on three items, each of which representing the literals of the $j$th clause.
Since the agent $c_j$ has dissatisfaction at most $2$ and $\pi$ is economical, exactly one of those three items is assigned to $c_j$ in $\pi$.
Let $\tilde{\ell}_i^j$ be that item.
We have two cases to consider.
If $\tilde{\ell}_i^j = \ell_i^j$ (which corresponds to a positive literal), then $\pi(a_i^j) = \{v_i^{j+1}, \bar\ell_i^j\}$, and thus $x_i$ is set to \textsf{true} in $\beta$, which implies that the clause is satisfied.
Otherwise, if $\tilde{\ell}_i^j = \bar\ell_i^j$ (which corresponds to a positive literal), then $\pi(a_i^j) = \{v_i^{j}, \ell_i^j\}$, and thus $x_i$ is set to \textsf{false} in $\beta$, which implies that the clause is satisfied.
Since this holds for all $j \in \{1,\dots,m\}$, every clause in $\phi$ is satisfied by the assignment $\beta$.\qed
\end{proof}

\section{Parameterized algorithms}
\label{sec:junction}

Recall that in Section~\ref{sec:hardness} it was shown that both {\MINMAX} and
{\MINSUM} are computationally hard, even on special variants of out-trees. 
In the following we  show that  {\MINSUM} becomes polynomially solvable 
\newAD{if the arborization of the graphs is restricted,} namely by having only a constant number of \textit{junction vertices} (vertices with in- or out-degree greater than $1$). The corresponding Theorem~\ref{th:constjunc} also implies a polynomial algorithm for the case where all $G_i$ are directed matchings, but we already described a simpler approach in Theorem~\ref{th:minSUM-matching}.
It also gives polynomial algorithms for the case where $G_i$
are paths, which was solved by a straightforward method in Corollary~\ref{th:sum-paths}.

From Theorem~\ref{th:hard-A-matching} we know that {\MINMAX} remains NP-hard even for the special case where all $G_i$ are directed matchings, and thus do not contain any \newwSL{junction vertices} at all.
However, for a different setting without \newwSL{junction vertices} where all $G_i$ are paths, {\MINMAX} is  solvable in polynomial time (see Theorem~\ref{th:A-paths}). 

\smallskip
Let us now turn to {\MINSUM} and the above-mentioned restriction of the \newMM{preference graphs}. 
Formally, we denote by $J_i \subseteq V(G_i)$ for each $i \in K$ the set of junction vertices in $G_i$, i.e., vertices in $G_i$ with in- or out-degree greater than $1$, and by $ \gamma = \sum_{i \in K} |J_i|$, the total number of \newMM{junction vertices} (counted with multiplicities).
Also, we call a vertex with in-degree $0$ and out-degree $1$ a \textit{simple source} and a vertex with in-degree $1$ and out-degree $0$ a \textit{simple sink}.
 Note that $\gamma$ constant implies that all $G_i$, except constantly many, consist only of collections of paths.
For background on fixed-parameter tractability, we refer to~\cite{MR3380745}.

\begin{theorem}\label{th:constjunc}
{\MINSUM} is fixed-parameter tractable with respect to $\gamma$.
\end{theorem}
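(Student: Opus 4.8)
The plan is to reduce \MINSUM{} to a polynomial number --- namely $2^{O(\gamma)}$ --- of instances of a maximum-profit integral flow (equivalently, minimum-cost flow) problem on a network of polynomial size, and to return the best of the values obtained. As a preliminary reformulation, recall that minimizing the total dissatisfaction is the same as maximizing the total satisfaction $\sum_{i\in K}s_\pi(i)$, where $s_\pi(i)$ is the number of vertices of $G_i$ dominated by $\pi(i)$; note also that for an agent $i$ the dominated set of an allocated antichain is a down-set of $G_i$, so it suffices to decide, for each agent, one antichain of items to allocate. The starting point is that a preference graph $G_i$ with $J_i=\emptyset$ is a disjoint union of directed paths, and on a path $v_1\to\cdots\to v_m$ an agent that receives $v_j$ (and nothing above it) gains exactly $m-j+1$; hence the part of the instance coming from junction-free preference graphs is an assignment-type problem, which is the special maximum-profit flow already exploited in Corollary~\ref{th:sum-paths}. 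Since $\sum_{i}|J_i|=\gamma$, all but at most $\gamma$ of the graphs $G_i$ are of this simple kind, so the difficulty is concentrated in at most $\gamma$ ``complex'' agents.

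For these, I would branch over all choices of, for every pair $(i,v)$ with $v\in J_i$, whether item $v$ is allocated to agent $i$ (and, if one prefers a cleaner residual instance, also over a bounded amount of extra bookkeeping per junction, e.g.\ whether $v$ is dominated by $i$); there are $2^{O(\gamma)}$ such guesses, and guesses handing one item to two agents are discarded. Fix a guess and let $S_i\subseteq J_i$ be the set of junction vertices allocated to agent $i$: the contribution of the vertices dominated by $S_i$ in $G_i$ becomes a constant, those vertices stop being of any value to $i$ while the vertices of $S_i$ leave the pool of available items, and no junction vertex is any longer allocatable to $i$. With the junction ambiguities resolved, each complex agent's remaining preference structure is, up to already-counted vertices, essentially a collection of directed path segments, and the satisfaction it can still collect equals the number of its not-yet-dominated vertices that acquire an allocated ancestor. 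This I would encode, agent by agent, by a gadget built over (the relevant part of) $G_i$: item nodes, each of capacity one so that every item is allocated at most once, feed a per-agent collector node whose flow is forwarded to the sink with profits mirroring ``number of newly dominated vertices''; at each of the $O(\gamma)$ surviving junction vertices one places a small constant-size sub-gadget, in particular a capacity-one ``bonus'' arc, so that a vertex whose domination can be shared by several allocated items is credited at most once. Placing all these gadgets, together with the path gadgets for the simple agents, over one common layer of item nodes yields a single network whose maximum-profit integral flow equals the best total satisfaction achievable under the current guess; taking the maximum over all $2^{O(\gamma)}$ guesses solves \MINSUM{}. As the network has size $n^{O(1)}$ and minimum-cost flow is polynomial, the total running time is $2^{O(\gamma)}\,n^{O(1)}$, which is fixed-parameter tractable in $\gamma$.

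The main obstacle is the correctness of the flow encoding of the satisfaction objective, not the branching. On a path the dominated set of an allocated item is a suffix, and suffixes along a path are nested --- precisely the structure that makes assignment/flow formulations exact; junction vertices are where this breaks, both for large out-degree (allocating an out-junction subsumes allocating anything in its out-subtrees, which is why the branching removes such centres together with their descendants) and for large in-degree (the first allocated in-neighbour of an in-junction already dominates the centre, which is what the bonus arcs must record). One has to verify that every integral flow corresponds to a genuine allocation and conversely, and that the profit of a flow equals the satisfaction of its allocation exactly --- never over- or under-counting a dominated vertex at a junction --- using that only $O(\gamma)$ junction vertices, and hence only $O(\gamma)$ of these delicate sub-gadgets, occur in total, while everything else is a plain assignment problem of the type solved in Corollary~\ref{th:sum-paths}.
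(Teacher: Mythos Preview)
Your high-level strategy --- branch on the status of junction vertices, then solve the residual problem by a maximum-profit integral flow --- is exactly the paper's strategy. The gap is that you have not actually carried out the hard step, and your coarser branching makes that step harder, not easier.

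Two concrete issues. First, after your $2$-way branching (``is $v\in J_i$ allocated to $i$ or not?'') the residual preference structure of a complex agent is \emph{not} ``essentially a collection of directed path segments'': every junction $v\in J_i\setminus S_i$ that you decided not to allocate to $i$ is still a junction in $G_i$, with the same high in- or out-degree. Your gestured ``bonus arc'' gadget would have to simultaneously (a) credit $v$ and all of its successors exactly once if any predecessor on any incoming path is allocated, (b) avoid double-crediting a vertex reachable from two different allocated items, and (c) interact correctly with the capacity-one item layer so that no item is used twice. That is precisely the delicate part, and a one-line description does not resolve it. The paper avoids this entirely by a finer $4$-way branching per junction: for each $v\in J_i$ it guesses whether $v$ is allocated to $i$, strictly dominated by $\pi(i)$, strictly dominates some item of $\pi(i)$, or is unrelated to $\pi(i)$. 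With that information fixed, every path segment between junctions (or between a junction and a simple source/sink) has a known role, and the flow network only has to encode ``allocate at most one internal vertex on this path'' together with an occasional lower-bound constraint ``allocate at least one''; no per-junction gadgets are needed.

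Second, even the $4$-way branching is not quite enough: for a path running from a case-$(3)$ junction $v_1$ (some successor of $v_1$ is allocated) into a case-$(2)$ junction $v_2$ (some predecessor of $v_2$ is allocated), one still has to decide whether the witness for $v_1$ and the witness for $v_2$ lie on this particular path or on some other path incident to $v_1$, resp.\ $v_2$. The paper handles this by an additional guess over the $O(\gamma^2)$ junction pairs, bringing the total to $4^{\gamma}\cdot 2^{O(\gamma^2)}$ branches, not the $2^{O(\gamma)}$ you claim. Your proposal, as written, does not account for this coupling and therefore cannot certify correctness of the flow encoding you sketched.
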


\begin{proof}
We introduce an algorithm to solve the maximization problem for the total 
satisfaction, which implies the solution of  {\MINSUM}.
Note that there exists an optimal allocation $\pi$ which fulfills the \emph{minimality condition}, meaning it is minimal with respect to the property that for each agent $i$ no item allocated to agent $i$ is dominated by any other item allocated to $i$. 
Hence, we restrict our search and feasibility test to allocations fulfilling this condition. 

Given an allocation $\pi$ fulfilling the minimality condition, for each vertex $v \in J_i$ exactly one of the following four cases occurs.
\begin{enumerate}
	\item[(1)] $\pi$ allocates $v$ to agent $i$.
	\item[(2)] $\pi$ allocates some item in ${\it pred}_i(v)$ to $i$.
	\item[(3)] $\pi$ allocates some item in ${\it succ}_i(v)$ to $i$.
	\item[(4)] $\pi$ does not allocate any item in ${\it pred}_i(v) \cup {\it succ}_i(v) \cup \{v\}$ to agent $i$.
\end{enumerate}
In our algorithm, we enumerate all $4^{\gamma}$ possible assignments of
cases~(1)--(4) for all vertices in each $J_i$. For any such assignment we denote by 
$J_i^{(1)}, J_i^{(2)}, J_i^{(3)}, J_i^{(4)}$ the vertices in $J_i$ assigned to cases~(1)--(4).
\newSL{Note that if $v \in V$ is a \newwSL{junction vertex} in multiple agent graphs $G_i$ we enumerate all possible 
assignments for each of the agents independently. This is possible in time $4^{\gamma}$, since 
in the definition of $\gamma$ such \newwSL{junction vertices} are counted with multiplicity with respect to the 
graphs $G_i$.}
Then for each such assignment we test whether the assignment is feasible and if so find an optimal allocation of the remaining items subject to the conditions of cases~(1)--(4). 
Among all those allocations we take the one maximizing the total satisfaction.

Since not for all of the $4^{\gamma}$ assignments there exist feasible allocations $\pi$ fulfilling the minimality condition, we show how to test efficiently whether this is the case. 
First we check if some item $v$ is contained in $J_i^{(1)}$ and $J_{i'}^{(1)}$ for $i\neq i'$, which is clearly infeasible. 
In addition, for each agent $i$ we validate that no vertex in $J_{i}^{(1)}$ or $J_{i}^{(2)}$ is a predecessor of any vertex in $J_{i}^{(3)}$ and 
no vertex in $J_{i}^{(4)}$ is predecessor of any vertex in $J_{i}^{(1)} \cup J_{i}^{(3)}$ 
and successor of any vertex in $J_{i}^{(1)} \cup J_{i}^{(2)}$.
Note that this can be easily done by running a breath- or depth-first-search in each graph $G_i$. 
Also note that each graph $G_i$ is a set of internally vertex-disjoint paths connecting the vertices in $J_i$ or paths connecting simple sources and simple sinks with each other or with vertices in $J_i$. 
What remains to be done is finding an optimal allocation of the items corresponding to the vertices on these paths that have not yet been allocated by any case~(1) assignment.

All the items corresponding to vertices in the subgraph of $G_i$ with a   
vertex in $J_i^{(4)}$ as their predecessor or successor cannot be allocated to agent $i$ and can 
also never contribute to the satisfaction of agent $i$.
Since we are restricted to allocations $\pi$ fulfilling the minimality condition, it holds that no item on a path entering a vertex in $J_{i}^{(1)}$ or $J_{i}^{(3)}$ can be allocated to agent $i$ and all these items cannot contribute to agent $i$'s satisfaction. 
Also, for all paths exiting vertices in $J_{i}^{(1)}$ or $J_i^{(2)}$, we already know that some predecessor of all these vertices is allocated.  
Hence, no item on these paths is allocated to agent $i$
by the minimality condition.

It remains to decide on the set of items on internal 
vertices of all paths connecting simple sources, simple sinks and $J_i^{(2)}$ and $J_i^{(3)}$ vertices.
For some of these paths we will have the condition that in a given subset of paths at least one vertex, i.e., the corresponding item, has to be allocated to agent $i$.
This aspect will be represented by defining for every agent $i$ a set $\mathcal{P}_{i}^{E}$ which is a subset of the power set of all these connecting paths.
Each of its elements is a collection of paths where at least one item of the internal vertices has to be allocated to agent $i$.
All other connecting paths in $G_i$ are optional and will be collected in the set of paths ${P}_i^{O}$.

The most involved setting concerns internal vertices on paths connecting vertices $J_i^{(3)}$ with vertices $J_i^{(2)}$.
Here, it is not directly clear on which of them it is mandatory to assign items to agent $i$ and on which it is just optional. 
We resolve this problem by guessing for each pair of vertices $v_1 \in J_i^{(3)}$, $v_2 \in J_i^{(2)}$ with at least one directed path traversing no \newwSL{junction vertex} between them whether at least one of the items on such a path must be assigned to agent $i$. 
If the guess chooses this option, we add the set of paths connecting $v_1$ to $v_2$ to $\mathcal{P}_{i}^{E}$,
otherwise the paths are added to ${P}_i^{O}$.
In the former case, also all other paths going from $v_1$ to simple sinks are optional (and added to ${P}_i^{O}$) as well as all paths coming into $v_2$ directly from simple sources.
If the guessing chooses not to make an item from any path between $v_1 \in J_i^{(3)}$ to an arbitrary other $v_2 \in J_i^{(2)}$ compulsory,
we can proceed with $v_1$ as with all other vertices in $J_i^{(3)}$: It follows that among all paths going from such a vertex to simple sinks at least one item must be allocated and thus the set of these paths is an element of $\mathcal{P}_{i}^{E}$. 
If no such paths exist the guesses are infeasible.
By the same reasoning, also for any $v_2 \in J_i^{(2)}$, if the guessing chooses not to make an item from any path between an arbitrary $v_1 \in J_i^{(3)}$ and $v_2$ compulsory, the set of all paths entering $v_2$ from some simple source is an element of $\mathcal{P}_{i}^{E}$.
Furthermore, there are paths connecting a simple source directly with a simple sink. 
These will be included in ${P}_i^{O}$.

Clearly, for every combination of guesses, we have to check whether the conditions $(2)$ and $(3)$ are  fulfilled for all vertices in $J_i^{(2)}$ and $J_i^{(3)}$. 
If this is the case, the decisions on the internal vertices will then be determined by the solution of the following max-profit flow problem.

\smallskip
\noindent\textsc{Max-Profit Flow}: 

\noindent \textbf{Input:} A  graph $H=(V(H), A(H))$, source $s \in V(H)$, sink $t \in V(H)$, lower and upper capacities $\ell,u \colon A(H) \rightarrow \mathbb{Z}_+$, a profit function $p \colon A(H) \rightarrow \mathbb{Z}_+$.

\noindent \textbf{Task:} Find an $s$,$t$-flow $f \colon A(H) \rightarrow \mathbb{Z}_+$ 
respecting the capacity bounds and maximizing the total profit $\sum_{e \in A(H)} p(e) f(e)$.

\smallskip
The max-profit flow problem can be reduced to the minimum-cost circulation problem and is thus solvable in strongly polynomial time (see, e.g.,~\cite[ch.~12]{MR1956924}).

In the next paragraphs we define an instance of the max-profit flow problem on a graph $H = (V(H),A(H))$.
The vertex set $V(H)$ consists of a source $s$, a sink $t$ and vertices $v_j$ for each item $j \in V$ that has not yet been allocated via an assignment of a vertex to $J_i^{(1)}$ for any $i \in K$.
Furthermore, $V(H)$ contains disjoint copies of all the internal vertices of paths in $P_{i}^{O}$ and paths in elements of $\mathcal{P}_{i}^{E}$ for all agents $i \in K$.  
In addition for each set of paths ${P}$ contained in $\mathcal{P}_{i}^{E}$ we add an auxiliary vertex $w_{{P}}$ to $V(H)$.
For all item vertices $v_j$ we add the arcs $(s,v_j)$ to $A(H)$.
The goal of the flow model is to represent the allocation of an item $j$ to some agent $i$ by one unit of flow going from $s$ to $v_j$, then to a vertex representing the item $j$ in $G_i$, 
further down to the end vertex of the respective path, and finally, possibly via an intermediate auxiliary vertex $w_{{P}}$, into $t$.
As a first step we set $\ell((s,v_j)) = 0$ and $u((s,v_j)) = 1$ to allow the allocation of each item $j$ at most once and $p((s,v_j)) = 0$.
Let $u_i \in V(H)$ be any vertex introduced as a copy of an internal vertex on any path in $G_i$ and let $j \in V$ be its corresponding item. 
Then we add the arc $(v_j,u_i)$ to $A(H)$ and set $\ell((v_j, u_i)) = 0$, $u((v_j, u_i)) = 1$ and $p((v_j,u_i)) = 1 + p_{u_i}$, where $p_{u_i}$ is the number of internal vertices in the path containing $u$ that are successors of $u_i$. 
Hence, sending flow along the arc $(v_j,u_i)$ corresponds to allocating $j$ to agent $i$.
Moreover, $p((v_j, u_i))$ is exactly the extra satisfaction gained by this allocation.

For all paths in ${P}_{i}^{O}$ and all paths in elements of $\mathcal{P}_{i}^{E}$, we consider all arcs $(v', v'')$ in $G_i$ connecting two internal vertices $v'$, $v''$ of such a path.
For each such arc $(v', v'')$, we add an arc $a$ also in $A(H)$ as a connection between the two copies in $V(H)$ implied by the internal path vertices $v'$ and $v''$ in $V(G_i)$.
The parameters of each such arc $a$ are chosen as $\ell(a) = 0$, $u(a) = 1$ and $p(a) = 0$. 
In this way we ensure that at most one item in each path is allocated. 

For each set of paths ${P} \in \mathcal{P}_i^{E}$ 
we add in $A(H)$ an arc $a' = (v,w_{{P}})$ for the endvertex $v$ of every path in ${P}$.
We set $\ell(a') = 0$, $u(a') = 1$ and $p(a') = 0$. 
To enforce that at least one of the items corresponding to internal vertices of the paths in ${P}$ is allocated, we add arcs $a''=(w_{\mathcal{P}},t)$ to $A(H)$ and set $\ell(a'') = 1$, $u(a'') = \infty$ and $p(a'') = 0$. 
Finally, for all endvertices $v$ of paths in ${P}_{i}^{O}$ we add $(v,t)$ to $A(H)$ and set $\ell((v,t)) =0$ $u((v,t)) = \infty$ and $p((v,t)) = 0$.

It can be verified that any feasible flow is in one-to-one correspondence to an allocation of the internal vertices of the remaining paths fulfilling all the conditions stated above. 
The profit of such a flow corresponds exactly to the additional 
satisfaction gained by such an allocation. 
If no feasible flow exists, then some of the arcs with a lower capacity bound equal to $1$ could not be saturated with flow.
This implies that no feasible allocation exists which fulfills the required assignment for some internal vertices on certain sets of paths, although these were required by the current choice of the solution configuration.

We can compute such a max-profit flow in polynomial time. 
Adding the profit of this flow to the number of items assigned to any $J_i^{(1)}$ and all the satisfaction achieved through paths exiting vertices in $J_i^{(1)}$ or $J_i^{(3)}$ for all $i \in K$
gives the maximum possible total satisfaction for the current assignment. 
Taking the maximum over all $4^{\gamma}$ possible assignments, and considering for the remaining $\mathcal{O}(2^{(\gamma)^2})$ sets of paths all guesses whether an assignment is mandatory or optional, concludes the proof.
\qed
\end{proof}

\medskip
Following the hardness results in the preceding sections, one cannot expect to achieve efficient algorithms by just restricting graph parameters like the treewidth of the graphs $G_i$.
However, when we restrict the graph $G = (V,A)$ consisting of vertex set $V$ and arc set $A$, the union of the arc sets $A_i$ of all graphs $G_i$ for $i \in K$, there is still hope for positive results, as we explain next.
Classically, the notion of treewidth is defined for undirected graphs~\cite{robertson1984graph}. Yet, generalizations to directed graphs with vertex and edge labels have been defined for instance in~\cite{arnborg1991easy}. We make use of the representation of vertex- and arc-labeled digraphs using the following relational structure.

\begin{definition}[\cite{arnborg1991easy}]
    A \emph{labelled directed graph structure} is a tuple $(V,A,V_1, \dots,\allowbreak V_p, A_1, \dots, A_q)$, where
    \begin{itemize}
        \item $G = (V,A)$ is a graph,
        \item $V_1, \dots, V_p \subseteq V$ are special sets of vertices (labelled vertices),
        \item $A_1, \dots, A_q \subseteq A$ are special sets of arcs (labelled arcs).
    \end{itemize}
\end{definition}

Based on this definition we can now formally define the treewidth for such labelled graphs.

\begin{definition}[\cite{arnborg1991easy}]
    A \emph{tree decomposition} of a labelled directed graph structure $(V,A,V_1, \dots, V_p, A_1, \dots, A_q)$ is a pair $(T,\mathcal{S})$ where $T$ is a tree and $\mathcal{S}$ a family of sets indexed by the vertices of $T$, such that
    \begin{itemize}
        \item $\bigcup_{X_v \in \mathcal{S}} X_v = V \cup A$,
        \item for all $a = (u,w) \in A$ there exists a unique $X_v \in \mathcal{S}$ such that $a,u,w \in X_v$,
        \item for all $x \in V \cup A$ the subgraph of $T$ induced by $\{ v \colon x \in X_v \}$ is connected. 
    \end{itemize}
    The \textit{width} of such a tree decomposition is $\max_{X_v \in \mathcal{S}} |V \cap X_v| - 1$. The \textit{treewidth} of the labelled directed graph structure is the minimum width of a tree decomposition.
\end{definition}

Note that using this definition, the treewidth of a labelled directed graph structure $(V,A,V_1, \dots, V_p, A_1, \dots, A_q)$ is equal to the treewidth of the underlying undirected graph of $G = (V,A)$ using the classical definition of treewidth for undirected graphs. 
Also, given a labelled directed graph structure, a tree decomposition of width equal to its treewidth can be computed using a linear FPT algorithm with respect to the treewidth~\cite{MR1417901}.

In the following we analyze the complexity of {\MINSUM} and {\MINMAX} parameterized by the treewidth of the underlying undirected graph of $G = (V,A)$ and the number of agents $k$. 
We are able to exploit the treewidth since we are able to formulate feasible assignments of items and the corresponding sets of non-dominated items as a formula in the language of monadic second-order (MSO). This allows the application of variants of Courcelle's Theorem to obtain fixed parameter tractability results. For the definition of the MSO language and the related parameterized algorithms, which are heavily used in the following results, we refer the reader to~\cite{arnborg1991easy}.

\begin{sloppypar}
\begin{lemma}\label{lemma:msoformula}
Let $K = \{1,\dots,k\}$ be the set of agents with preference graphs $G_i = (V_i, A_i)$, and consider the labelled directed graph structure given by $(V,A,V_1, \dots, V_k, A_1, \dots, A_k)$, where $V = \bigcup_{i \in K} V_i$ and $A = \bigcup_{i \in K} A_i$.
For sets $\pi_1, \dots, \pi_{k}, U_1, \dots, U_{k}\subseteq V$, let  $\psi(\pi_1, \dots, \pi_{k}, U_1, \dots, U_{k})$ be the property that is defined to be true if and only if the mapping $\pi:K\to 2^V$ defined by $\pi(i) = \pi_i$ for all $i\in K$ is an allocation of items in $V$ to the $k$ agents and $U_i$ is the set of all items in $V_i$ that are not dominated by $\pi_i$ for each $i \in K$.
Then there exists an MSO formula expressing~$\psi$.
\end{lemma}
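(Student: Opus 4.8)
The plan is to write $\psi$ down explicitly as a Boolean combination of finitely many MSO sub-formulas, one block per agent, using the fact that $k$ is a fixed constant so that every quantification over agents becomes a finite conjunction. Throughout I work in the relational signature of the labelled directed graph structure, in which the universe consists of vertices and arcs, the label sets $V_i$ (unary on vertices) and $A_i$ (unary on arcs) are primitive, and there are incidence relations $\mathrm{tail}(a,w)$, $\mathrm{head}(a,w)$. In particular, membership of a vertex $v$ in a free set variable $\pi_i$ or $U_i$, and in a label set $V_i$, is atomic, and the binary relation ``there is an $A_i$-labelled arc from $w$ to $w'$'', which I abbreviate $\mathrm{arc}_i(w,w') := \exists a\,(A_i(a) \wedge \mathrm{tail}(a,w) \wedge \mathrm{head}(a,w'))$, is first-order definable.

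\textbf{First-order parts.} The condition that $\pi=(\pi_1,\dots,\pi_k)$ is an allocation is the assertion that the $\pi_i$ are pairwise disjoint; since items outside $V_i$ are irrelevant one may in addition require $\pi_i\subseteq V_i$. Both are first-order: $\mathrm{alloc} := \bigwedge_{1\le i<j\le k}\forall v\,\neg(\pi_i(v)\wedge\pi_j(v))\ \wedge\ \bigwedge_{i\in K}\forall v\,(\pi_i(v)\to V_i(v))$. This is a finite conjunction because $|K|=k$ is fixed.

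\textbf{The one non-trivial ingredient: reachability.} The only place where genuine MSO power is needed is to express that $u$ is dominated by some allocated item, since domination refers to the transitive closure of $A_i$, which is not first-order expressible. I would use the classical encoding of reachability by quantification over closed vertex sets:
\[
\mathrm{reach}_i(v,u) := \forall S\,\Big[\big(S(v)\wedge\forall w\,\forall w'\,(S(w)\wedge\mathrm{arc}_i(w,w')\to S(w'))\big)\to S(u)\Big],
\]
i.e.\ $u$ lies in every $\mathrm{arc}_i$-closed set of vertices containing $v$; this holds iff there is a directed $v$--$u$ walk using only arcs of $A_i$, which (as $G_i$ is acyclic) is the same as a directed path in $G_i$. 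Hence ``$u$ is dominated by $v$ in $G_i$'' is captured by $\mathrm{dom}_i(v,u) := (v=u)\vee\mathrm{reach}_i(v,u)$, and ``$u$ is dominated by the set $\pi_i$'' by $\exists v\,(\pi_i(v)\wedge\mathrm{dom}_i(v,u))$.

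\textbf{Assembly.} For each agent $i$, the requirement that $U_i$ is exactly the set of items of $V_i$ not dominated by $\pi_i$ becomes $\mathrm{nondom}_i := \forall u\,\big(U_i(u)\leftrightarrow(V_i(u)\wedge\neg\exists v\,(\pi_i(v)\wedge\mathrm{dom}_i(v,u)))\big)$, and I set $\psi := \mathrm{alloc}\wedge\bigwedge_{i\in K}\mathrm{nondom}_i$. Every sub-formula is in MSO, the conjunctions over $K$ are finite, and the free variables are precisely the monadic second-order variables $\pi_1,\dots,\pi_k,U_1,\dots,U_k$; by construction $\psi$ holds in the structure exactly when $\pi$ is an allocation and each $U_i$ is the set of $V_i$-items not dominated by $\pi_i$, proving the lemma. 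The ``hard part'' is really just recognising that domination is transitive-closure reachability and recalling its standard MSO encoding; everything else is finite bookkeeping enabled by $k$ being a constant, together with spelling out $\mathrm{arc}_i$ in terms of the incidence relations and arc labels of the labelled directed graph structure.
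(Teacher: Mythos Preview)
Your proof is correct and follows essentially the same approach as the paper: both express the allocation property as a finite first-order conjunction over agent pairs, encode directed reachability in $G_i$ via a single second-order set quantifier (you use the ``closure'' formulation, the paper uses the logically equivalent ``cut'' formulation), and characterise $U_i$ by a biconditional. The only cosmetic differences are that you spell out the incidence relations $\mathrm{tail}/\mathrm{head}$ more explicitly and fold the constraint $U_i\subseteq V_i$ directly into your biconditional rather than stating it as a separate conjunct.
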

\end{sloppypar}

\begin{proof}
The property $\psi$ can be expressed in the MSO language as follows:
\begin{align*}
    \psi(\pi_1, & \dots , \pi_{k}, U_1, \dots, U_{k}) = \\
    & \bigwedge_{\substack{i, j \in K:\\ i \neq j}} (\forall v \in V \colon v \in \pi_i \rightarrow (v \in V_i \wedge v \notin \pi_j))  \\
    & \quad \wedge\\
    & \;\,\bigwedge_{i \in K} (\forall u \in V: u\in U_i \rightarrow u\in V_i)\\
    & \quad \wedge\\
    & \;\,\bigwedge_{i \in K} (\forall u \in V \colon  u \in U_i \leftrightarrow (\forall s \in \pi_i \colon \neg \mathbf{conn}_i(s,u))),
\end{align*}
where, $\mathbf{conn}_i(s,u)$ is the property that $s$ and $u$ are connected by 
a directed path of arcs in $G_i$, which can be expressed in the MSO language in the following way:
\[ \mathbf{conn}_i(s,u) = (\forall Y \subseteq V_i \colon (s \in Y \wedge u \notin Y) \rightarrow (\exists (v,w) \in A_i \colon v \in Y \wedge w \notin Y)). \]\qed
\end{proof}

\begin{sloppypar}
Using Lemma~\ref{lemma:msoformula} we are now ready to formulate the {\MINSUM} problem  as a linear EMSO optimization problem and reduce {\MINMAX} to a polynomial number of EMSO decision problems, implying a fixed parameter tractability result with respect to treewidth.
For the formal definitions of linear EMSO optimization problem and EMSO decision problem we again refer the reader to~\cite{arnborg1991easy}.
\end{sloppypar}

For simplicity, we assume in the next two theorems that each item $v\in V$ is desired by at least one agent $i\in K$.
This assumption is without loss of generality, since otherwise we can simply ignore the items in $V\setminus (\bigcup_{i\in K}V_i)$.
 
\begin{theorem}\label{th:tw-minsum}
The optimization variant of {\MINSUM} is fixed-parameter tractable with respect to $k+t$, where $k$ is the number of agents, and $t$ is the treewidth of the underlying undirected graph of $G = (V,A)$, where $A$ is the union of the arc sets $A_i$ of all graphs $G_i$ for $i \in K$.
\end{theorem}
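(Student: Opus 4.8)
The plan is to express {\MINSUM} as a linear extended monadic second-order (EMSO) optimization problem over the labelled directed graph structure $\mathcal{G}=(V,A,V_1,\dots,V_k,A_1,\dots,A_k)$ and then invoke the optimization version of Courcelle's theorem for such structures from~\cite{arnborg1991easy}. By Lemma~\ref{lemma:msoformula} there is an MSO formula $\psi(\pi_1,\dots,\pi_k,U_1,\dots,U_k)$ with $2k$ free set variables that holds exactly when the mapping $i\mapsto\pi_i$ is an allocation and each $U_i$ is the set of items of $V_i$ not dominated by $\pi_i$; in particular, whenever $\psi$ holds we have $\delta_\pi(i)=|U_i|$. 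Hence minimizing the total dissatisfaction $\sum_{i\in K}\delta_\pi(i)$ over all allocations $\pi$ is the same as minimizing the linear objective $\sum_{i=1}^{k}|U_i|$ (a linear combination of the cardinalities of the free set variables $\pi_1,\dots,\pi_k,U_1,\dots,U_k$, with coefficients $0$ on the $\pi_i$ and $1$ on the $U_i$) subject to $\psi$ --- precisely the format of a linear EMSO optimization problem.

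First I would compute a tree decomposition of $\mathcal{G}$ of width equal to its treewidth; since, as noted above, this treewidth coincides with the treewidth $t$ of the underlying undirected graph of $G=(V,A)$, such a decomposition can be found by the linear-time fixed-parameter algorithm of~\cite{MR1417901}. Then I would feed $\psi$ and the objective to the meta-theorem of~\cite{arnborg1991easy}, which solves every linear EMSO optimization problem on an $n$-vertex labelled structure in time $f(\|\psi\|,t)\cdot n$ once a width-$t$ tree decomposition is provided, where $\|\psi\|$ is the length of the formula. The crucial observation is that the formula $\psi$ built in Lemma~\ref{lemma:msoformula} has length bounded by a computable function of $k$ alone (the conjunctions simply range over $i,j\in K$), so $f(\|\psi\|,t)$ is bounded by some computable $g(k,t)$ and the total running time is $g(k,t)\cdot n^{\mathcal{O}(1)}$, i.e.\ fixed-parameter tractable in $k+t$.

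For correctness of the reduction it suffices to note that, for a fixed allocation $\pi$, the sets $U_1,\dots,U_k$ with $\psi(\pi_1,\dots,\pi_k,U_1,\dots,U_k)$ true are uniquely determined (each $U_i$ must be the set of $v\in V_i$ not dominated by any $s\in\pi_i$); thus the feasible tuples of the EMSO problem are in bijection with allocations, and the objective value of a tuple equals the total dissatisfaction of the corresponding allocation. An optimal tuple therefore gives an allocation of minimum total dissatisfaction, which can be read off directly, so we in fact solve the optimization variant.

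I do not expect a real obstacle: the substantive work is already done in Lemma~\ref{lemma:msoformula}. The two points to handle carefully are (i) checking that the objective is genuinely linear in the cardinalities of the \emph{free} set variables of the MSO sentence, so that the linear-EMSO-optimization meta-theorem applies verbatim in the directed, vertex- and arc-labelled setting, and (ii) tracking the dependence of $\|\psi\|$ on $k$ so that the $f(\|\psi\|,t)$ factor is absorbed into a function of the combined parameter $k+t$ rather than treated as a constant.
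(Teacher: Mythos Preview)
Your proposal is correct and follows essentially the same approach as the paper: formulate {\MINSUM} as the linear EMSO optimization problem of minimizing $\sum_{i\in K}|U_i|$ subject to the MSO formula $\psi$ from Lemma~\ref{lemma:msoformula}, observe that $\|\psi\|$ depends only on $k$, compute a width-$t$ tree decomposition via~\cite{MR1417901}, and invoke the linear-EMSO meta-theorem of~\cite{arnborg1991easy}. Your write-up is in fact a bit more explicit than the paper's about the bijection between feasible tuples and allocations and about tracking the formula length as a function of $k$.
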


\begin{proof}
We regard an instance $(V_i,A_i)$, $i \in K$, of the optimization variant of {\MINSUM} as a labelled directed graph structure, as in Lemma~\ref{lemma:msoformula}.
Following~\cite{MR1417901}, we can assume that an optimal tree decomposition of the labelled directed graph structure is available.
By \cite[Theorem~5.6]{arnborg1991easy} it thus suffices to show that {\MINSUM} can be formulated as a linear EMSO optimization problem, where the length of the used MSO formula is bounded by a function depending only on the number of agents $k$.

We consider the following optimization problem (\textsc{EMSO-Min-Sum}), which uses the property $\psi$ from Lemma~\ref{lemma:msoformula}.

\begin{align*}
    \min & \sum_{i \in K} |U_i|\\
    \text{s.t. } & \exists \pi_1, \dots, \pi_{k}, U_1, \dots, U_{k} \subseteq V \colon  G \models \psi(\pi_1, \dots, \pi_{k}, U_1, \dots, U_{k}).
\end{align*}

This problem is a linear EMSO optimization problem, since the objective function is linear in the cardinality of the free set variables of $\psi$ and $\psi$ is an MSO formula of length bounded by a function in $k$.
By the fact that for any agent $i \in K$ it holds that 
$U_i$ is the set of vertices not dominated by any item in $\pi_i$ and the sets $\pi_1, \dots \pi_k$ correspond to an allocation of items in $V$ to the $k$ agents, {\MINSUM} is equivalent to (\textsc{EMSO-Min-Sum}).
\qed
\end{proof}

\begin{theorem}\label{th:tw-minmax}
{\MINMAX} is fixed parameter tractable with respect to $d+k+t$, 
where $d$ is the dissatisfaction threshold, $k$ is the number of agents, and $t$ is the treewidth of the underlying undirected graph of $G = (V,A)$, where $A$ is the union of the arc sets $A_i$ of all graphs $G_i$ for $i \in K$.
\end{theorem}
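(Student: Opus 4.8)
The plan is to reduce {\MINMAX} with threshold $d$ to an EMSO decision problem whose defining formula has length bounded by a computable function of $k+d$, and then to invoke the algorithmic metatheorem of~\cite{arnborg1991easy} (together with~\cite{MR1417901} to produce an optimal tree decomposition) exactly as in the proof of Theorem~\ref{th:tw-minsum}. As there, we view the instance $(V_i,A_i)$, $i\in K$, as the labelled directed graph structure $(V,A,V_1,\dots,V_k,A_1,\dots,A_k)$ with $V=\bigcup_{i\in K}V_i$ and $A=\bigcup_{i\in K}A_i$, and we start from the formula $\psi(\pi_1,\dots,\pi_k,U_1,\dots,U_k)$ of Lemma~\ref{lemma:msoformula}, which holds precisely when $\pi(i)=\pi_i$ is an allocation and $U_i$ is exactly the set of items of $V_i$ left undominated by $\pi_i$, i.e.\ $|U_i|=\diss_\pi(i)$; its length is bounded in terms of $k$.

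The only new ingredient is to encode the bound $|U_i|\le d$ for each $i\in K$. Since $d$ is a parameter, this is expressible by a first-order subformula whose length depends only on $d$:
\[
\mathbf{card}_{\le d}(U_i)\;=\;\neg\exists u_1,\dots,u_{d+1}\in V\colon\ \Big(\bigwedge_{r=1}^{d+1}u_r\in U_i\Big)\wedge\Big(\bigwedge_{1\le r<r'\le d+1}u_r\ne u_{r'}\Big).
\]
We then form the MSO sentence
\[
\Phi\;=\;\exists\pi_1,\dots,\pi_k,U_1,\dots,U_k\subseteq V\colon\ \psi(\pi_1,\dots,\pi_k,U_1,\dots,U_k)\wedge\bigwedge_{i\in K}\mathbf{card}_{\le d}(U_i),
\]
whose length is bounded by a computable function of $k+d$. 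By Lemma~\ref{lemma:msoformula} and the construction of $\mathbf{card}_{\le d}$, we have $G\models\Phi$ if and only if there is an allocation $\pi$ with $\diss_\pi(i)\le d$ for all $i\in K$, i.e.\ if and only if the instance is a yes-instance of {\MINMAX}. (Equivalently, one may keep $d$ out of the formula and instead run, for $d'=0,1,\dots,d$, the EMSO decision problem asking whether some allocation has $\max_{i\in K}|U_i|\le d'$; this is a parameter-bounded number of EMSO decision problems.)

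Deciding $G\models\Phi$ is an EMSO decision problem (a special case of a linear EMSO optimization problem, with constant objective), so by~\cite[Theorem~5.6]{arnborg1991easy} --- using~\cite{MR1417901} to first compute a tree decomposition of optimal width --- it is solvable in time $f(t,|\Phi|)\cdot n^{\mathcal O(1)}$ for some computable $f$, where $t$ is the treewidth of the underlying undirected graph of $G$. Since $|\Phi|$ is bounded in terms of $k+d$, this yields an algorithm running in time $g(d+k+t)\cdot n^{\mathcal O(1)}$, which is the claimed fixed-parameter tractability. There is no serious obstacle beyond what is already established in Lemma~\ref{lemma:msoformula}; the only point worth stressing --- and the reason $d$ must be listed among the parameters here, unlike in Theorem~\ref{th:tw-minsum} --- is that the min-max objective cannot be captured by the \emph{linear} EMSO optimization machinery and instead forces the explicit cardinality gadget $\mathbf{card}_{\le d}$, whose size grows with $d$.
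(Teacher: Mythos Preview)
Your proof is correct, but the route differs from the paper's. The paper does \emph{not} push the cardinality bound into the MSO formula. Instead it enumerates all $\mathcal{O}(d^k)$ integer dissatisfaction profiles $(d_1,\dots,d_k)$ with each $d_i\le d$, and for each profile solves the EMSO decision problem
\[
\exists \pi_1,\dots,\pi_k,U_1,\dots,U_k\subseteq V\colon\ G\models\psi(\pi_1,\dots,\pi_k,U_1,\dots,U_k)\ \wedge\ \forall i\in K\colon |U_i|=d_i,
\]
invoking \cite[Theorem~5.5]{arnborg1991easy}, where the equalities $|U_i|=d_i$ are \emph{external} cardinality constraints handled by the metatheorem rather than encoded in MSO. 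Thus the MSO formula has length bounded in terms of $k$ alone, and the only dependence on $d$ is the outer $d^k$ loop.

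What each approach buys: your single-sentence encoding via $\mathbf{card}_{\le d}$ is slightly more self-contained (one call to Courcelle's theorem), but the formula length grows with $d$, so the hidden dependence on $d$ may be non-elementary. The paper's approach keeps the formula length independent of $d$; the price is the explicit $d^k$ enumeration, but this pays off immediately: since $d\le n$, the running time $\mathcal{O}(d^k f(t)\,n^{\mathcal O(1)})$ is also an XP algorithm in the parameter $k+t$, which is exactly the content of Remark~\ref{rem:XP}. Your argument, as written, does not yield that corollary.
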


\begin{proof}
To obtain the result, enumerate all the $\mathcal{O}(d^k)$ possible dissatisfaction profiles $(d_1, \dots, d_k)$ for the agents and check whether a solution giving exactly this dissatisfaction profile exists. Using the MSO formula $\psi$ from Lemma~\ref{lemma:msoformula} we do this by solving the following decision problem (\textsc{MSO-Dec}).
\begin{align*}
    \exists \pi_1, \dots, \pi_{k}, U_1, \dots, U_{k} \subseteq V \colon & G \models \psi(\pi_1, \dots, \pi_{k}, U_1, \dots, U_{k})\\
    & \forall i \in K \colon |U_i| = d_i.
\end{align*}
Note that this is an EMSO decision problem since we only added additional equality constraints on the cardinalities of the free set variables of the MSO formula $\psi$.
By \cite[Theorem~5.5]{arnborg1991easy} this implies the claimed result.
\qed
\end{proof}

\begin{remark}\label{rem:XP}
The running time of the parameterized algorithm given in the proof of Theorem~\ref{th:tw-minmax} is $\mathcal{O}(d^k f(t) n^{O(1)})$.
Since $d \leq n$, this is an XP algorithm for {\MINMAX} with respect to the parameter $k+t$.
\end{remark}

Observe that Theorem~\ref{th:A-matching-2} (the case of two agents with directed matchings as preference graphs) follows from Remark~\ref{rem:XP} by taking $k=t=2$. 
It remains open, however, if the above results can be generalized to the clique-width of $G$. 
Also, for {\MINMAX} it is unknown whether the problem is fixed-parameter tractable in $k+t$. We conjecture this not to be the case. Also the development of faster FPT algorithms for both {\MINSUM} and {\MINMAX} remain open problems for future research.

\section{Conclusion}
\label{sec:conc}

We have introduced a new model in which agents' preferences over indivisible items are captured by means of directed acyclic graphs (preference graphs). 
For this setting, we have analyzed the task of allocating items to agents in a way that minimizes either the total or the  maximum dissatisfaction.
The latter is measured by the number of desired items an agent does not receive and for which she does not get a more preferred item. 

Complementing our surprisingly strong hardness results we have presented several positive results, i.e., polynomial-time solvable cases.
We could also show that---from a complexity point of view---the min-max objective is sometimes harder than the min-sum objective.
Referring to the summary of our results in Tables~\ref{fig:overview} and~\ref{fig:overview-tw} (see the introduction), we 
gave a fairly complete characterization of the separation between NP-complete and polynomial cases with respect to the preference graphs' structure.
However, some interesting questions remain open. 
For instance, can we generalize  to more than two agents the positive results for {\MINSUM} for disjoint unions of out-stars (Theorem~\ref{th:B-stars-2}) and {\MINMAX} for directed matchings (Theorem~\ref{th:A-matching-2})?
More generally, which further graph structures admit positive results for our two objectives? 
And which additional parameters allow for fixed-parameter tractability (in particular, for {\MINMAX})?

\subsubsection*{Acknowledgements}

The authors wish to thank Matja\v{z} Krnc for valuable discussions. The work of this paper was done in the framework of two bilateral projects between University of Graz and University of Primorska, financed by the OeAD (SI 22/2018 and SI 31/2020) and the Slovenian Research Agency (BI-AT/18-19-005 and BI-AT/20-21-015).
The authors acknowledge partial support of the Slovenian Research Agency (I0-0035, research programs P1-0404, P1-0285, research projects N1-0102, N1-0160, N1-0210, J1-9110, and a Young Researchers Grant) and by the Field of Excellence ``COLIBRI'' at the University of Graz.

\bibliographystyle{plainnat}
\bibliography{refs}


\end{document}